\newtheorem{prop}{Proposition}
\shorttitle{Scattering by cascades with complex boundary conditions}
\title{Acoustic scattering by cascades with complex boundary conditions: compliance, porosity and impedance}
\author{Peter J. Baddoo
 \and Lorna J. Ayton}
\affiliation{Department of Applied Mathematics and Theoretical Physics, Centre for Mathematical Sciences, University of Cambridge}
\def\fade{0}
\tikzset{
    side by side/.style 2 args={
        line width=2pt,
        #1,
        postaction={
            clip,postaction={draw,#2}
        }
    },
    circle node/.style={
        circle,
        draw,
        fill=white,
        minimum size=1.3cm
    }
}
\tikzset{test/.style={
    postaction={
        decorate,
        decoration={
            markings,
            mark=at position \pgfdecoratedpathlength-0.5pt with {\arrow[blue,line width=#1] {}; },
            mark=between positions 0 and \pgfdecoratedpathlength-0pt step 0.5pt with {
                \pgfmathsetmacro\myval{multiply(divide(
                    \pgfkeysvalueof{/pgf/decoration/mark info/distance from start}, \pgfdecoratedpathlength),100)};
                \pgfsetfillcolor{white!\myval!black};
                \pgfpathcircle{\pgfpointorigin}{#1};
                \pgfusepath{fill};}
}}}}
\def\myaxis#1#2#3#4#5{
	\begin{scope}[shift={#1}, rotate=#2] 
		\draw [Latex-Latex,thick] (0,#3)  node (yaxis) [above] {#5}
		|- (#3,0) node (xaxis) [right] {#4};
	\end{scope}
}
\def\rTriangle#1#2#3{ %
	\begin{scope}[shift={#1}]
		\node [scale=#2, draw, rotate=45, triangle,  anchor=center, fill = myblue] at (0,0) {};
		\node[below = .1cm] at (0,0) {#3};	    
	\end{scope}
}
\tikzset{
    mark position/.style args={#1(#2)}{
        postaction={
            decorate,
            decoration={
                markings,
                mark=at position #1 with \coordinate (#2);
            }
        }
    }
}
\def\myShiftUp#1{\raisebox{1ex}}
\def\myShiftDown#1{\raisebox{-2.5ex}}
\newlength{\myww}
\newlength{\myhh}
\DeclareRobustCommand\newBlueLine{\tikz[baseline=(current bounding box.base)] \draw[ultra thick,blue] (0,2.5pt)--(.75,2.5pt);}
\DeclareRobustCommand\redLine{\tikz[baseline=(current bounding box.base)] \draw[ultra thick,red] (0,2.5pt)--(.75,2.5pt);}
\definecolor{themecolour}{rgb}{0.64, 0.76, 0.68}
\definecolor{mygray}{rgb}{0.7, 0.7, 0.7}
\definecolor{tablegray}{rgb}{0.75, 0.75, 0.75}
\definecolor{confcol}{RGB}{255, 255, 255}
\definecolor{camblue}{RGB}{108, 172, 228}
\definecolor{camred}{RGB}{213, 0, 50}
\definecolor{camnavy}{RGB}{0, 60, 113}
\definecolor{camgreen}{RGB}{114, 180, 49}
\definecolor{myblue}{rgb}{0 ,  0.4470 , 0.7410}
\definecolor{myorange}{rgb}{0.8500,    0.3250,    0.0980}
\definecolor{myyellow}{rgb}{0.9290,    0.6940,    0.1250}
\definecolor{mypurple}{rgb}{ 0.4940,    0.1840,    0.5560}
\definecolor{myred}{rgb}{     0.6350 ,   0.0780 ,   0.1840}
\definecolor{mygreen}{rgb}{         0.4660  ,  0.6740   , 0.1880}
\definecolor{branchCol}{rgb}{0, .8, .8}
\definecolor{prsared}{RGB}{ 219,25,73}
\definecolor{prsablue}{RGB}{ 4,146,210}
\definecolor{confblue}{rgb}{0.75, 0.72, 0.95}
\definecolor{confgreen}{rgb}{0.76, 1, 0.74}
\definecolor{confgrey}{rgb}{0.9, .9, 0.9}
\definecolor{cvblue}{rgb}{0.22,0.45,0.70}%
\definecolor{matlab1}{rgb}{0,0.4470,0.7410}
\definecolor{matlab2}{rgb}{0.8500,0.3250,0.0980}
\definecolor{matlab3}{rgb}{0.9290,0.6940,0.0980}
\definecolor{matlab4}{rgb}{0.4940,0.1840,0.5560}
\definecolor{matlab5}{rgb}{0.4660,0.6740,0.1880}
\definecolor{matlab6}{rgb}{0.3010,0.7450,0.9330}
\definecolor{matlab7}{rgb}{0.6350,0.0780,0.1840}
\newlength{\fheight}
\newlength{\fwidth}
\pgfplotsset{compat=newest}
\begin{document}

\maketitle

\begin{abstract}
We present a solution for the scattered field caused by an incident wave interacting with an infinite cascade of blades with complex boundary conditions. This extends previous studies by allowing the blades to be compliant, porous or satisfying a generalised impedance condition. Beginning with the convected wave equation, we employ Fourier transforms to obtain an integral equation amenable to the Wiener--Hopf method. This Wiener--Hopf system is solved using a method that avoids the factorisation of matrix functions. The Fourier transform is inverted to obtain an expression for the acoustic potential function that is valid throughout the entire domain. We observe that the principal effect of complex boundary conditions is to perturb the zeros of the Wiener--Hopf kernel, which correspond to the duct modes in the inter-blade region. We focus efforts on understanding the role of porosity, and present a range of results on sound transmission and generation. The behaviour of the duct modes is discussed in detail in order to explain the physical mechanisms behind the associated noise reductions. In particular, we show that cut-on duct modes do not exist for arbitrary porosity coefficients. Conversely, the acoustic modes are unchanged by modifications to the boundary conditions. Consequently, we observe that even modest values of porosity can result in reductions in the sound power level of $5$ dB for the first mode and $20$ dB for the second mode. The solution is essentially analytic (the only numerical requirements are matrix inversion and root finding) and is therefore extremely rapid to compute.
\end{abstract}

\begin{keywords}
acoustics, wave scattering, wave-structure interactions
\end{keywords}

\def\i{\textrm{i}}
\def\e{\textrm{e}}
\def\d{\textrm{d}}
\def\Im{\mathop{\mathbb{I}\textrm{m}}}
\def\Re{\mathop{\mathbb{R}\textrm{e}}}
\def\sgn{\textrm{sgn}}
\def\Xint#1{\mathchoice
	{\XXint\displaystyle\textstyle{#1}}%
	{\XXint\textstyle\scriptstyle{#1}}%
	{\XXint\scriptstyle\scriptscriptstyle{#1}}%
	{\XXint\scriptscriptstyle\scriptscriptstyle{#1}}%
	\!\int}
\def\XXint#1#2#3{{\setbox0=\hbox{$#1{#2#3}{\int}$}
		\vcenter{\hbox{$#2#3$}}\kern-.5\wd0}}
\def\ddashint{\Xint=}
\def\dashint{\Xint-}
\def\sourceloc{\mathbf{x}_0}
\def\sourcethet{\theta_0}
\def\incidentG{\hat{G}_i}
\def\reflectedG{\hat{G}_r}
\def\scatteredG{\hat{G}_s}
\def\FTG{\hat{\mathcal{G}}_s}
\def\FTGm{\hat{\mathcal{G}}_{s-}}
\def\FTGp{\hat{\mathcal{G}}_{s+}}
\def\FTvpm{\mathcal{V}^\pm_-}
\def\FTv{\mathcal{V}_-}
\def\Jmap{\mathcal{J}}

\newcommand{\amp}{A}
\newcommand{\namp}{w_0}

\newcommand{\ssig}{h_{1,\Sigma}}
\newcommand{\sinc}{h_{i}}

\newcommand{\dsou}{D_{1,S}}
\newcommand{\dsoup}{D_{1,S,+}}
\newcommand{\dsoupm}{D_{1,S,\pm}}
\newcommand{\dsoum}{D_{1,S,-}}

\newcommand{\ddel}{D_{1,\Delta}}
\newcommand{\ddelp}{D_{1,\Delta,+}}
\newcommand{\ddelpm}{D_{1,\Delta,\pm}}
\newcommand{\ddelm}{D_{1,\Delta,-}}

\newcommand{\dsig}{D_{1,\Sigma}}
\newcommand{\dsigp}{D_{1,\Sigma,+}}
\newcommand{\dsigpm}{D_{1,\Sigma,\pm}}
\newcommand{\dsigm}{D_{1,\Sigma,-}}

\newcommand{\dgen}{D}

\newcommand{\Agam}{\mathcal{A}_{1,\Gamma,n}}
\newcommand{\Asig}{\mathcal{A}_{1,\Sigma,n}}
\newcommand{\Asou}{\mathcal{A}_{1,S,n}}
\newcommand{\Adel}{\mathcal{A}_{1,\Delta,n}}
\newcommand{\Agen}{\mathcal{A}_{1,n}}
\newcommand{\Azero}{\mathcal{A}_{0,n}}

\newcommand{\Bgam}{\mathcal{B}_{1,\Gamma,n}}
\newcommand{\Bsig}{\mathcal{B}_{1,\Sigma,n}}
\newcommand{\BsigR}{\mathcal{B}_{1,\Sigma,k}}
\newcommand{\Bsou}{\mathcal{B}_{1,S,n}}
\newcommand{\Bdel}{\mathcal{B}_{1,\Delta,n}}
\newcommand{\Bgen}{\mathcal{B}_{1,n}}
\newcommand{\Bzero}{\mathcal{B}_{0,n}}

\newcommand{\Cgam}{\mathcal{C}_{1,\Gamma,n}}
\newcommand{\Csig}{\mathcal{C}_{1,\Sigma,n}}
\newcommand{\Csou}{\mathcal{C}_{1,S,n}}
\newcommand{\Cdel}{\mathcal{C}_{1,\Delta,n}}
\newcommand{\Cgen}{\mathcal{C}_{1,n}}
\newcommand{\Czero}{\mathcal{C}_{0,n}}

\newcommand{\Tgam}{\mathcal{T}_{1,\Gamma,l}}
\newcommand{\Tsig}{\mathcal{T}_{1,\Sigma,l}}
\newcommand{\Tsou}{\mathcal{T}_{1,S,l}}
\newcommand{\Tdel}{\mathcal{T}_{1,\Delta,l}}
\newcommand{\Tgen}{\mathcal{T}_{1,l}}
\newcommand{\TgenV}[1]{\mathcal{T}_{1,#1}}

\newcommand{\Sgam}{\mathcal{S}_{1,\Gamma,l}}
\newcommand{\Ssig}{\mathcal{S}_{1,\Sigma,l}}
\newcommand{\Ssou}{\mathcal{S}_{1,S,l}}
\newcommand{\Sdel}{\mathcal{S}_{1,\Delta,l}}
\newcommand{\Sgen}{\mathcal{S}_{1,l}}

\newcommand{\Rgam}{\mathcal{R}_{1,\Gamma,m}}
\newcommand{\Rsou}{\mathcal{R}_{1,S,m}}
\newcommand{\Rdel}{\mathcal{R}_{1,\Delta,m}}
\newcommand{\Rgen}{\mathcal{R}_{1,m}}

\newcommand{\dgamres}{D_{r,\Gamma,k}}
\newcommand{\dsoures}{D_{r,S,k}}
\newcommand{\dsigres}{D_{r,\Sigma,k}}
\newcommand{\ddelres}{D_{r,\Delta,k}}

\newcommand{\Ugam}{\mathcal{U}_{1,\Gamma,m}}
\newcommand{\Usou}{\mathcal{U}_{1,S,m}}
\newcommand{\Udel}{\mathcal{U}_{1,\Delta,m}}
\newcommand{\Ugen}{\mathcal{U}_{1,m}}

\newcommand{\transformedSourceTerms}{S}
\newcommand{\phiTransSou}{S_\phi}
\newcommand{\souInvKernel}[1]{\mathfrak{S}_{#1}}
\newcommand{\souInvKernelResPM}[2]{\mathfrak{S}^{\pm r}_{#1,#2}}
\newcommand{\souInvKernelResP}[2]{\mathfrak{S}^{+r}_{#1,#2}}
\newcommand{\souInvKernelResM}[2]{\mathfrak{S}^{-r}_{#1,#2}}
\newcommand{\souIBR}{S^i}
\newcommand{\souIBRA}{S^{i,(1)}}
\newcommand{\souIBRB}{S^{i,(2)}}
\newcommand{\souU}{S^{u}}
\newcommand{\souD}{S^{d}}
\newcommand{\souUR}[1]{S^{u,r}_{#1}}
\newcommand{\souURA}[1]{S^{u,r,(1)}_{#1}}
\newcommand{\souDR}[1]{S^{d,r}_{#1}}
\newcommand{\souDRB}[1]{S^{d,r,(2)}_{#1}}
\newcommand{\souDRBs}[1]{S^{*d,r,(2)}_{#1}}

\newcommand{\sourceTerms}{\mathbb{S}}
\newcommand{\souCoef}[1]{\mathbb{S}_{#1}}
\newcommand{\souIntCoefs}{\mathbb{S}^i}

\newcommand{\souModeP}[1]{M^+_{#1}}
\newcommand{\souModeM}[1]{M^-_{#1}}
\newcommand{\souModeVarP}[1]{\tilde{M}^+_{#1}}
\newcommand{\souModeVarM}[1]{\tilde{M}^-_{#1}}
\newcommand{\souModeVarPM}[1]{\tilde{M}^\pm_{#1}}

\newcommand{\souTransFac}[1]{\mathcal{F}_{#1}}
\newcommand{\souTransFacResPM}[1]{\mathcal{F}_{#1}^{r,\pm}}
\newcommand{\souTransFacResP}[1]{\mathcal{F}_{#1}^{r,+}}
\newcommand{\souTransFacResM}[1]{\mathcal{F}_{#1}^{r,-}}

\newcommand{\souA}{S^{(1)}}
\newcommand{\souB}{S^{(2)}}
\newcommand{\souBs}{S^{*(2)}}

\newcommand{\Pgen}{{P}}
\newcommand{\Pgam}{{P}_{1,\Gamma}}
\newcommand{\Psou}{{P}_{1,S}}
\newcommand{\Pdel}{{P}_{1,\Delta}}
\newcommand{\Psig}{{P}_{1,\Sigma}}
\newcommand{\PgenE}{P_1}
\newcommand{\PO}{P_0}

\newcommand{\fgam}{{f}_{1,\Gamma}}
\newcommand{\fsou}{{f}_{1,S}}
\newcommand{\fdel}{{f}_{1,\Delta}}
\newcommand{\fsig}{{f}_{1,\Sigma}}

\newcommand{\FgamP}{{F}_{1,\Gamma,+}}
\newcommand{\FsouP}{{F}_{1,S+}}
\newcommand{\FdelP}{{F}_{1,\Delta,+}}
\newcommand{\FsigP}{{F}_{1,\Sigma,+}}

\newcommand{\FgamM}{{F}_{1,\Gamma,-}}
\newcommand{\FsouM}{{F}_{1,S}}
\newcommand{\FdelM}{{F}_{1,\Delta,-}}
\newcommand{\FsigM}{{F}_{1,\Sigma,-}}

\newcommand{\Vgam}{\mathcal{V}_{1,\Gamma}}
\newcommand{\gamEntire}{E_\Gamma^+}
\newcommand{\gamRes}{Res_{\Gamma,n}}
\newcommand{\gamResDMAM}{Res_{\Gamma,}}
\newcommand{\gamResG}{Res_{\gModeO,\Gamma,n}}
\newcommand{\gamFun}{L_{\Gamma}}

\def\mySec{section }
\def\sec{section }

\newcommand{\sumCoefs}[1]{\mathcal{c}_{\Sigma,#1}}
\newcommand{\jumpCoefs}[1]{\mathcal{c}_{\Delta,#1}}

\newcommand{\aModeP}{\lambda_m^+}
\newcommand{\aModeM}{\lambda_m^-}
\newcommand{\aModePM}{\lambda_m^\pm}
\newcommand{\aModeMP}{\lambda_m^\pm}

\newcommand{\zModeP}{\zeta_m^+}
\newcommand{\zModeM}{\zeta_m^-}
\newcommand{\zModePM}{\zeta_m^\pm}
\newcommand{\zModeMP}{\zeta_m^\pm}

\newcommand{\aModePv}[1]{\lambda_{#1}^+}
\newcommand{\aModeMv}[1]{\lambda_{#1}^-}
\newcommand{\aModePMv}[1]{\lambda_{#1}^\pm}
\newcommand{\aModeMPv}[1]{\lambda_{#1}^\mp}
\newcommand{\zModePv}[1]{\zeta_{#1}^+}
\newcommand{\zModeMv}[1]{\zeta_{#1}^-}
\newcommand{\zModePMv}[1]{\zeta_{#1}^\pm}
\newcommand{\zModeMPv}[1]{\zeta_{#1}^\mp}

\newcommand{\kfun}{k}
\newcommand{\kResM}{k_{r,m}^{-}}
\newcommand{\kResP}{k_{r,m}^{+}}

\newcommand{\gModel}{\kappa_l^-}
\newcommand{\gModelA}{\kappa_{l_1}^-}
\newcommand{\gModeV}[1]{\kappa_{#1}^-}
\newcommand{\gModeO}{\kappa_0^-}
\newcommand{\zgMode}[1]{\zeta_{\kappa^-_{#1}}}

\newcommand{\DzeroCoefs}[1]{\mathcal{D}_{0,{#1}}}
\newcommand{\DzeroG}{\mathcal{D}_{0,\gModeO}}
\newcommand{\DzeroCoefsPr}[1]{\mathcal{D}_{0,{#1}}^\prime}
\newcommand{\DzeroGPr}{\mathcal{D}_{0,\gModeO}^\prime}
\newcommand{\DzeroGPrPMG}{\mathcal{D}_{0,\gModeO}^{\prime\pm \Gamma}}

\newcommand{\dModeM}{\theta_n^-}
\newcommand{\dModeP}{\theta_n^+}
\newcommand{\dModePr}{\theta_k^+}
\newcommand{\dModePv}[1]{\theta_{#1}^+}
\newcommand{\dModeMv}[1]{\theta_{#1}^-}
\newcommand{\dModeMr}{\theta_m^+}
\newcommand{\dModePM}{\theta_n^\pm}
\newcommand{\dModeMP}{\theta_n^\mp}

\newcommand{\LUHP}{\mathcal{L}^\mp}
\newcommand{\ULHP}{\mathcal{L}^{\pm}} %
\newcommand{\UHP}{\mathcal{L}^{+}} %
\newcommand{\LHP}{\mathcal{L}^{-}} %
\newcommand{\porousRootsI}[1]{\theta_{#1,R}^+}
\newcommand{\porousRootsII}[1]{\theta_{#1,L}^+}
\newcommand{\porousRootsIM}[1]{\theta_{#1,R}^-}
\newcommand{\porousRootsIIM}[1]{\theta_{#1,L}^-}

\newcommand{\porousRRootsI}[1]{\theta_{#1,R}^{r+}}
\newcommand{\porousIRootsI}[1]{\theta_{#1,R}^{i+}}
\newcommand{\asymPorousRootsCoefsI}[1]{a_{R}^{(#1)}} 
\newcommand{\asymPorousRootsCoefsII}[1]{a_{L}^{(#1)}} 
\newcommand{\asymPorousRootsFunI}[2]{\theta_R^{(#1)}(#2)}
\newcommand{\asymPorousRootsFunII}[2]{\theta_L^{(#1)}(#2)}
\newcommand{\asymPorousAcouP}[2]{\lambda^{(#1)+}_{#2}}
\def\i{\textrm{i}}
\def\e{\textrm{e}}

\def\Gdel{G_{\Delta}}
\def\GdelA{G_{\Delta}^{(1)}}
\def\GdelB{G_{\Delta}^{(2)}}
\def\GdelBs{G_{\Delta}^{*(2)}}

\newcommand{\invKernA}{A}
\newcommand{\invKernB}{B}

\newcommand{\vd}{s^\ast}
\newcommand{\hd}{d^\ast}

\newcommand{\vPS}{s_{\phi}}
\newcommand{\hPS}{d_{\phi}}
\newcommand{\vND}{s}
\newcommand{\hND}{d}
\newcommand{\densityInfD}{\rho^\ast}

\newcommand{\AaResP}[1]{A^{+,r}_{a,#1}}
\newcommand{\AbResP}[1]{A^{+,r}_{b,#1}}
\newcommand{\AaResM}[1]{A^{-,r}_{a,#1}}
\newcommand{\AbResM}[1]{A^{-,r}_{b,#1}}
\newcommand{\AdModeaP}[1]{A_{a,\dModePv{#1}}}
\newcommand{\AdModeaM}[1]{A_{a,\dModeMv{#1}}}
\newcommand{\AdModebP}[1]{A_{b,\dModePv{#1}}}
\newcommand{\AdModebM}[1]{A_{b,\dModeMv{#1}}}
\newcommand{\AgModea}[1]{A_{b,\gModeV{#1}}}
\newcommand{\AgModeb}[1]{A_{b,\gModeV{#1}}}
\newcommand{\BgMode}[1]{B_{\gModeV{#1}}}

\def\bPG{\beta_\infty}

\def\bd{b^\ast}

\newcommand{\xd}{x^\ast}
\newcommand{\yd}{y^\ast}

\newcommand{\zPG}{z_{\beta}}
\newcommand{\zPS}{z_{\phi}}

\newcommand{\Ud}{U_{\infty}^\ast}
\newcommand{\D}{\textnormal{D}}
\newcommand{\At}{A_t}
\newcommand{\Ats}{\tilde{A_t}}
\newcommand{\As}{A_s} 
\newcommand{\ks}{k_s}
\newcommand{\An}{A_n}
\newcommand{\kn}{k_n}

\def\Parallelogram{\mathcal{P}}

\newcommand{\bladeDist}{\Delta_\phi}
\newcommand{\bladeDistO}{\Delta}
\newcommand{\phic}{c_\phi}  %

\newcommand{\caseHallA}{A}
\newcommand{\caseHallB}{B}
\newcommand{\caseElhadidi}{B}
\newcommand{\caseVerdon}{A}

\def\ra{(I)}
\def\rb{(II)} 
\def\rc{(III)}
\def\rd{(IV)}
\def\re{(V)}

\def\iBP{\sigma}
\def\miBP{\sigma^\prime}

\definecolor{mygray}{rgb}{.7, .7, .7}
\definecolor{myblue}{rgb}{0 ,  0.4470 , 0.7410}
\definecolor{myorange}{rgb}{0.8500,    0.3250,    0.0980}
\definecolor{myyellow}{rgb}{0.9290,    0.6940,    0.1250}
\definecolor{mypurple}{rgb}{ 0.4940,    0.1840,    0.5560}
\definecolor{myred}{rgb}{     0.6350 ,   0.0780 ,   0.1840}
\definecolor{mygreen}{rgb}{         0.4660  ,  0.6740   , 0.1880}

\newcommand*{\TikzDist}{3}%
\newcommand*{\TikzStag}{25}%
\newcommand*{\TikzDistb}{2}%
\newcommand*{\TikzStagb}{45}%
\newcommand*{\StagArcRad}{1}
\newcommand*{\StagArcRadb}{.5}
\newcommand*{\TikzAoA}{30}%
\newcommand*{\AoAArcRad}{2}
\newcommand*{\TikzIP}{-35}%
\newcommand*{\AScale}{5}%
\newcommand*{\bh}{.1} %

\newcommand*{\InvaModeCoefsU}[1]{\mathcal{H}^+_{#1}}
\newcommand*{\InvSouCoefsU}[1]{\mathcal{S}^+_{#1}}
\newcommand*{\InvModeCoefsUA}[1]{\mathcal{H}^+_{a,#1}}
\newcommand*{\InvModeCoefsDB}[1]{\mathcal{H}^-_{b,#1}}
\newcommand*{\InvModeCoefsRL}[1]{\mathcal{H}^-_{\mathcal{R},a,#1}}
\newcommand*{\InvModeCoefsUD}[1]{\mathcal{H}^+_{\mathcal{U},b,#1}}
\newcommand*{\InvaModeCoefsD}[1]{\mathcal{H}^-_{#1}}
\newcommand*{\InvSouCoefsD}[1]{\mathcal{S}^-_{#1}}
\newcommand*{\InvGamCoefsAM}[1]{\mathcal{H}^-_{\Gamma,\aModeMv{},#1}}
\newcommand*{\InvGamCoefsGM}[1]{\mathcal{H}^-_{\Gamma,\gModeO}}
\newcommand*{\InvModeCoefsGUA}[1]{\mathcal{H}^+_{G,a,#1}}
\newcommand*{\InvModeCoefsGLA}[1]{\mathcal{H}^-_{G,a,#1}}
\newcommand*{\InvModeCoefsGUB}[1]{\mathcal{H}^+_{G,b,#1}}
\newcommand*{\InvModeCoefsGLB}[1]{\mathcal{H}^-_{G,b,#1}}
\newcommand*{\InvPresLT}{\mathcal{H}^-_{P,b}}

\newcommand*{\InvSouCoefsI}[1]{\mathcal{S}^i_{#1}}
\newcommand*{\InvDelA}[1]{\mathcal{H}_{\Delta,a,#1}}
\newcommand*{\InvDelB}[1]{\mathcal{H}_{\Delta,b,#1}}
\newcommand*{\InvThetaPlusA}[1]{\mathcal{H}_{\theta^+,a,#1}}
\newcommand*{\InvThetaPlusB}[1]{\mathcal{H}_{\theta^+,b,#1}}
\newcommand*{\InvThetaMinusA}[1]{\mathcal{H}_{\theta^-,a,#1}}
\newcommand*{\InvThetaMinusB}[1]{\mathcal{H}_{\theta^-,b,#1}}
\newcommand*{\InvBCA}[1]{\mathcal{H}_{\Sigma,a,#1}}
\newcommand*{\InvBCB}[1]{\mathcal{H}_{\Sigma,b,#1}}

\newcommand*{\InvSouIA}[1]{\mathcal{H}_{S,a,#1}}
\newcommand*{\InvSouIB}[1]{\mathcal{H}_{S,b,#1}}
\newcommand*{\InvSouIC}[1]{\mathcal{H}_{S,c,#1}}
\newcommand*{\InvSouID}[1]{\mathcal{H}_{S,d,#1}}

\newcommand{\invKerSou}{I_{S}}
\newcommand{\invKerD}{I_{D}}
\newcommand{\invKerG}{I_{G}}

\newcommand*{\evenodd}[1]{\chi_{#1}}

\newcommand{\signGam}{{{\pm}_\Gamma}}

\renewcommand{\ssig}{\phi}
\renewcommand{\dsig}{D}
\renewcommand{\Psig}{P}
\renewcommand{\fsig}{f}
\renewcommand{\dsigres}{D_{r,k}}
\renewcommand{\Asig}{\mathcal{A}_n}
\renewcommand{\FsigM}{F_-}
\renewcommand{\FsigP}{F_+}
\renewcommand{\dsigp}{D_+}
\renewcommand{\dsigm}{D_-}
\renewcommand{\Bsig}{\mathcal{B}_n}
\renewcommand{\Csig}{\mathcal{C}_n}
\renewcommand{\gModeO}{\kappa}
\section{Introduction}
\noindent Turbomachinery noise remains a significant contributor to overall aero-engine noise \citep{Peake2012}. A considerable source of broadband noise is the so-called ``rotor-stator interaction'', where unsteady wakes shed by compressor rotors interact with downstream stators. Much progress has been made in understanding rotor-stator interaction noise when the blades are modelled as flat plates \citep{Peake1992,Glegg1999,Posson2010,Bouley2017} or with realistic geometries \citep{Evers2002,Baddoo2019b}. However, research into blades with complex boundaries -- where the blades are not necessarily stationary and rigid -- is far less developed, especially from an analytic standpoint. This paper presents the first analytic treatment of scattering by cascades with complex boundary conditions, including compliance, porosity and impedance.

Aspirations for lighter and more efficient engines have driven the design of thinner and lighter blades in turbomachinery \citep{Saiz2008}. As a result, aeroelastic effects such as flutter and resonance must be considered in modern turbomachinery design and testing. The rapid and accurate prediction of the aeroacoustic performance of turbomachinery with consideration of aeroelastic effects is therefore essential in evaluating the appropriateness of potential blade designs. Analytic solutions are excellent candidates for this task \citep{Glegg1999,Posson2010,Peake1992}, but are presently limited to rigid blades with no consideration of aeroelastic effect. The present study permits the consideration of compliant blades \citep{Crighton1970}, where blade deforms with a (purely) local response to the pressure gradient across the blade and elastic restoring forces are ignored. This is particularly relevant to marine applications where inertial effects dominate elastic restoring forces and is an important first step towards a more general linearised elastic blade \citep{Cavalieri2016}.

An influential trend in aeroacoustic research is to modify aerofoils with noise reducing technologies. A popular choice is a poroelastic extension, which was originally inspired by the silent flight of owls \citep{Graham1934}, and the corresponding experimental support \cite{Geyer2010}. A detailed review of research into the silent flight of owls is available in \cite{Jaworski2020}. Poroelastic extensions have been applied to semi-infinite \citep{Jaworski2013} and finite \citep{Ayton2016a} blades and have demonstrated considerable noise reductions. Structural requirements limit the application of highly porous blades in turbomachinery, but there remains the possibility that significant noise reductions are available for modest porosity values. The approach of the present research permits porous blades through the assumption of a Darcy-type condition where the seepage velocity through the blade is proportional to the pressure jump across the blade. We will also consider the more general situation of an impedance relation along the blades \citep{Myers1980}.

In this paper we extend the analyses of \cite{Glegg1999} and \cite{Posson2010} to analyse the scattering by a cascade of blades with a range of boundary conditions. The problem is solved with tools from complex variable theory, including the Wiener--Hopf method. Taking a Fourier transform maps the problem into the spectral plane where the Wiener--Hopf analysis is carried out in a similar way to \cite{Glegg1999}. An inverse Fourier transform is applied to return the problem to physical space, and contour integration is applied to recover the acoustic field \citep{Posson2010}. A significant advantage of the presented technique is that the method is identical regardless of the boundary condition -- the only effects of modifying the boundary condition are to modify the kernel in the Wiener--Hopf method.

A striking feature of the analysis is that modifications to the boundary conditions do not affect the modal structure of the solution in the far field. In the spectral plane, the only effect of modifying the boundary condition is to vary the locations of the zeros of the  Wiener--Hopf kernel. This has a significant effect on the acoustic field in the inter-blade region since the cut-on frequencies of the duct modes are modified to account for energy being absorbed or produced by the blades. In fact, we will show that, in the case where the blades are porous, the duct modes are never cut-on. The poles of the Wiener--Hopf kernel correspond to the acoustic modes scattered into the far field and are invariant under modifications to the boundary conditions. Consequently, the cut-on frequencies of the acoustic modes are unchanged and the model structure of the upstream and downstream acoustic fields are the same regardless of boundary condition, although the coefficients of these modes do change, thus permitting a far-field noise reduction.

We consider four possible boundary conditions labelled cases 0---III. Physically, case 0 corresponds to rigid blades; case I, porous or compliant blades with no background flow; case II, porous blades with background flow; and case III, a general impedance relation. Mathematically, case 0 corresponds to a Neumann boundary condition; case I, a Robin boundary condition; case II, an oblique derivative boundary condition; and case III, a generalised Cauchy boundary condition.

We begin by presenting a mathematical model for the blade row in section \ref{c5:Sec:MathForm}, including the modelling of the various boundary conditions. We then present some details of the mathematical solution in section \ref{c5:Sec:MathSol}. In section \ref{c5:Sec:Results} we conduct a detailed investigation of the role of blade porosity. In particular, we present a range of results on sound generation and sound transmission. Finally, in section \ref{c5:Sec:Conclusion} we summarise our work and suggest future directions of research.

\section{Mathematical Formulation} 
\label{c5:Sec:MathForm}
We consider a rectilinear cascade of blades in a uniform, subsonic flow as illustrated in figure \ref{Fig:poroCascade}. As is typical in these analyses \citep{Glegg1999,Peake1992,Peake1993}, it is useful to rotate the coordinate system so that
\begin{align*}
({x^\ast},{y^\ast},{z^\ast}) & = \left( \tilde{x} \cos(\chi^\ast) -  \tilde{y}\sin(\chi^\ast),   \tilde{x} \sin(\chi^\ast) +  \tilde{y} \cos(\chi^\ast), \tilde{z}\right),
\end{align*}
\begin{figure}
\centering
\begin{tikzpicture}[remember picture]
\coordinate (coord) at ($({-7*cos(\TikzStag)},{-7*sin(\TikzStag)})$);

\draw[gray,ultra thick,rotate=\TikzStag] (0,0)--(\AScale,0);
\draw[gray,ultra thick,shift = {(0,\TikzDist)},rotate=\TikzStag] (0,0)--(\AScale,0);
\draw[gray,ultra thick,shift = {(0,-\TikzDist)},rotate=\TikzStag] (0,0)--(\AScale,0);

\draw[dashed,ultra thick,shift = {(0,\TikzDist)},rotate=\TikzStag] (0,0)--(\AScale,0);
\draw[dashed,ultra thick,rotate=\TikzStag] (0,0)--(\AScale,0);
\draw[dashed,ultra thick,shift = {(0,-\TikzDist)},rotate=\TikzStag] (0,0)--(\AScale,0);

\myaxis{(coord)}{0}{2}{$\tilde{x}$}{$\tilde{y}$}
\draw ($(coord)+(\StagArcRad,0)$) arc[start angle=0, end angle=\TikzStag, radius=\StagArcRad];

\draw[-Latex,shift={($(coord)+(0,1)$)},rotate={\TikzStag}] (0,0)--(\AoAArcRad+.5,0);
\draw[-Latex,shift={(coord)},rotate={\TikzStag}] (0,0)--(\AoAArcRad+0.5,0);
\draw[-Latex,shift={($(coord)+(0,0.5)$)},rotate={\TikzStag}] (0,0)--(\AoAArcRad+.5,0);
\node[shift={($(coord)+({(\AoAArcRad+3)/2*cos(\TikzStag+\TikzAoA)},{(\AoAArcRad+.5)*sin(\TikzStag+\TikzAoA)})$)}] {$U^\ast_\infty$};

\coordinate (wave) at (-6,4);

 \draw[thick, scale=1,domain=0:1.5,smooth,variable=\x,rotate=\TikzStag-40,shift={(wave)},samples=200] plot ({\x},{0.5*sin(2*360*\x)});
\draw[thick,-Latex,shift={(wave)},rotate=\TikzStag-40] (0,0)--(3,0);
\node[above right=10pt] at (wave) {unsteady perturbation};

\coordinate (meet) at ($({-\TikzDist*cos(\TikzStag)*sin(\TikzStag)},{\TikzDist*((cos(\TikzStag)*cos(\TikzStag))})$);
\draw[Latex-Latex, shift={($-0.1*(meet)$)}] (0,0) -- ($({\AScale*cos(\TikzStag)},{\AScale*sin(\TikzStag)})$) node [below, midway,rotate=\TikzStag] {$2b^\ast$};

\draw[Latex-Latex] ($({\AScale*cos(\TikzStag)/2},{\AScale*sin(\TikzStag)/2})$) -- ($({\AScale*cos(\TikzStag)/2},{\AScale*sin(\TikzStag)/2+\TikzDist})$) node [right, midway] {$\Delta^\ast$};

\draw[gray,dashdotted] (coord)--(0,0);%
\node at ($(coord)+({(.5+\StagArcRad)*cos(\TikzStag/2)},{(.5+\StagArcRad)*sin(\TikzStag/2)})$) {$\chi^\ast$};
\node[below=4pt] at ($(coord)+(1,0)$) {background flow};

\path[fill=mygray] ({\AScale*cos(\TikzStag)/2},4.4) circle (0.07);
\path[fill=mygray] ({\AScale*cos(\TikzStag)/2},4.7) circle (0.07);
\path[fill=mygray] ({\AScale*cos(\TikzStag)/2},5) circle (0.07);

\path[fill=mygray] ({\AScale*cos(\TikzStag)/2},-3.4) circle (0.07);
\path[fill=mygray] ({\AScale*cos(\TikzStag)/2},-2.8) circle (0.07);
\path[fill=mygray] ({\AScale*cos(\TikzStag)/2},-3.1) circle (0.07);
\end{tikzpicture}
\caption{A rectilinear cascade of flat plates with complex boundaries.}
\label{Fig:poroCascade}
\end{figure}
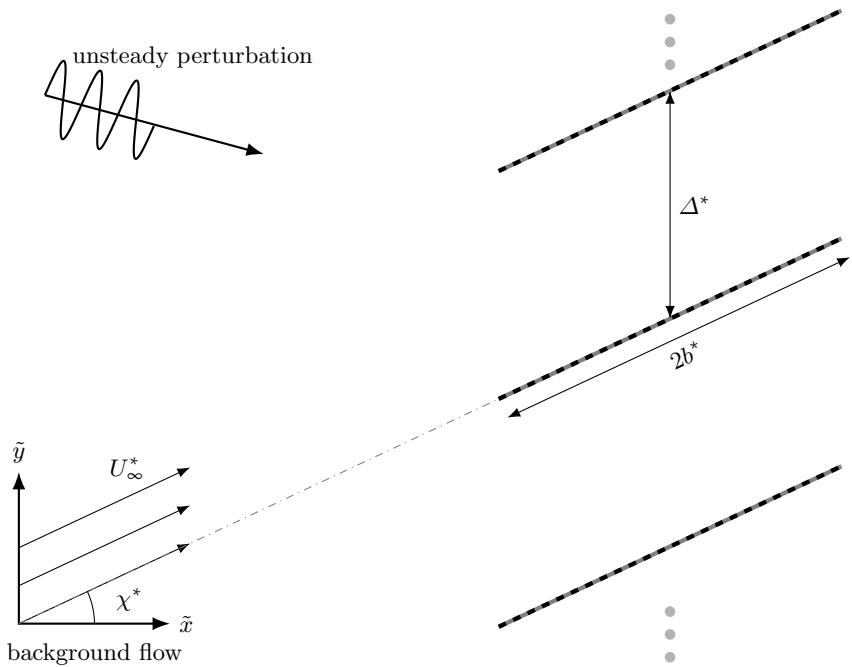
\begin{figure}
\begin{tikzpicture}[scale=1.1,y={(-1cm,0.5cm)},x={(1cm,0.5cm)}, z={(0cm,1cm)}]

\begin{scope}[shift = {(-5,0)} ]
\draw[-latex] (-.3,-.3,0) -- +(2, 0,  0)  node [below, pos = .75] {$U^\ast_\infty$};
\draw[-latex] (-.3,.3,0) -- +(2, 0,  0);
\draw[-latex] (-.3,0,0) -- +(2, 0,  0);
\draw[-latex] (-.3,-.3,0) -- +(0,  -2, 0) node [below, pos =.75] {$W^\ast_\infty$};
\draw[-latex] (0,-.3,0) -- +(0,  -2, 0);
\draw[-latex] (.3,-.3,0) -- +(0,  -2, 0);
\fill[black] (-.3,-.3) -- (-.3,.3) -- (.3,.3) --(.3,-.3) -- cycle;
\draw (-.3,.3)--(.3,.3);
\end{scope}

\foreach \z in {-1,0,1}
{
\draw[fill=gray!10!white] ($(-1.5+.5*\z,-3.5,\z)$) -- ($(-1.5+.5*\z,3.5,\z)$) -- ($(1.5+.5*\z,3.5,\z)$) -- ($(1.5+.5*\z,-3.5,\z)$) -- cycle;
\foreach \x in {-2,-1,0,1,2}
\foreach \y in {-3,...,3}
\draw[fill = gray!50!white] ($(.5*\x+.5*\z,\y,\z)$) circle (.1);
}

\draw[latex-latex] (-1.5,.5,0)--(-1,.5,0) node[below,pos = 1.2] {$d^\ast$};
\draw[latex-latex] (-1,.5,0)--(-1,.5,1) node[midway, right=1pt] {$s^\ast$};

\begin{scope}[shift = {(-3,-4.5)} ]
\coordinate (O) at (0, 2, 0);
\draw[ultra thick,-latex] (O) -- +(1, 0,  0) node [below,pos = 1] {$x^\ast$};
\draw[ultra thick,-latex] (O) -- +(0,  -1, 0) node [below,pos=1] {$z^\ast$};
\draw[ultra thick,-latex] (O) -- +(0,  0, 1) node [left, pos = 1] {$y^\ast$};
\end{scope}

\end{tikzpicture}
\caption{A three-dimensional view of the cascade the rotated, dimensional $(x^\ast,y^\ast,z^\ast)$ coordinate system. The chordwise and spanwise background velocities are denoted by $U^\ast_\infty$ and $W^\ast_\infty$ respectively. The complex boundaries are illustrated by the ``holes'' on each blade, which may represent compliance, porosity, or impedance.}
\label{Fig:3d}
\end{figure}
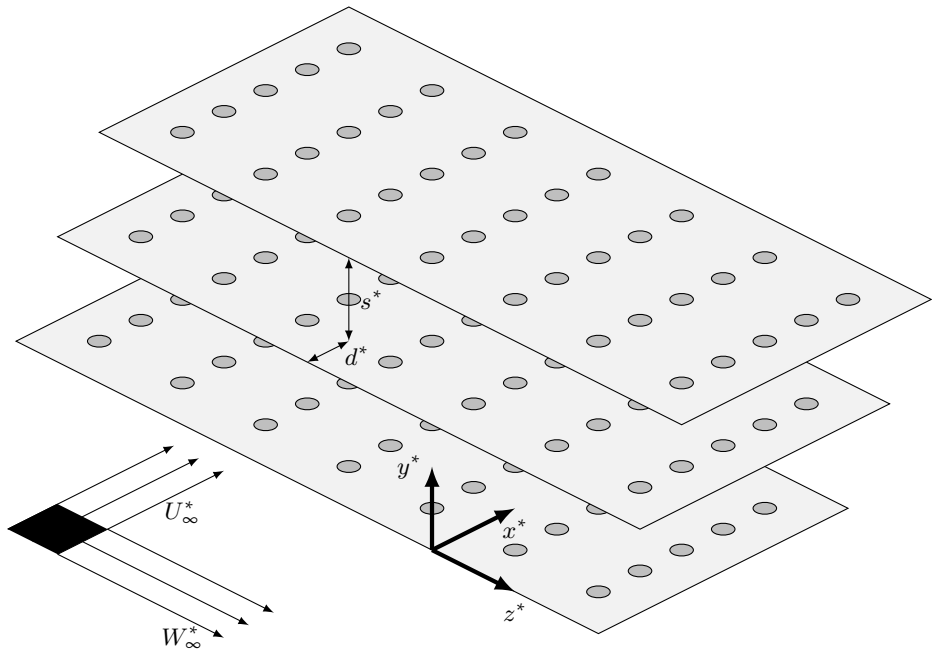
and the $x^\ast$ and $y^\ast$ coordinates are tangential and normal to the blades respectively, which have dimensional length $2b^\ast$. The background flow is tangential to the blades and may have a spanwise component such that \mbox{$\boldsymbol{U}_0^\ast = (U^\ast_\infty,0,W^\ast_\infty)$}, as illustrated in figure \ref{Fig:3d}. The blades in the cascade are inclined at stagger angle $\chi^\ast$, and the distance between adjacent blades is $\Delta^\ast$. Consequently, the spacing between blades is simply given by
\begin{align*}
(d^\ast, s^\ast ) & = \Delta^\ast \left(\sin(\chi^\ast),\cos(\chi^\ast)\right).
\end{align*}
We further assume that a vortical or acoustic wave is incident on the cascade, resulting in a velocity perturbation $\boldsymbol{u}^\ast$ to the mean flow. The Kutta condition is satisfied by ensuring that there is no pressure jump at the blades' trailing-edges.

We introduce an acoustic potential function for the scattered field defined by 
\begin{align*}
\nabla {\phi^\ast} & = \boldsymbol{u}^\ast.
\end{align*}
Consequently, conservation of momentum yields the scattered pressure as
\begin{align}
p^\ast = - \rho_0 \frac{\textrm{D}_0{\phi^\ast}}{\textrm{D}_0 t^\ast}, \label{c5:Eq:scatPres}
\end{align}
where the (linearised) convective derivative is defined as
\begin{align*}
\frac{\textrm{D}_0}{\textrm{D}_0 t^\ast} & = \frac{\partial }{\partial t^\ast} + \boldsymbol{U}_0^\ast \cdot \nabla =  \frac{\partial }{\partial t^\ast} + U^\ast_\infty \frac{\partial}{\partial x^\ast} + W^\ast \frac{\partial}{\partial z^\ast}.
\end{align*}
Accordingly, conservation of mass yields the convected wave equation
\begin{align}
\frac{1}{c_0^2} \frac{\textrm{D}_0^2 \phi^\ast}{\textrm{D}_0 t^{\ast2}} - \nabla^2 {\phi^\ast} & = 0, \label{c5:Eq:convecHelm}
\end{align}
where $c_0$ is the isentropic speed of sound.

We suppose that the unsteady perturbation incident on the cascade takes the form
\begin{align}
\phi_i^\ast &=\exp \left[ \i \left(k_x^\ast x^\ast + k_y^\ast y^\ast + k_z^\ast z^\ast -\omega^\ast t^\ast\right) \right]. \label{c5:Eq:incField}
\end{align}
Dimensional variables are denoted with $\ast$ whereas non-dimensional variables have no such annotation. Since the system is infinite in the spanwise direction, the scattered solution $\boldsymbol{u}^\ast$ must also have harmonic dependence in the $z$-direction. Accordingly, making the following convective transformation and non-dimensionalisations
\begin{align*}
\phi^\ast(x^\ast, y^\ast, z^\ast, t^\ast) = U^\ast_\infty b^\ast \phi(x,y) \exp \left[\i \omega \left(-M^2 \delta  x + k_z z -  t \right) \right],
\end{align*}
\begin{align*}
d^\ast = b^\ast d, \qquad s^\ast = b^\ast \frac{s}{\beta}, \qquad \Delta = \sqrt{d^2 + s^2},
\end{align*}
\begin{align*}
x^\ast = b^\ast x, \qquad y^\ast = b^\ast \frac{y}{\beta}, \qquad z^\ast = b^\ast z, \qquad   t^\ast = \frac{\omega}{\omega^\ast} t,
\end{align*}
\begin{align*}
k_x^\ast = \frac{\delta (k_x- M^2 \omega)}{b^\ast}, \qquad \qquad k_y^\ast =  \frac{\omega \beta k_y}{b^\ast  }, \qquad \qquad k_z^\ast = \frac{\omega k_z}{b^\ast},
\end{align*}
 \begin{align*}
 M = U^\ast_\infty/c_0, \qquad \beta = \sqrt{1 - M^2}, \qquad \delta = 1/{\beta^2},
 \end{align*}
 \begin{align*}
\omega  = \frac{b^\ast}{U^\ast_\infty} (\omega^\ast - W^\ast_\infty k_z^\ast), \qquad W^\ast_\infty = U^\ast_\infty W_\infty, %
\end{align*}
\begin{align*}
w^2 = (M \delta)^2 - (k_z/\beta)^2 -(2+2 \i) k_z  \delta M^2 W(1-W_\infty k_z),
\end{align*}
reduces  \eqref{c5:Eq:convecHelm} to
\begin{align}
\left(\frac{\partial^2}{\partial x^2} + \frac{\partial^2}{\partial y^2} + \omega^2 w^2\right) \phi & = 0. \label{c5:Eq:helmholtz}
\end{align}
In terms of these new variables, the scattered pressure \eqref{c5:Eq:scatPres} becomes
\begin{align}
p^\ast = -\rho^\ast U^{\ast 2} p \e^{\i( z - t) }, \qquad 
 p =\frac{\partial}{\partial x} \left( \phi \e^{-\i \omega \delta x} \right) \e^{\i \omega  x },  \label{c5:Eq:presTrans}
\end{align}
where $p$ is the non-dimensional pressure, and the incident perturbation becomes
\begin{align}
\phi_i^\ast = U^\ast b^\ast \phi^i(x,y) \e^{\i \omega \left(-M^2 \delta  x + k_z z -  t \right)}, \qquad
 \phi_i = \exp \left[ \i (\delta k_x x + \omega k_y y)  \right] .\label{c5:Eq:incFieldNon}
\end{align}
\subsection{Boundary Conditions}
We now introduce the boundary conditions for the problem. It is sufficient to specify the behaviour along $y=ns^\pm$ for $n \in \mathbb{Z}$. We use $\Delta_n$ and $\Sigma_n$ to denote the difference and sum of a given quantity either side of the $n^\textnormal{th}$ blade or wake.
\subsubsection{Upstream Boundary Condition}
There may be no discontinuities upstream of the blade row. Consequently, we write
\begin{align}
\Delta_n \phi(x) &= 0, \qquad \qquad x<n d. \label{c5:Eq:upBC}
\end{align}
\subsubsection{Blade Surface Boundary Conditions}
We now introduce several possible blade surface boundary conditions that can be modelled with the present approach. The boundary conditions we consider are the classical impermeable, rigid blade, a porous blade without background flow, a porous blade with background flow, and a general impedance condition. The advantage of this approach is that a spectrum of boundary conditions of practical interest can be modelled without needing to modify the method of solution. As we shall see later, the effect of modifying the boundary condition is to modify the kernel in the ensuing Wiener--Hopf analysis.\\

\noindent {\textbf{Case 0}}

\noindent When the blade is rigid and impermeable, the no-flux condition is simply 
\begin{align}
\boldsymbol{u}_T \cdot \boldsymbol{n}  &= 0,  & n d <x<nd + 2, \; \; y = ns^{\pm} ,\label{c5:Eq:noflux1}
\end{align}
where $\boldsymbol{n}$ is the normal vector directed into the blade and $\boldsymbol{u}_T$ denotes the total (incident and scattered) velocity field. We sum the contributions of \eqref{c5:Eq:noflux1} either size of each blade to obtain
\begin{align}
 \Sigma_n \left[\frac{\partial \phi }{\partial y} \right](x)= &- 2 \namp \exp \left[\i( k_x\delta(nd + x) + \omega k_y n s)\right], \qquad n d <x<nd + 2. \label{c5:Eq:noFluxRigid}
\end{align}
where $w_0= \i \omega k_y $ is the non-dimensional amplitude of the normal velocity of the incident perturbation on the 0$^{\textnormal{th}}$ blade. This case has been considered in detail in previous research \citep{Glegg1999,Posson2010} and is therefore not considered further in the present work.\\

\noindent \textbf{{Case I}}

\noindent We now generalise the no-flux condition \eqref{c5:Eq:noflux1} to permit a proportional relationship between the normal velocity and pressure on the surface so that
\begin{align}
\boldsymbol{u}_T \cdot \boldsymbol{n} & = C_{I} p,  & n d <x<nd + 2, \; \; y = ns^{ \pm} , \label{c5:Eq:noflux2}
\end{align}
for some constant $C_{I}$. Summing the contributions either side of the blade in \eqref{c5:Eq:noflux2} yields
\begin{align}
\Sigma_n \left[ \frac{\partial \phi }{\partial y} \right] (x)=&- 2\namp \exp \left[\i( k_x\delta (nd + x) + \omega k_y n s)\right] \notag \\
&+ C_{I} \Delta_n\left[p \right](x), \qquad  nd <x<nd +2. \label{c5:Eq:darcyFlowNMF1}
\end{align}
In the absence of background flow ($M=0$), this condition becomes
\begin{align}
\Sigma_n \left[ \frac{\partial \phi }{\partial y} \right] (x)=&- 2\namp \exp \left[\i (k_x\delta (nd + x) + \omega  k_y n s)\right] \notag \\
&+ C_{I} \Delta_n\left[\phi \right](x), \qquad  nd <x<nd +2. \label{c5:Eq:darcyFlowNMF}
\end{align}
This boundary condition is capable of modelling a range of scenarios. Early research in aeroelasticity \citep{Crighton1970} used a boundary condition of the form of \eqref{c5:Eq:darcyFlowNMF} to analyse the scattering of aerodynamic sound by a compliant plate. In that study, the plate was modelled as possessing inertia, but negligible elastic resistance to deformation. Consequently, the pressure difference across the compliant plate was proportional to the specific mass of the plate multiplied by the acceleration so that, in the notation of the present work, $C_{I} = \left(-\i \omega m \right)^{-1}$ where $m$ is the (non-dimensional) mass of the plate per unit area.

\cite{Leppington1977} later showed that the compliant flat plate model is equivalent to that of a rigid screen with periodically arranged circular apertures when the apertures width is small and the wavelength is large compared with the separation. This model has gained popularity as a tool for analysing the aerodynamic scattering of porous edges \citep{Jaworski2013,Ayton2016a,Kisil2018}. In this case, the non-dimensional porosity parameter is
\begin{align*}
C_{I} &= \frac{\alpha_H K_R}{\pi R^2}.
\end{align*}
where $R$ is the radius of the circular apertures of radius, $K_R$ is the  Rayleigh conductivity, and the fractional open area is $\alpha_H$. \\

\noindent \textbf{{Case II}}

\noindent In the presence of a background flow, the boundary condition \eqref{c5:Eq:darcyFlowNMF1}  becomes
\iffalse
\begin{align*}
\frac{\partial \phi^\ast }{\partial y^\ast}(x^\ast,ns^\ast)+ \frac{\partial \phi_i^\ast}{\partial y^\ast}(x^\ast,n s^\ast) &= B^\ast \Delta_n p^\ast(x^\ast, n s^\ast), \qquad \qquad 0<x<2,
\end{align*}
where we have dropped the $z^\ast$ and $t^\ast$ dependence. Consequently, we obtain in non-dimensional variables
\fi
\begin{align}
\Sigma_n \left[\frac{\partial \phi }{\partial y} \right] (x)=&- 2\namp \exp \left[\i( k_x\delta (nd +x) + \omega  k_y n s)\right]\notag \\
& +  C_{II} \left(\i \omega \delta \Delta_n \left[\phi \right](x) - \Delta_n \left[ \phi_x \right] (x)\right), \qquad  nd <x<nd + 2, \label{c5:Eq:darcyFlow}
\end{align}
where $C_{II}$ is a constant and we have applied \eqref{c5:Eq:presTrans}. The definition of $C_{II}$ is dependent on the exact geometry of the apertures in the blade and flow speed above and below the blades. For example, for circular apertures in low speed flow, \cite{Howe1996a} calculates the revised Rayleigh conductivity parameter $K_R$ as
\begin{align*}
K_R = 2 R (\Gamma_R - \i \Delta_R),
\end{align*} 
where $\Gamma_R$ and $\Delta_R$ are real valued constants. Therefore, in contrast to case I boundary conditions, $C_{II}$ may have real and imaginary components. Moreover, different expressions are obtained for different shaped apertures.
\\

\noindent {\textbf{Case III}}

\noindent We may also consider the effects of an impedance boundary condition. In the presence of background flow, the impedance boundary condition is given by \cite{Myers1980}
\begin{align*}
\boldsymbol{u}_T \cdot \boldsymbol{n} & = \left(\i \omega + \boldsymbol{U}_0 \cdot \nabla - \boldsymbol{n} \cdot \left(\boldsymbol{n} \cdot \nabla \boldsymbol{U}_0\right)\right) \frac{p}{\i \omega Z}.
\end{align*}
The real part of the impedance $Z$ is termed the acoustic resistance and represents the energy transfer of of the blade: if $\Re[Z]>0$ then the blades absorb energy whereas if $\Re[Z]<0$ then the blades produce energy. Since the flow is uniform, this condition applied on the upper and lower surfaces of the blades becomes
\begin{align*}
v_T^{ \pm} & = \mp \left(-\i \omega + {U} \frac{\partial}{\partial x} + W \frac{\partial }{\partial z} \right) \frac{p}{\i \omega Z}.
\end{align*}
We now sum the upper and lower components of this impedance condition to obtain a condition on the sum of the velocity either side of the blade. In terms of non-dimensional variables, the condition becomes
\begin{align}
\Sigma_n \left[\frac{\partial \phi }{\partial y}\right](x) &=-2 \namp\exp \left[\i (k_x \delta (nd + x) + \omega k_y n s)\right] \notag \\
& + C_{III} \left(-2 \omega^2(1+ W k_z) \Delta_n\left[\phi\right](x) - 2 \i M^2 \omega\Delta_n\left[\phi_x\right](x) + \Delta_n \left[\phi_{x,x} \right](x)\right),\label{c5:Eq:bcImpedance}
\end{align}
where $C_{III} = U^{ 3}/(\i \omega Z) $.

The presence of higher order derivatives requires further regularity at the blades' edges. Since the blades are fixed (and are only locally reacting), we enforce
\begin{align*}
\Delta_n[\phi](0) = \Delta_n[\phi](2)=0.
\end{align*}

\noindent \textbf{Summary of Blade Surface Boundary Conditions}

We may characterise all the modified boundary conditions \eqref{c5:Eq:darcyFlowNMF}, \eqref{c5:Eq:darcyFlow} and \eqref{c5:Eq:bcImpedance} in the general form
\begin{align}
\begin{split}
\Sigma_n \left[\frac{\partial \phi }{\partial y} \right] (x) &= -2 \namp \exp \left[\i (k_x\delta (nd + x) + \omega  k_y n s)\right] \\
&+ \mu_0 \Delta_n \left[\phi\right](x) +\mu_1 \Delta_n \left[\phi_x\right](x) + \mu_2 \Delta_n \left[\phi_{x,x}\right](x), \qquad nd <x<nd + 2,
\end{split}
\label{c5:Eq:generalBC}
\end{align}
where the $\mu_n$ are summarised in table \ref{Tab:muSummary} for the different boundary conditions. Furthermore, in the present analysis we do not allow any added mass and enforce that there is no jump in the normal velocity either side of the plate. Accordingly, we may write
\begin{align}
\Delta_n \left[ \frac{\partial \phi}{\partial y} \right](x) &= 0, \qquad \qquad nd  < x< nd+ 2. \label{c5:Eq:bcNFB}
\end{align}
{\def\arraystretch{2}
	\setlength\arrayrulewidth{2pt}
	\setlength{\tabcolsep}{.4em}
\def\myCiteGlegg{1}
\def\myCitePosson{2}
\def\myCiteLepp{3}
\def\myCiteHowe{4}
\def\myCiteJaw{5}
\def\myCiteKisil{6}
\def\myCiteHoweA{7}
\def\myCiteMyers{8}
\def\myCiteBramb{9}
	\centering
	\begin{table}
		\centering
		\begin{tabular}{cccccc}
		\hline
		Case & Model &$\mu_0$ &  $\mu_1$ & $\mu_2$\\
		\hline
		\rowcolor{tablegray!30!white}
	\parbox[c]{2cm}{\centering Case 0 \\[0pt] [\myCiteGlegg,\myCitePosson]}& rigid, impermeable & 0 & 0 & 0 \\
\parbox[c]{2cm}{\centering Case I \\[0pt] [\myCiteLepp,\myCiteHowe,\myCiteJaw,\myCiteKisil] }  & \parbox[c]{5cm}{\centering porous, compliant\\(no background flow)}  &$C_I$ & 0 & 0 \\[6pt]
		\rowcolor{tablegray!30!white}
\parbox[c]{2cm}{\centering Case II \\[0pt] [\myCiteHoweA] } & \parbox[c]{5cm}{\centering porous \\(with background flow)}  & $\i \omega \delta C_{II}  $ & $-C_{II}$ & 0 \\[6pt]
\parbox[c]{2cm}{\centering Case III \\[0pt] [\myCiteMyers,\myCiteBramb]} & impedance & 
		$-2 \omega^2(1+ W k_z) C_{III}$ & $ - 2 \i M^2 \omega C_{III}$ & $C_{III}$ \\
		\end{tabular}
		\caption[Summary of possible boundary conditions.]{Summary of possible boundary conditions and corresponding $\mu_0$, $\mu_1$ and $\mu_2$ values for equation \eqref{c5:Eq:generalBC}. The references highlight relevant papers, although only [\myCiteGlegg,\myCitePosson] consider cascade geometries and are restricted to rigid boundaries. The reference numbers correspond to [\myCiteGlegg] \citep{Glegg1999}, [\myCitePosson] \citep{Posson2010}, [\myCiteLepp] \citep{Leppington1977}, [\myCiteHowe] \citep{Howe1998}, [\myCiteJaw] \citep{Jaworski2013}, [\myCiteKisil] \citep{Kisil2018}, [\myCiteHoweA] \citep{Howe1996a}, [\myCiteMyers] \citep{Myers1980}, and [\myCiteBramb] \citep{Brambley2009}.} \label{Tab:muSummary}
	\end{table}
}
\subsubsection{Downstream Boundary Conditions}
Downstream, we require the pressure jump across the wake to vanish:
\begin{align}
\Delta_n \left[ p\right](x) = 0, \qquad \qquad x>2 + n d.
\end{align}
By employing the pressure definition \eqref{c5:Eq:presTrans} and integrating with respect to $x$, we may write the above condition as 
\iffalse
\begin{align*}
\left(\i \omega \delta \left[\Delta_n\right] \phi - \Delta_n \left[\phi_x \right]\right) = 0 \qquad \qquad x>2.
\end{align*}
This expression may be integrated to obtain
\fi
\begin{align}
\Delta_n \left[\phi\right](x) & = 2 \pi \i P \exp \left[\i {\omega}\delta x\right], \qquad \qquad x>nd + 2, \label{c5:Eq:npjWake}
\end{align}
where $P$ is a constant of integration that will be specified by enforcing the Kutta condition.

Additionally, the normal velocity across the wake must vanish, i.e.
\begin{align}
\Delta_n \left[\frac{\partial \phi}{\partial y} \right] (x) = 0, \qquad \qquad x>nd + 2. \label{c5:Eq:bcNFW}
\end{align}
\subsubsection{Summary of Full Boundary Conditions}
All in all, we have five boundary conditions. In the upstream region we do not permit any discontinuities \eqref{c5:Eq:upBC}. Along each blade we have a relation for the sum of normal velocities either side of the blade \eqref{c5:Eq:generalBC}, and do not permit a jump in normal velocity across the blade \eqref{c5:Eq:bcNFB}. Finally, across the wake we do  not  permit a jump in pressure \eqref{c5:Eq:npjWake} or normal velocity \eqref{c5:Eq:bcNFW}. The boundary conditions are illustrated in figure \ref{Fig:boundaryConditions}.
\tikzexternalexportnextfalse%
\begin{figure}
	\centering\scalebox{.9}{
	\begin{tikzpicture}[scale=1.1]
\node at (0,1) {};
	\coordinate (coord) at (-2,1);
	\path[draw, dotted] (-3,0)--(0,0) node[midway,above] {\scriptsize no discontinuities \eqref{c5:Eq:upBC}};
	\node[anchor=east] at (-3.3,0) {$y=ns^\pm$};
		\node[anchor=north] at (0,-\bh) {$x=nd$};	
		\node[anchor=north] at (\AScale,0) {$x=nd+2$};
	\draw[dotted] (\AScale,0)--($(\AScale+4,0)$) node[midway,above] {
	\begin{tabular}{c}
\scriptsize	no pressure jump \eqref{c5:Eq:npjWake} \&\\[-.4em]
\scriptsize no normal velocity jump \eqref{c5:Eq:bcNFW}
	\end{tabular}};
	\draw[line width = 1mm] (0,0)--(\AScale,0);
	\draw[gray,line width = 1mm, dashed]  (0,0)--(\AScale,0) node[ black, midway,above] {
		\begin{tabular}{c}
\scriptsize		complex boundary \eqref{c5:Eq:generalBC} \&\\[-.4em]
\scriptsize		no normal velocity jump \eqref{c5:Eq:bcNFB}
		\end{tabular}};
	\end{tikzpicture}}
	\caption{Schematic illustrating where each boundary condition is applied.}
	\label{Fig:boundaryConditions}
\end{figure}
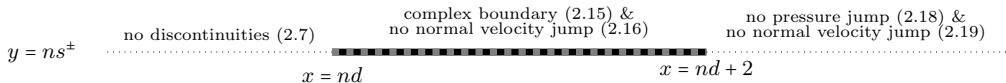
This completes the description of the mathematical model.

\section{Solution} 
\label{c5:Sec:MathSol}
We now present the mathematical solution to the Helmholtz equation \eqref{c5:Eq:helmholtz} subject to the boundary conditions \eqref{c5:Eq:upBC}, \eqref{c5:Eq:generalBC}, \eqref{c5:Eq:bcNFB}, \eqref{c5:Eq:npjWake} and \eqref{c5:Eq:bcNFW}. For clarity, we present a ``road map'' of the solution in figure \ref{Fig:roadmap}. %

\begin{figure}
\colorlet{colD}{red!40}
\colorlet{colIP}{cyan!40}
\colorlet{colV}{blue!40}
\colorlet{colBorder}{gray!70}
\tikzset
  {mybox/.style=
    {rectangle,rounded corners,drop shadow,minimum height=1cm,
     minimum width=2cm,align=center,fill=#1,draw=colBorder,line width=1pt
    },
   myarrow/.style=
    {draw=#1,line width=2pt,-LaTeX,rounded corners
    },
   mylabel/.style={text=#1}
  }
\centering
\begin{tikzpicture}
  \node[mybox=colD] (D) {Mathematical model};
  \node[mybox=colIP,right=of D] (IP) {Generate integral equation};
  \node[mybox=colV,below right= .5cm and 1cm of IP] (V) {Wiener--Hopf method};
    \node[mybox=colIP,below = 2cm of IP] (V2) {Expression for $\phi$};
        \node[mybox=colD,below = 2cm of D] (V3) {Model solved};
    \draw[dashed] ($(V.west)+(-.5,3)$)--($(V.west)+(-.5,-2.5)$);
    \node at (3,1.2) {\underline{\smash{Physical plane}}};
    \node at (11.4,1.2) {\underline{\smash{Spectral plane}}};
      \draw[myarrow = black] (D)--(IP);
  \draw[myarrow = black] (IP.east)--(V.north) node[midway,above,sloped] {FT};
    \draw[myarrow = black] (V.south)--(V2.east) node[midway,below,sloped] {IFT};
      \draw[myarrow=black] (V2) -- (V3);
\end{tikzpicture}
\caption{Schematic diagram illustrating the solution method. The abbreviations ``FT'' and ``IFT'' stand for ``Fourier transform'' and ``Inverse Fourier transform'' respectively. \label{Fig:roadmap}}
\end{figure}

As is typical in cascade acoustics problems we employ integral transforms to obtain a solution that is uniformly valid throughout the entire domain \citep{Peake1992,Glegg1999,Posson2010}. However, $\phi$ is discontinuous across each blade and wake in the $y$-direction. Therefore, $\partial \phi/\partial y$ possesses non-integrable singularities thus preventing the application of a Fourier transform. Consequently, we must regularise the derivatives of $\phi$ and to remove these non-integrable singularities. To this end, we introduce introduce generalised derivatives \citep{Lighthill1958} and write 
\begin{align}
\frac{\partial^2 \phi}{\partial y^2} &=\frac{\tilde{\partial}^2\phi}{\tilde{\partial}y^2} - \sum_{n=-\infty}^{\infty} \Delta_n \left[\phi \right](x) \delta^{\prime}(y-n s) - \sum_{n=-\infty}^{\infty} \Delta_n \left[\frac{\partial \phi}{\partial y}\right](x) \delta(y- n s), \label{c5:Eq:GeneralisedLaplace}
\end{align}
where $\tilde{\partial}$ represents the partial derivative with discontinuities removed. The second term in \eqref{c5:Eq:GeneralisedLaplace} vanishes because there is zero jump in normal velocity across the blade \eqref{c5:Eq:bcNFB} and wake \eqref{c5:Eq:bcNFW}.

\def\miBP{\sigma^\prime}
\def\hPS{d}
\def\vPS{s}

The scattered solution must obey the same quasi-periodicity relation as the incident field  \eqref{c5:Eq:incField}. Consequently, the scattered acoustic potential function in the entire plane may be reduced to that of a single channel in the domain by writing
\begin{align}
\phi(x + n d, y + n s) = \phi(x, y) \e^{\i n \sigma^\prime}, \label{c5:Eq:InterBladePhase}
\end{align}
where the inter-blade phase angle for $\phi$ is $\miBP = k_x\delta d + \omega  k_y s$. Substituting \eqref{c5:Eq:GeneralisedLaplace} into the Helmholtz equation \eqref{c5:Eq:helmholtz} and applying the inter-blade phase angle relation \eqref{c5:Eq:InterBladePhase} yields 
\begin{align}
\frac{\partial^2 \phi}{\partial x^2} + \frac{\partial^2 \phi}{\partial y^2} + \omega^2 w^2 \phi=  & \sum_{n=-\infty}^{\infty} \Delta_0 \left[\phi \right] (x- n \hPS) \delta^{\prime} (y-n \vPS)\e^{\i n \miBP}. \label{c5:Eq:OeGE}
\end{align}
We define the Fourier integral transform and its inverse as
{\setlength{\jot}{3pt}
	\begin{align*}
	F(\gamma,\eta)&=\frac{1}{(2\pi)^2} \int_{-\infty}^{\infty} \int_{-\infty}^{\infty} f(x,y)\e^{\i \gamma x+\i \eta y} \; \d x\; \d y,\\
	f(x,y)&= \phantom{\frac{1}{(2\pi)^2}}  \int_{-\infty}^{\infty} \int_{-\infty}^{\infty}F(\gamma, \eta) \e^{-\i \gamma x-\i \eta y} \; \d \gamma \; \d \eta.
	\end{align*}
}
Applying the transform to the left hand side of \eqref{c5:Eq:OeGE} yields
\begin{align}
(-\gamma^2-\eta^2+\omega ^2 w^2) \Phi(\gamma,\eta) &=\frac{1}{2 \pi \i} \sum_{n=-\infty}^{\infty} \eta D(\gamma) \e^{\i n(\miBP +\gamma \hPS + \eta \vPS)}, \label{c5:Eq:FTge}
\end{align}
The problem is now to find $D(\gamma)$ which represents the Fourier transform of the jump in acoustic potential either side of the blade and wake. We invert the Fourier transform to obtain an expression for the acoustic potential in terms of $D$:
\begin{align}
\phi(x,y)=\frac{1}{2 \pi \i }\int_{-\infty}^{\infty} \int_{-\infty}^{\infty}  \frac{\e^{-\i \gamma x- \i \eta y}}{\omega^2 w^2-\eta^2-\gamma^2} \cdot 
\sum_{n=-\infty}^{\infty} \eta D(\gamma) \e^{\i n(\miBP+\gamma \hPS + \eta \vPS)} \d \gamma \d \eta . \label{c5:Eq:FourierInversionGeneral}
\end{align}
The $\eta$ integral may be performed by closing the contour of integration in an appropriate upper or lower half-plane to obtain
\begin{align}
\phi(x,y)=- \frac{1}{2}  \int_{-\infty}^{\infty}  \sum_{n=-\infty}^{\infty}  D(\gamma) \sgn\left( n\vPS - y \right) \e^{\i n(\miBP+\gamma \hPS ) + \i \zeta|n \vPS- y| }\e^{-\i \gamma x} \d \gamma, \label{c5:Eq:AcField}
\end{align}
where $\zeta=\sqrt{\omega^2w^2-\gamma^2}$. The branch cut is defined so that $\Im \left[ \zeta \right]>0$ when $\gamma$ is in a strip for the Wiener--Hopf method.

To obtain an equation for the unknown $D$ we must apply the relevant boundary conditions. We first differentiate \eqref{c5:Eq:AcField} with respect to $y$ and consider the limits $y \rightarrow 0^\pm$. Summing the contributions from each of these limits yields the integral equation
\begin{align}
\Sigma_0\left[\frac{\partial \phi}{\partial y} \right] (x)&= -4 \pi \int_{-\infty}^{\infty} D(\gamma) j(\gamma)\e^{-\i \gamma x} \d \gamma , \label{c5:Eq:intEq}
\end{align}
where
\begin{align}
j(\gamma)&=  \frac{\i \zeta}{4 \pi} \sum_{n \in \mathbb{Z}} \e^{ \i n (\miBP+ \gamma \hPS)+ \i \zeta |n \vPS|}  = \frac{\zeta}{4 \pi} \cdot  \frac{\sin \left( \zeta \vPS \right)}{\cos \left( \zeta \vPS \right) - \cos \left( \gamma \hPS + \miBP \right)} . \label{c5:Eq:jSumTrig}
\end{align}
We now solve equation \eqref{c5:Eq:intEq} subject to the remaining boundary conditions applied on $y = 0$:
\begin{align}
\addtocounter{equation}{1}
\Delta_0 \left[\phi \right](x) &= 0, &x<0, \label{c5:Eq:upBC2}\tag{\theequation .a}\\[1em]
\Sigma_0 \left[\frac{\partial \phi }{\partial y} \right] (x) &= \mu_0 \Delta_0 \left[\phi \right](x) +\mu_1 \Delta_0 \left[\phi_x\right](x) + \mu_2 \Delta_0\left[\phi_{x,x}\right](x) \notag \\
&-2 \namp \exp \left[\i k_x \delta  x\right] & 0 <x<2,\label{c5:Eq:generalBC2} \tag{\theequation .b} \\[1em]
\Delta_0 \left[\phi \right](x) & = 2 \pi \i P \exp \left[\i \delta {\omega} x\right],  & x>2. \label{c5:Eq:npjWake2} \tag{\theequation .c}
\end{align}
The system (\ref{c5:Eq:intEq}, \ref{c5:Eq:upBC2}, \ref{c5:Eq:generalBC2}, \ref{c5:Eq:npjWake2}) represents an integral equation subject to mixed value boundary conditions. We solve this system via the Wiener--Hopf method as detailed in appendix \ref{c5:Sec:WHsol}. The solution for $D$ is given by
\begin{align}
D(\gamma) =& \frac{\namp}{(2 \pi)^2 \i ( \gamma + \delta k_x)K_-(-\delta k_x) K_+(\gamma)} + \frac{\namp \delta ( \omega - k_x) \e^{2 \i (\gamma + \delta k_x)}}{(2 \pi)^2 \i(\gamma+ \delta k_x)(\gamma + \delta \omega ) K_+(- \delta k_x)K_-(\gamma)} \notag \\
-& \sum_{n=0}^{\infty} \frac{\left(\Asig + \Csig \right) \e^{2 \i (\gamma - \dModeM) }}{\i(\gamma + \delta \omega )(\gamma - \dModeM)}\cdot\frac{K_-(\dModeM) }{K_-(\gamma)} - \sum_{n=0}^{\infty} \frac{\Bsig}{\gamma - \dModeP} \cdot \frac{K_+(\dModeP)}{K_+(\gamma)}, \label{c5:Eq:Dsol}
\end{align}
where all new variables are defined in appendix \ref{c5:Sec:WHsol}. Note that the solution is identical to that for the rigid cascade \citep{Glegg1999}, except the original Wiener--Hopf kernel $j$ is now replaced with the modified kernel $K$. This original kernel is recovered when \mbox{$\mu_1 = \mu_2 = \mu_3 =0$} and the solution reduces to that derived by \cite{Glegg1999}.
\subsection{Inversion of Fourier Transform} \label{c5:Sec:FourInversion}

We now invert the Fourier transform of the acoustic field in the previous section. Since $D$ is now known, the Fourier inversion integral in \eqref{c5:Eq:AcField} can now be computed. Similarly to the analysis in \cite{Posson2010}, the inversion is performed by splitting the physical domain into five regions as illustrated in figure \ref{c5:Fig:FourInvRegions}.
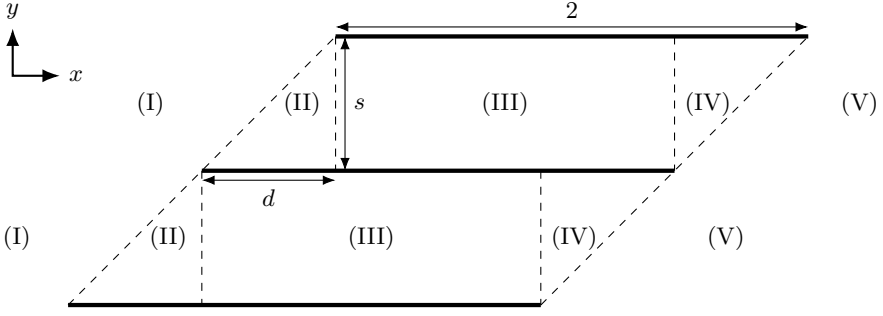
\begin{figure}
	\centering
	\begin{tikzpicture}[scale=1.25,remember picture]
	\coordinate (coord) at (-2,1);
	\coordinate (le1) at ($({\TikzDistb*sin(\TikzStagb)},{\TikzDistb*cos(\TikzStagb)})$);
	\coordinate (le1x) at ($({\TikzDistb*sin(\TikzStagb)},0)$);	
	\coordinate (le2x) at ($(0,{-\TikzDistb*cos(\TikzStagb)})$);
	\coordinate (le2) at ($({-\TikzDistb*sin(\TikzStagb)},{-\TikzDistb*cos(\TikzStagb)})$);
	\draw[ultra thick] (0,0)--(\AScale,0);
	\draw[ultra thick] (le1)--($(le1)+(\AScale,0)$);
    \draw[ultra thick] (le2)--($(le2)+(\AScale,0)$);

	\draw[dashed] (le2)--(le1);
	\draw[dashed] ($(le2)+(\AScale,0)$)--($(le1)+(\AScale,0)$);
	\draw[dashed] (le1)--(le1x);
		\draw[Latex-Latex] ($(le1)+(.1,0)$)--($(le1x)+(.1,0)$) node[midway,right] {$\vPS$};
			\draw[Latex-Latex,] (0,-.1)--($(le1x)-(0,.1)$) node[midway,below] {$\hPS$};
 \draw[dashed] ($(\AScale,0)-(le1x)$)--($(\AScale,0)+(le2)$);
\draw[dashed] ($(\AScale,0)-(le2x)$)--(\AScale,0);
	\draw[dashed] (0,0)--(le2x);
	\myaxis{(coord)}{0}{.5}{$x$}{$y$}
	
	\node at ($(-\AScale/4,0)+.5*(le1)$) {\ra};
	\node at ($(-\AScale/4,0)-.5*(le1)$) {\ra};
	
	\node at ($.25*(le1x)+.5*(le1)$) {\rb};
	\node at ($.25*(le1x)-.5*(le1)$) {\rb};
		
	\node at ($(\AScale/2,0)+.5*(le1)$) {\rc};
	\node at ($(\AScale/2,0)-.5*(le1)$) {\rc};
			
	\node at ($(\AScale,0)-.25*(le1x)+.5*(le1)$) {\rd};
	\node at ($(\AScale,0)-.25*(le1x)-.5*(le1)$) {\rd};						
	
	\node at ($(5*\AScale/4,0)-.5*(le1)$) {\re};
	\node at ($(5*\AScale/4,0)+.5*(le1)$) {\re};
		
	\draw[Latex-Latex] ($(le1)+(0,.1)$)--($(le1)+(\AScale,.1)$) node[midway, above] {$2$};
	\end{tikzpicture}
	\caption{Diagram indicating the different regions in the $(x,y)-$plane which require different areas of contour integration in the Fourier inversion.} \label{c5:Fig:FourInvRegions}
\end{figure}

The details can be found in Appendix \ref{Ap:FourInv} and the final results are stated below. All undefined functions are defined in Appendices \ref{c5:Sec:WHsol} and \ref{Ap:FourInv}.

\subsubsection[Upstream Region \ra]{Upstream Region \ra}
In the upstream region, 
\begin{align}
\phi(x,y) &=  \pi \i \sum_{m=-\infty}^{\infty} D^{(1,3)}(\lambda_m^+) A^r(\aModeP,x,y).  \label{c5:Eq:solU}
\end{align}
\subsubsection[Inter-Blade Upstream Region \rb]{Inter-Blade Upstream Region \rb}
In the inter-blade upstream region, 
\begin{align*}
\phi(x,y)=&  -\pi \sum_{n=0}^{\infty} \frac{\Asig+\Csig}{\dModeM+\delta \omega} A_d(\dModeM,x,y) - \pi \i \sum_{n=0}^{\infty}  \Bsig A_d(\dModeP,x,y) \\
&-\pi \frac{A_d(-\delta k_x,x,y)}{K(-\delta k_x)} \cdot \frac{w_0}{(2 \pi)^2}+ \pi \i \sum_{m=-\infty}^{\infty} D^{(1,3)}(\lambda_m^+) A_{u}^r(\aModeP,x,y). 
\end{align*}
\subsubsection[Inter-Blade Inner Region \rc]{Inter-Blade Inner Region \rc}
In the inter-blade inner region, 
\begin{align*}
\phi(x,y)=& -\pi \sum_{n=0}^{\infty} \frac{\Asig+\Csig}{\dModeM+\delta \omega} A(\dModeM,x,y) - \pi \i \sum_{n=0}^{\infty}  \Bsig A(\dModeP,x,y) \\
&-\pi \frac{A(-\delta k_x,x,y)}{K(-\delta k_x)} \cdot \frac{w_0}{(2 \pi)^2}.
\end{align*}
\subsubsection[Inter-Blade Downstream Region \rd]{Inter-Blade Downstream Region \rd}
In the inter-blade downstream region, 
\begin{align*}
\phi(x,y)=& -\pi \sum_{n=0}^{\infty} \frac{\Asig+\Csig}{\dModeM+\delta \omega} A_u(\dModeM,x,y) - \pi \i \sum_{n=0}^{\infty}  \Bsig A_{u}(\dModeP,x,y) \\
&-\pi \frac{A_u(-\delta k_x,x,y)}{K(-\delta k_x)} \cdot \frac{w_0}{(2 \pi)^2} \\
&- \pi \i \sum_{m=-\infty}^{\infty} D^{(2,4)}(\aModeM) A_{d}^r(\aModeM,x,y)  + \pi \i P A_d(-\delta \omega,x,y) .
\end{align*}
\subsubsection[Downstream Region \re]{Downstream Region \re}
In the downstream region, 
\begin{align}
\phi(x,y) =& - \pi \i \sum_{m=-\infty}^{\infty} D^{(2,4)}(\aModeM) A^r(\aModeM,x,y) + \pi \i P A(-\delta \omega,x,y). \label{c5:Eq:solD}
\end{align}

\section{Results} \label{c5:Sec:Results}
We now use the solution of the previous section to calculate aeroacoustic quantities of practical interest. In particular, we are interested in the role that porosity may play in noise reduction. Accordingly, the results in this section are focused on porous blades, which is a type II boundary condition in the present nomenclature. We now use the analytic solution derived in section \ref{c5:Sec:FourInversion} to explore the aeroacoustic performance of a blade row with modified boundary conditions. In particular, we focus on the role of porosity due to its potential to attenuate sound, as seen previously in \cite{Jaworski2013} for trailing-edge scattering. The results in the present research also show significant sound reductions for modest changes in porosity. We argue that this is attributed to the strong effect of porosity on the duct modes and unsteady loading: in cascade configurations, the blade loading changes the upstream and downstream flows and therefore influences the intensity of the scattered sound. 

{\def\arraystretch{2.5}
	\setlength\arrayrulewidth{2pt}
	\setlength{\tabcolsep}{.3em}
	\centering
	\begin{table}
		\centering
		\begin{tabular}{cccccc}
		\hline
		Reference & \parbox[c]{2cm}{\centering gap-to-chord ratio\\ $\Delta/2$} & \parbox[c]{2cm}{\centering stagger angle\\ $\chi$} & \parbox[c]{2cm}{\centering Mach number \\$M$} & \parbox[c]{2cm}{\centering reduced frequency\\ $\omega$} & \parbox[c]{2cm}{\centering inter-blade phase angle\\ $\sigma$}  \\
		\hline
		\cite{Glegg1999} & 0.6 & $40^\circ$ & 0.3 & 0--40 & $3 \pi/4$  \\
		\rowcolor{tablegray!30!white}
		Case A & .5 &$0^\circ$ & 0.5&$5\pi/4$&$5\pi/2$ \\
		Case B & .5 &$0^\circ$ & 0.5&$13\pi/4$&$13\pi/2$  \\
				\rowcolor{tablegray!30!white}
		Case C & 1 &$30^\circ$ & 0.3 & 5 & $2 \pi/3$\\
		Case D & .8& $10^\circ$ & 0.4 &$0-20$ & $4 \pi /3$\\
						\rowcolor{tablegray!30!white}
		Case E &$ 0.6$ & $ 40^{\circ}$& 0.3 & 12.5 & 4.3  \\ %
		\end{tabular}
		\caption{Summary of parameters used in results section.} \label{Tab:resSummary}
	\end{table}
}

\subsection{Validation}
We first present a comparison to three previous solutions for cascades of rigid blades in figure \ref{Fig:val}. Firstly, we compare our results to a solution exploiting the Wiener--Hopf method \citep{Posson2010}. Secondly, we compare to a quasi-numerical a mode-matching technique \citep{Bouley2017}, and thirdly, we compare to a fully numerical method \citep{Hall1997}. The solutions show excellent agreement -- in fact our solution is mathematically equivalent to the of \cite{Posson2010} in the special case where the blades are rigid. It is worth noting at this point that our solution satisfies the Kutta condition, as indicated by the pressure jump vanishing at the trailing-edge.
\begin{figure}
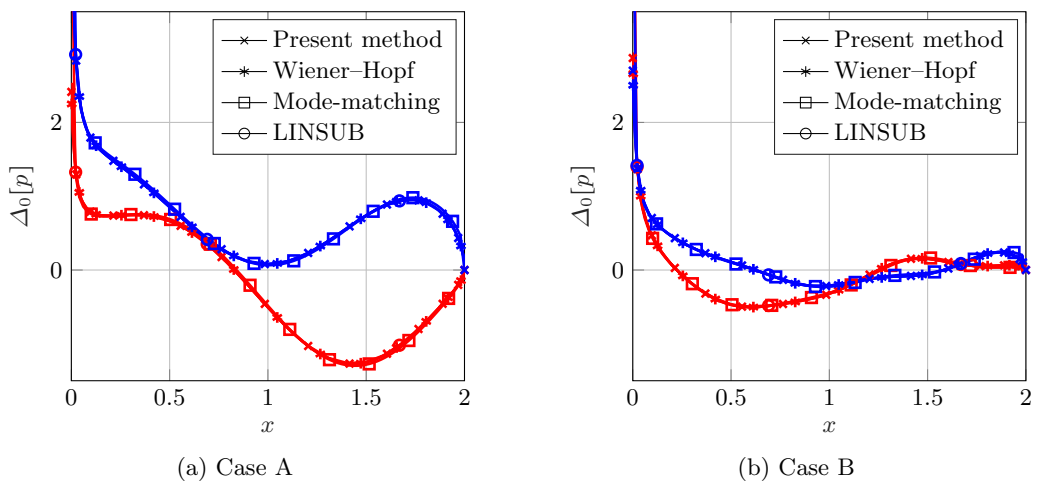

	\begin{subfigure}[b]{.45\linewidth}
		\centering
		\setlength\fheight{.8\textwidth}
		\setlength\fwidth{.9\textwidth}
		\input{images/HallCase1Col.tex}
		\caption{Case \caseHallA}
		 \label{Fig:validationA}
	\end{subfigure}
	\hfill
	\begin{subfigure}[b]{.45\linewidth}
		\centering
		\setlength\fheight{.8\textwidth} 
		\setlength\fwidth{.9\textwidth}
		\input{images/HallCase2Col.tex}
		\caption{Case \caseHallB}
		 \label{Fig:validationB}
	\end{subfigure} 
	\caption{Comparisons of surface pressure jump for flat plates for case 0 boundary conditions for cases defined in \ref{Tab:resSummary}. The real and imaginary parts (\newBlueLine~and~\redLine~respectively) of the pressure jump is compared to three alternative solutions: a Wiener--Hopf method \citep{Posson2010}, a mode-matching technique \citep{Bouley2017}, and a numerical method \citep{Hall1997}.}
	\label{Fig:val}
	\end{figure}

\subsection{Duct Mode Analysis}

The cascade may be divided into five regions as illustrated in figure \ref{c5:Fig:FourInvRegions}. Since we only consider the case where the blades are overlapping, the inter-blade inner region (called region III) is bounded by adjacent blades and therefore may be viewed as a duct. The solution in the duct is matched to the upstream and downstream regions by virtue of the inter-blade upstream region (II) and the inter-blade downstream region (IV). The duct region is therefore essential in establishing the relationship between the upstream and downstream regions, and controls the unsteady lift and sound power output. We now explore the behaviour of the solution in the duct in order to understand the effects of blade porosity.

The acoustic potential in the duct may be expanded into a sum of exponential functions whose modes are the ``duct modes" \citep{Glegg}. When the blades are rigid, the duct has rigid walls and the modes are given by the standard formula
\begin{align}
\hat{\theta}_n^\pm = \pm \sqrt{\omega^2 w^2 - \left(\frac{n \pi }{s}\right)^2}, \label{Eq:rigidDM}
\end{align}
where we have used $\hat{\cdot}$ to indicate that this solution is valid for the rigid case $C_{II}=0$.
Conversely, there is no simple expression for the duct modes when the blades are porous; for porosity constant $C_{II}$ the duct modes satisfy the transcendental equation
\begin{align}
\zeta(\theta_n^\pm) \sin(s \zeta(\theta_n^\pm)) = -\i C_{II}\left(\omega \delta +\theta_n^\pm\right) \left(\cos(s \zeta(\theta_n^\pm)) - \cos ( d \theta_n^\pm + \miBP)\right). \label{Eq:dmSol}
\end{align}
The solutions for large $\theta_n^\pm$ are available via the asymptotic analysis presented in appendix \ref{c5:Sec:zerosFac}, but otherwise the solutions must generally be found with a numerical root finding algorithm. 

It is straightforward to see from \eqref{Eq:rigidDM} that the $n$-th rigid duct mode is pure imaginary when $\omega w <n \pi/s$ and pure real when $\omega w>n \pi /s $. These conditions correspond to the duct mode being cut-on or cut-off. However, inspection of \eqref{Eq:dmSol} reveals that for an arbitrary finite porosity constant (but not pure imaginary), it is impossible for the duct modes to be cut-on. This is  readily seen by noting that there are no real solutions to equation \eqref{Eq:dmSol}. If a real root did exist, then the left hand side would be pure real. However, in that case the right hand side would be pure imaginary and we have a contradiction. Consequently, for any non-imaginary porosity coefficients, the duct modes are always complex and never pure real. Accordingly, blade porosity effects a reduction in the magnitude of the pressure field in the inter-blade inner region which, when matched to the upstream and downstream regions, results in a reduction in the far-field scattered sound.

The dependence of the duct modes on blade porosity is illustrated in figure \ref{Fig:dmTraj} for two frequencies. There are no acoustic modes cut-on in figure \ref{Fig:dmTraja}, whereas there are two acoustic modes cut on in figure \ref{Fig:dmTrajb}. Modes located in $\ULHP$ are propagating in the upstream and downstream directions respectively. We consider a range of arguments for the porosity coefficient to represent a phase difference between the pressure jump and normal velocity, which is permitted due to the presence of the background flow \citep{Howe1996a}. Evidently, the relationship between the duct modes is highly complex, although some general trends may be observed. In compliance with the argument in the preceding paragraph, all values of porosity (except pure imaginary) perturb the cut-on duct modes away from the real line. For zero porosity, the duct modes are located at the rigid duct modes, $\hat{\theta}_n^\pm$. As the porosity is increased, the duct modes follow a path that generally ends at either the convected mode ($-\omega \delta$) or an acoustic mode ($\lambda_n^\pm$).

\setlength{\fheight}{8cm}
\begin{figure}
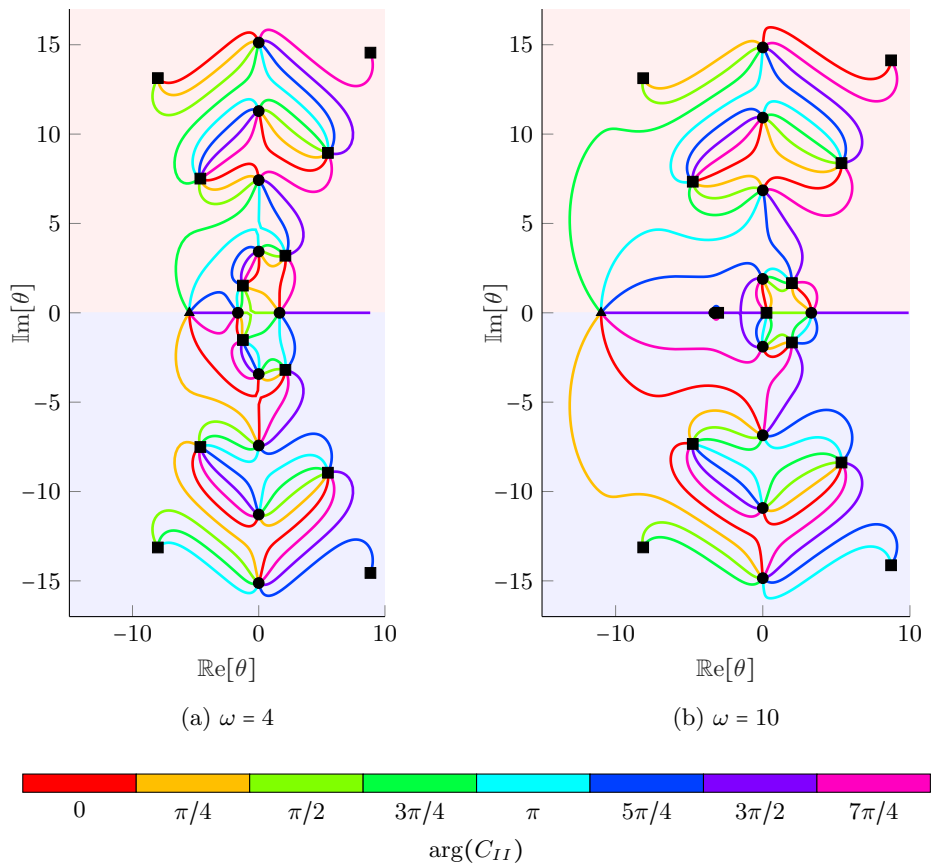

\centering
\begin{subfigure}[!h]{.45\linewidth}
\centering
\setlength{\fwidth}{1.5\linewidth}
\input{images/mode-traj-1.tex}
\caption{$\omega = 4$}
\label{Fig:dmTraja}
\end{subfigure}
\hfil
\begin{subfigure}[!h]{.45\linewidth}
\centering
\setlength{\fwidth}{1.5\linewidth}
\input{images/mode-traj-2.tex}
\caption{$\omega = 10$}
\label{Fig:dmTrajb}
\end{subfigure}

\vspace{.5cm}

\begin{tikzpicture}
\begin{scope}[scale = 6]
\foreach \x in {0,0.125,...,.875} {
	\definecolor{currentcolor}{hsb}{\x, 1, 1}
\draw[fill = currentcolor] (-1+2*\x,0) rectangle (-1+2*\x+.25,.04);
}
\node[below] at (.875,0) {$7\pi/4$};
\node[below] at (.625,0) {$3\pi/2$};
\node[below] at (.375,0) {$5\pi/4$};
\node[below] at (.125,0) {$\phantom{/}\pi\phantom{/}$};
\node[below] at (-.125,0) {$3\pi/4$};
\node[below] at (-.375,0) {$\pi/2$};
\node[below] at (-.625,0) {$\pi/4$};
\node[below] at (-.875,0) {$0$};
\node[below = .5cm] at (0,0) {$\arg(C_{II})$};
\end{scope}
\end{tikzpicture}

\caption{The trajectories of the duct modes for a range of (complex) porosity coefficients for case C described in table \ref{Tab:resSummary}, with $k_x=4$. For example, real values of $C_{II}$ are illustrated in \textcolor{red}{red}. The duct modes for rigid blades (i.e. $C_{II}=0$) are denoted by $\blacksquare$, the acoustic modes, $\lambda_m$, are labelled denoted by $\CIRCLE$ and the convected mode ($-\omega \delta$) is denoted by $\blacktriangle$. The upper half plane $\UHP$ is shaded in red and the lower half plane $\LHP$ is shaded in blue.}
\label{Fig:dmTraj}
\end{figure}

It is instructive to inspect the asymptotic forms of the duct modes for small and large values of porosity. For small porosity coefficients ($C_{II}\ll1$,  $\arg(C_{II}) \neq \pm \pi/2$), the roots are a small perturbation away from the rigid duct modes:
\begin{align}
\theta_n^\pm &=  \hat{\theta}_n^\pm + C_{II} \frac{\i (\omega \delta + \hat{\theta}_n^\pm)}{s(1+\delta_{0,n})\hat{\theta}_n^\pm} \left(1-(-1)^n \cos (d \hat{\theta}_n^\pm + \miBP) \right) + o(C_{II}^2). \label{Eq:smallApprox}
\end{align}
Conversely, for large porosity coefficients ($C_{II}\gg 1$, $\arg(C_{II}) \neq \pm \pi/2$), the duct modes are a small perturbation away from either the hydrodynamic mode or the acoustic modes:
\begin{align}
\addtocounter{equation}{1}
\theta_0 &= -\omega \delta + \frac{1}{C_{II}} \cdot  \frac{\i \zeta(-\omega \delta) \sin(s \zeta(-\omega \delta)}{\cos(s \zeta(-\omega \delta)) - \cos(-d\omega \delta \miBP)}  + o(C_{II}^{-1}),\tag{\theequation.a} \label{Eq:largeApproxWake}&\\
\theta_n^\pm &= \lambda_n^\pm \pm \frac{1}{C_{II}} \cdot \frac{\i (\zeta_n^\pm)^2}{(\lambda_n^\pm + \omega \delta)\delta \sqrt{\omega^2 w^2 - f_n^2}} + o(C_{II}^{-1}),& n \neq 0. \tag{\theequation.b} \label{Eq:largeApprox}
\end{align}
\begin{figure}
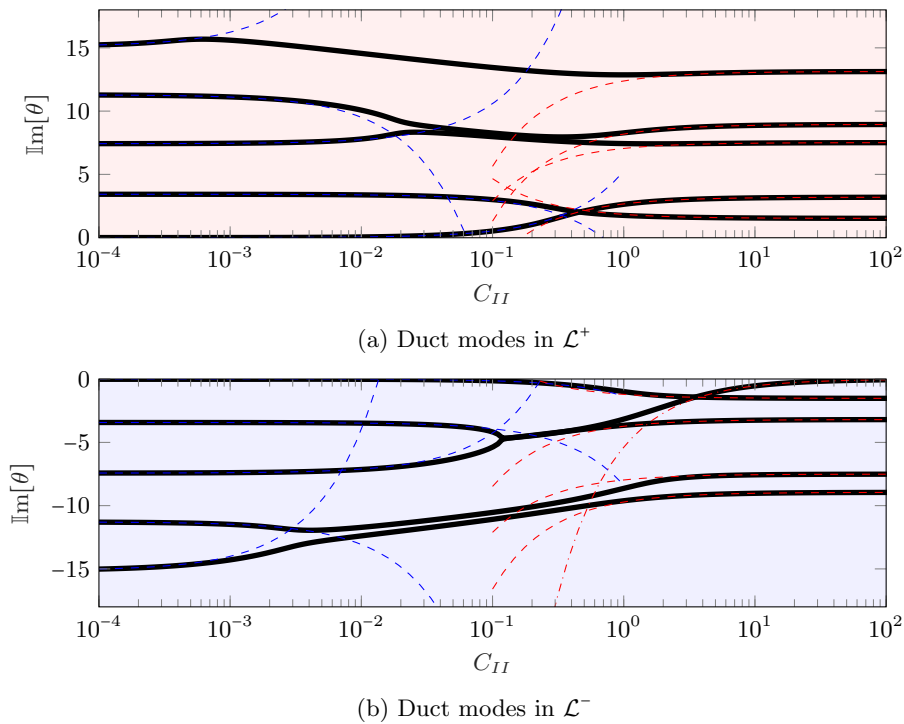

\centering
\begin{subfigure}{\linewidth}
\centering
\setlength{\fheight}{3cm}
\setlength{\fwidth}{.8\linewidth}
\input{images/asympU.tex}
\caption{Duct modes in $\UHP$}
\end{subfigure}

\begin{subfigure}{\linewidth}
\centering
\setlength{\fheight}{3cm}
\setlength{\fwidth}{.8\linewidth}
\input{images/asympL.tex}
\caption{Duct modes in $\LHP$}
\end{subfigure}

\caption{The imaginary parts of the duct modes as a function of (real) porosity coefficient $C_{II}$. The asymptotic approximations for small porosity coefficients \eqref{Eq:smallApprox} as illustrated by the dashed \textcolor{blue}{blue} lines and the asymptotic approximations for large porosity coefficients (\ref{Eq:largeApproxWake}, \ref{Eq:largeApprox}) are denoted by the dashed \textcolor{red}{red} lines. The cascade parameters are defined in case C in table \ref{Tab:resSummary} with $k_x=4$.}
\label{Fig:asympRoots}
\end{figure}
In figure \ref{Fig:asympRoots} we illustrate the imaginary part of the duct modes as a function of porosity coefficient, along with our asymptotic approximations. These approximations are particularly accurate for the duct modes that are initially close to the real axis, deteriorate for modes with large imaginary part. The duct modes undergo rapid changes as $C_{II}$ is increased to 1, which is suggestive that aeroacoustic gains can be made for modest porosity coefficients. This assertion is explored further in the following sections.

\begin{figure}
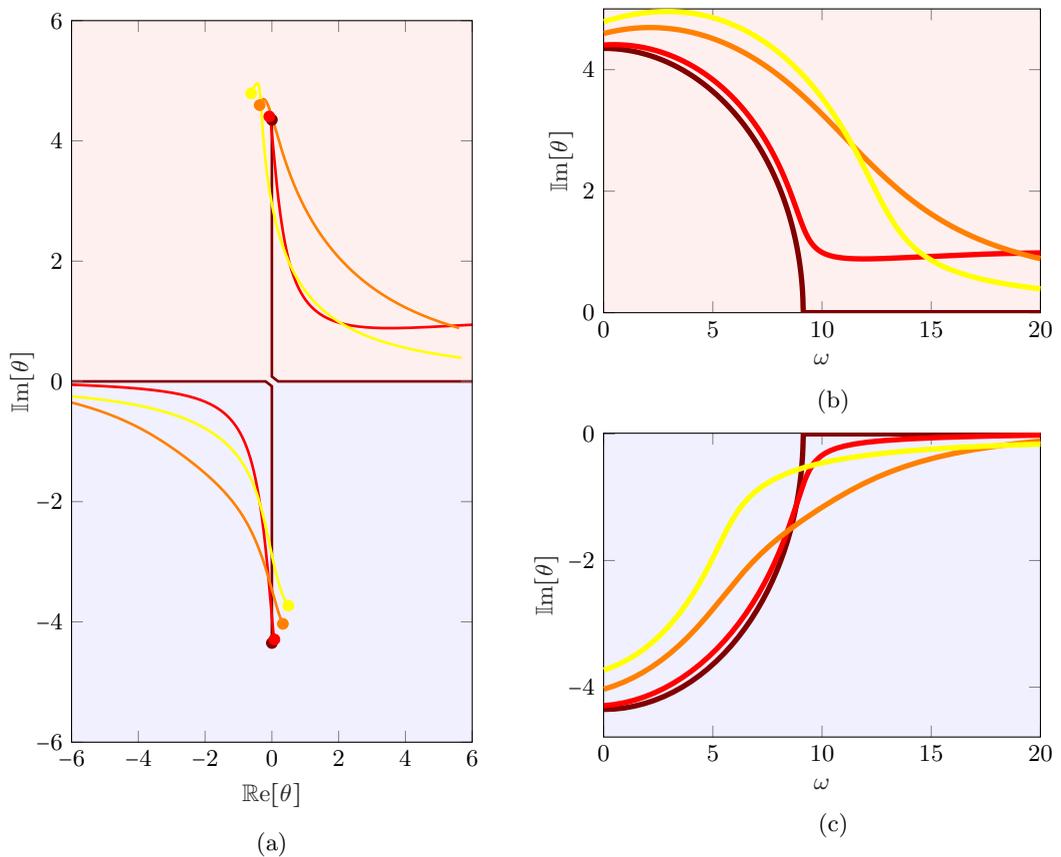

\centering
\begin{minipage}[!b]{.45\linewidth}
\begin{subfigure}[!b]{\linewidth}
\centering
\vspace{.2cm}
\setlength{\fheight}{9.5cm}
\setlength{\fwidth}{1.4\linewidth}
\input{images/mode-traj-freq.tex}
\caption{}
\end{subfigure}
\end{minipage}
\hfill
\begin{minipage}[!b]{.45\linewidth}
\begin{subfigure}[!b]{\linewidth}
\setlength{\fheight}{4cm}
\setlength{\fwidth}{\linewidth}
\input{images/im-part-1.tex}
\caption{}
\end{subfigure}
\begin{subfigure}[!b]{\linewidth}
\setlength{\fheight}{4cm}
\setlength{\fwidth}{\linewidth}
\definecolor{mycolor1}{rgb}{1.00000,1.00000,0.00000}%
\definecolor{mycolor2}{rgb}{0.94118,0.94118,1.00000}%
\begin{tikzpicture}[%
trim axis left, trim axis right
]

\begin{axis}[%
width=0.951\fwidth,
height=\fheight,
at={(0\fwidth,0\fheight)},
scale only axis,
xmin=0,
xmax=20,
xlabel style={font=\color{white!15!black}},
xlabel={$\omega$},
ymax=0.0100000000000002,
ylabel style={font=\color{white!15!black}},
ylabel={$\Im[\theta]$},
axis background/.style={fill=mycolor2},
ylabel shift = -.1cm
]
\addplot [color=black!50!red, line width=2.0pt, forget plot]
  table[row sep=crcr]{%
0.0110045022511258	-4.35079347657107\\
0.481216108054028	-4.34475792496831\\
0.941423211605802	-4.32763925975711\\
1.40163031515758	-4.29929658990709\\
1.86183741870935	-4.25950593883455\\
2.31204002001001	-4.20919151718696\\
2.75223811905953	-4.14870899043828\\
3.18243171585793	-4.07834057015137\\
3.6026208104052	-3.99829656153946\\
4.01280540270135	-3.90871417262409\\
4.41298549274637	-3.80965330827468\\
4.79315657828915	-3.7040286398584\\
5.16332316158079	-3.58944301543454\\
5.51348074037018	-3.46934493338832\\
5.85363381690846	-3.34059628228828\\
6.17377788894447	-3.20725010493654\\
6.47391295647824	-3.07012682724525\\
6.75403901950975	-2.93009055117359\\
7.02416058029015	-2.78235288989388\\
7.27427313656828	-2.63259451124426\\
7.50437668834417	-2.48182156011647\\
7.71447123561781	-2.33118101633275\\
7.91456128064032	-2.1737723318833\\
8.09464232116058	-2.01780947450826\\
8.25471435717859	-1.86497026119942\\
8.40478189094547	-1.70622666229318\\
8.5348404202101	-1.55294257977344\\
8.65489444722361	-1.3941564394415\\
8.75493946973487	-1.24447323217725\\
8.844979989995	-1.09058236329757\\
8.925016008004	-0.931056700511782\\
8.98504302151076	-0.789355862861282\\
9.03506553276638	-0.647055923251333\\
9.07508354177088	-0.504325699937635\\
9.10509704852426	-0.361403138854417\\
9.12510605302651	-0.218857033635839\\
9.14511505752876	-3.15390238370128e-08\\
20	1.0706294517604e-08\\
};
\addplot [color=red, line width=2.0pt, forget plot]
  table[row sep=crcr]{%
0.0110045022511258	-4.29243800807951\\
0.471211605802903	-4.27515808956714\\
0.931418709354677	-4.24665620371854\\
1.39162581290645	-4.20682643227788\\
1.85183291645823	-4.15547504120008\\
2.30203551775888	-4.09381390721792\\
2.74223361680841	-4.02223935664328\\
3.1724272136068	-3.94107701195863\\
3.59261630815408	-3.8505846472236\\
4.00280090045023	-3.75095267576427\\
4.40298099049525	-3.64230219448215\\
4.79315657828915	-3.52468032280661\\
5.16332316158079	-3.40154561083456\\
5.52348524262131	-3.26996327259017\\
5.86363831915958	-3.13396012780222\\
6.19378689344672	-2.98986100347103\\
6.50392646323161	-2.84234880202566\\
6.80406153076538	-2.68693865212197\\
7.0841875937969	-2.52909808844548\\
7.34430465232616	-2.36984659710571\\
7.5944172086043	-2.2034264868639\\
7.82452076038019	-2.03699622344271\\
8.04461980990495	-1.86397374062874\\
8.25471435717859	-1.68436699668216\\
8.4548044022011	-1.4987422593884\\
8.66489894947474	-1.28834674373841\\
9.19513756878439	-0.746138131420743\\
9.3151915957979	-0.647541389470049\\
9.42524112056028	-0.571252851238629\\
9.54529514757379	-0.502546449767109\\
9.66534917458729	-0.44672387816863\\
9.80541220610305	-0.394447043199964\\
9.96548424212106	-0.347390462192728\\
10.1555697848924	-0.304035001858157\\
10.3856733366683	-0.263997094213067\\
10.6657993996998	-0.227382168333872\\
11.0159569784892	-0.193534932389763\\
11.4661595797899	-0.161971506310966\\
12.0564252126063	-0.13260787186535\\
12.8467808904452	-0.105337603874695\\
13.9272671335668	-0.0800658968590824\\
15.4479514757379	-0.0565443772009928\\
17.6189284642321	-0.0350066819083708\\
20	-0.0198342057640843\\
};
\addplot [color=orange, line width=2.0pt, forget plot]
  table[row sep=crcr]{%
0.0110045022511258	-4.03196620682935\\
0.471211605802903	-3.98082542553655\\
0.921414207103552	-3.91961195802129\\
1.36161230615308	-3.84859145432187\\
1.79180590295148	-3.76800886699908\\
2.22199949974987	-3.6758218260395\\
2.64218859429715	-3.57400085991011\\
3.0523731865933	-3.46282567146645\\
3.45255327663832	-3.34268318637155\\
3.85273336668334	-3.21071116697881\\
4.26291795897949	-3.06322598205291\\
4.69311155577789	-2.8961430306128\\
5.19333666833417	-2.68917272738067\\
6.40388144072036	-2.18239990806713\\
6.80406153076538	-2.03098315964008\\
7.19423711855928	-1.89539842899514\\
7.60442171085543	-1.76508152356793\\
8.05462431215608	-1.63429955204536\\
8.56485392696348	-1.49814665757585\\
9.16512406203102	-1.3499288804431\\
9.8654392196098	-1.18886015418658\\
10.595767883942	-1.03236946746059\\
11.2760740370185	-0.897888036883014\\
11.9063576788394	-0.784569620138992\\
12.5166323161581	-0.686148373942757\\
13.1369114557279	-0.597581378289735\\
13.787204102051	-0.516388376994556\\
14.4775147573787	-0.441836353582932\\
15.2278524262131	-0.372463507580516\\
16.0482216108054	-0.308244570594837\\
16.9486268134067	-0.249277482141824\\
17.9490770385193	-0.195295058502868\\
19.0595767883942	-0.146899144356514\\
20	-0.113671575140021\\
};
\addplot [color=mycolor1, line width=2.0pt, forget plot]
  table[row sep=crcr]{%
0.0110045022511258	-3.73047688286521\\
0.411184592296149	-3.68066337970667\\
0.801360180090047	-3.62057995399878\\
1.18153126563282	-3.55056055025821\\
1.55169784892446	-3.47084574895684\\
1.91185992996498	-3.38159715079115\\
2.25201300650325	-3.28590154546384\\
2.58216158079039	-3.18153659595597\\
2.90230565282641	-3.06857314744148\\
3.2124452226113	-2.94707671953125\\
3.51258029014507	-2.81713792252308\\
3.80271085542772	-2.67891880588071\\
4.08283691845923	-2.5327273132595\\
4.35295847923962	-2.37914244151269\\
4.62308004002001	-2.21284493873525\\
4.91321060530265	-2.02090269471357\\
5.33339969984992	-1.72716214302764\\
5.6735527763882	-1.4946766114313\\
5.89365182591296	-1.35809073856785\\
6.09374187093547	-1.2475407359378\\
6.28382741370685	-1.15517737843067\\
6.48391745872937	-1.07037979767226\\
6.70401650825413	-0.989873757253459\\
6.94412456228114	-0.914649958164141\\
7.2042416208104	-0.845028263673239\\
7.50437668834417	-0.776709738248936\\
7.84452976488244	-0.711295405242797\\
8.23470535267634	-0.648246585079146\\
8.67490345172586	-0.588766753238545\\
9.18513306653326	-0.53147086392503\\
9.77539869934967	-0.476871510926923\\
10.4657093546773	-0.424815342830765\\
11.2660695347674	-0.376157633978831\\
12.2064927463732	-0.330632885958565\\
13.3269969984992	-0.288136810490009\\
14.6676003001501	-0.249067773840085\\
16.2883296648324	-0.213603255395867\\
18.2592166083042	-0.182198677395476\\
20	-0.161997248054501\\
};
\end{axis}
\end{tikzpicture}%
\caption{}
\end{subfigure}
\end{minipage}

\caption{The trajectories of a pair of duct modes as a function of frequency for a range of porosity constants. The duct modes for $\omega=0$ are denoted by $\CIRCLE$. The real and complex trajectories are plotted in figure (a). The imaginary part of the modes in the upper and lower half planes are plotted in figures b and c respectively. The colour represent different porosity coefficients: \textcolor{brown}{$C_{II}=0$} (i.e. rigid), \textcolor{red}{$C_{II}=0.01$}, \textcolor{orange}{$C_{II}=0.1$}, and \textcolor{yellow!10!black}{$C_{II}=0.1$}. The upper half plane $\UHP$ is shaded in red and the lower half plane $\LHP$ is shaded in blue. The relevant parameters correspond to case D in table \ref{Tab:resSummary} with $k_x=2$.}
\label{Fig:freq}
\end{figure}

Further insight may be gained by examining the duct modes as a function of frequency for fixed porosity values. Figure \eqref{Fig:freq} illustrates the duct modes for four porosity coefficients at a range of frequencies. For each porosity value, the imaginary part of the mode decreases as the frequency is increased. However, in contrast for the rigid case, the imaginary part never vanishes for non-zero porosity values. In some cases (in $\UHP$), the imaginary part of the duct modes undergoes a slight increase before decreasing towards the real axis. The role of porosity is particularly important for small to moderate frequencies. As the frequency is increased, the difference between the modes reduces.

We now consider the effect of porosity on sound generation and sound transmission.
\subsection{Sound Generation}
Sound generation is caused when a pressure-free gust (i.e. $ k_x = \omega$) interacts with the blade row, resulting in the production of pressure waves. In order to enable comparison against prior works, we consider cases analysed by \cite{Glegg1999} and \cite{Posson2010} as defined in table \ref{Tab:resSummary}. %
\subsubsection{Unsteady Lift}
During the solution to the Wiener--Hopf problem associated with the scattering by a blade row with complex boundaries, we observed that the major difference with the rigid case is that the duct modes are modified. Consequently, we expect complex boundary conditions to have a significant effect of the unsteady loading of the blades. In this section we test that hypothesis.

The analytic expression for $D$ \eqref{c5:Eq:Dsol} enables the swift calculation of the unsteady loading on the blades. The unsteady loading is defined as the integral of the unsteady pressure over the blade surface:
\begin{align}
C_p&=\frac{1}{2\pi w_0 } \int_{0}^{2} \Delta_0 [p](x) \d  x. \label{c5:Eq:lift}
\end{align}
Integration by parts and application of the boundary conditions \eqref{c5:Eq:upBC2} and \eqref{c5:Eq:npjWake2} yields the identity
\begin{align*}
D(-\omega \delta M^2) & =  \frac{1}{2 \pi }\int_{-\infty}^{\infty} \Delta_0 [\phi](x) \e^{-\i \omega \delta M^2 x} \d x= - \frac{1}{\i \omega 2 \pi }\int_{0}^{2} \frac{\partial}{\partial x} \left( \Delta_0 [\phi](x) \e^{-\i \omega \delta x} \right) \e^{\i \omega x} \d x.
\end{align*}
Consequently, the normalised unsteady lift \eqref{c5:Eq:lift} may be written as
\begin{align*}
C_p & = \frac{- \i \omega }{\namp} D(-\omega \delta M^2).
\end{align*}
\setlength{\fheight}{8cm}
\setlength{\fwidth}{.8\linewidth}
\begin{figure}
\centering
\input{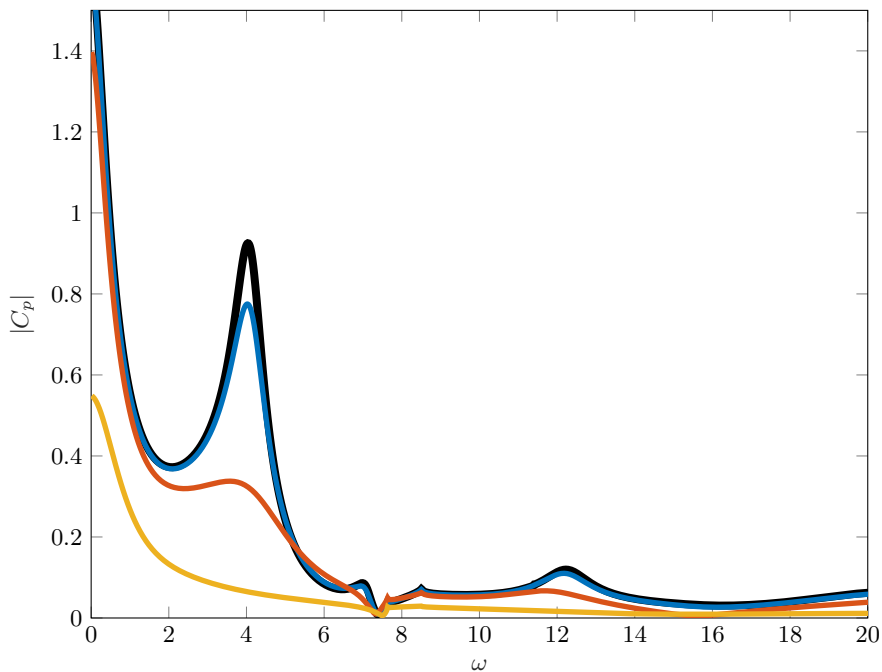}
\hfil
\caption[Unsteady lift for a range of frequencies and porosities.]{Unsteady lift for a range of frequencies and porosities. The aerodynamic and aeroacoustic parameters are defined in table \ref{Tab:resSummary} and correspond to those in figure 3 of \cite{Glegg1999}. The colours correspond to the porosity parameters \textcolor{black}{\bm{$C_{II}=0$}} (i.e. rigid), \textcolor{matlab1}{\bm{$C_{II}=.01$}}, \textcolor{matlab2}{\bm{$C_{II}=.1$}} and \textcolor{matlab3}{\bm{$C_{II}=1$}}.}
\label{Fig:ulGlegg}
\end{figure}
{\flushleft%
The modified boundary conditions have a strong effect on the unsteady loading, as illustrated in figure \ref{Fig:ulGlegg}. The unsteady loading for a rigid cascade is compared against the loading for a range of porosity parameters, which correspond to the $C_{II}$ values. The results indicate that the effect of the modified boundary conditions is to shift the locations of the duct modes, as indicated by the shifts in the local maximum around $\omega \approx 12$, which has previously been identified with the cut-on frequency of the duct mode \citep{Glegg1999}. As $C_{II}$ increases, the pressure jump across the blade must decrease in accordance with Darcy's law \eqref{c5:Eq:darcyFlow}, thus ensuring that the seepage velocity through the blade is proportional to the pressure jump across the blade. This is observed in figure \ref{Fig:ulGlegg}, where the effects of increasing the porosity result in an almost uniform reduction in the unsteady lift.}

The impact of porosity on unsteady lift appears most pronounced when the frequency is low. In particular, the range $0<\omega<7.62$ shows the largest reduction in lift. After the acoustic mode is cut-on at $\omega \approx 7.62$, the reductions in lift are generally smaller, but this is possibly because the lift on the rigid cascade is also reduced.

Another reason for the reduction in unsteady lift is that porosity reduces the build-up of pressure in the inter-blade duct by allowing the flow field to dissipate energy through the blades. Accordingly, modes are less likely to become ``trapped'' in the inter-blade region, since fluid is now permitted to travel across the blades. Consequently, the unsteady lift is greatly reduced since the cascade produces a lower pressure jump across each blade.

\subsubsection{Sound Power Output}
Analytical expressions for the sound power output are available by a similar method to \cite{Glegg1999}. The modal upstream or downstream sound power output for the $m$-th mode is given by
\begin{align*}
W_{\pm}(m) & = \frac{\omega \pi^2}{\Delta } \Re \left[ \frac{\left| \zeta_m^\pm D(\aModePM) \right|^2}{\sqrt{\omega^2 w^2 - f_m^2}}\right].
\end{align*}
As noted multiple times in this paper, modifications to the surface boundary conditions do not affect the acoustic modes $\aModePM$. Consequently, the cut-on frequencies of these upstream and downstream modes are unaffected by porosity, for example. This is observed in figure \ref{Fig:dspoa}, where the downstream sound power output for the first, second and third modes are illustrated for a large frequency range. Clearly the modes are cut-on at the same frequency, but the magnitude of the sound power output is strongly affected by porosity. For small porosities ($C_{II}=0.01,\, 0.1$) there is little impact on the sound power output of the first mode until the channel modes become cut-on at $\omega \approx 12$. Following this cut-on frequency, we observe a large decrease in the sound power for all modes. Similarly to the unsteady lift, this reduction in sound power output can be attributed to the reduction in the pressure jump across the blade caused by porosity, thereby reducing the scattered sound in the upstream and downstream regions.

We define the sound power level as 
\begin{align*}
L_{W_\pm}(m) & = 10 \log_{10} \left(\frac{W_\pm(m)}{W_{r\pm}(m)}\right) \textnormal{ dB},
\end{align*}
where we take the reference sound power $W_{r\pm}$ to be the sound power for a rigid blade row. The reduction in sound power level is illustrated in figure \ref{Fig:dspl} for a rage of porosity parameters. We observe that even a modest porosity of $C_{II}=0.1$ is capable of large sound power reductions of 5 dB for the first mode and 20 dB for the second mode. 
\setlength{\fheight}{6cm}
\setlength{\fwidth}{.8\linewidth}
\begin{figure}
\begin{subfigure}[!h]{\linewidth}
\centering
\setlength{\fwidth}{.8\linewidth}

\input{images/spo1-glegg}
\caption{}
\label{Fig:dspoa}
\end{subfigure}
\begin{subfigure}[!h]{\linewidth}
\setlength{\fwidth}{.8\linewidth}
\centering

\input{images/spo2-glegg}
\caption{}
\label{Fig:dspob}
\end{subfigure}
\caption[Modal downstream sound power output for a cascade of porous blades at a range of frequencies.]{Modal downstream sound power output for a cascade of porous blades at a range of frequencies for (a) the first mode ($m=0$) and (b) the second mode ($m=1$). The aerodynamic and aeroacoustic parameters are defined in table \ref{Tab:resSummary} and correspond to those in figure 9 of \cite{Glegg1999}. In the colours correspond to the porosity parameters \textcolor{black}{\bm{$C_{II}=0$}} (i.e. rigid), \textcolor{matlab1}{\bm{$C_{II}=0.01$}}, \textcolor{matlab2}{\bm{$C_{II}=0.1$}} and \textcolor{matlab3}{\bm{$C_{II}=1$}.}}
\label{Fig:dspo}
\end{figure}

\setlength{\fheight}{6cm}
\setlength{\fwidth}{.8\linewidth}
\begin{figure}
\begin{subfigure}[!h]{\linewidth}
\centering
\setlength{\fwidth}{.8\linewidth}

\input{images/spl1-glegg}
\caption{}
\label{Fig:spl1}
\end{subfigure}
\begin{subfigure}[!h]{\linewidth}
\setlength{\fwidth}{.8\linewidth}
\centering

\input{images/spl2-glegg}
\caption{}
\label{Fig:spl2}
\end{subfigure}
\caption[Modal downstream sound power level for a cascade of porous blades at a range of frequencies.]{Modal downstream sound power level for a cascade of porous blades at a range of frequencies for (a) the first mode ($m=0$) and (b) the second mode ($m=1$). The aerodynamic and aeroacoustic parameters are defined in table \ref{Tab:resSummary} and correspond to those in figure 9 of \cite{Glegg1999}. In the colours correspond to the porosity parameters \textcolor{matlab1}{\bm{$C_{II}=0.01$}}, \textcolor{matlab2}{\bm{$C_{II}=0.1$}} and \textcolor{matlab3}{\bm{$C_{II}=1$}}.}
\label{Fig:dspl}
\end{figure}

\subsection{Sound Transmission}
We now use our model to investigate sound transmission through a cascade of flat plates with complex boundaries. Sound transmission occurs when a sound wave interacts with a cascade of blades, and is reflected and transmitted through the blade row. In order to explore the effects of modified boundary conditions on the transmission and reflection, we follow \cite{Bouley2017} and write the incident acoustic field \eqref{c5:Eq:incFieldNon} in the form
\begin{align*}
\phi_i & = \e^{\i \zeta_0^+ y} \e^{-\i \aModePv{0} x}, \qquad \qquad x \leq y d/h,
\end{align*}
where the mode  is assumed cut-on, so that $\aModePv{0}$ is real. \iffalse Comparison with the incident field \eqref{c5:Eq:incFieldNon} implies that the frequency satisfies
\begin{align*}
\delta k_x &= -\aModeMv{m}, \\
\omega k_y &= - \zModeMv{m}.
\end{align*}
We set the frequency in the $y$ direction and solve for $k_x$. The inter-blade phase angle must satisfy
\begin{align*}
\miBP & = -\zModeMv{m} \vPS  \pm  \hPS \sqrt{\omega^2 w^2 - \frac{(\zModeMv{m}\vPS + 2 m \pi)^2}{\vPS^2}}
\end{align*}
Then, $k_x$ is obtained by use of the inter-blade phase angle
\begin{align*}
k_x & = \frac{\miBP - \omega k_y \vPS}{\delta \hPS}\\
&= \frac{\miBP + \zModeMv{m} \vPS}{\delta \hPS}\\
&=  \pm  \frac{1}{\delta} \sqrt{\omega^2 w^2 - \frac{(\zModeMv{m}\vPS + 2 m \pi)^2}{\vPS^2}}.
\end{align*}
We take the positive root.
\fi
Comparison with the solution in section \ref{c5:Sec:FourInversion} shows that the reflected acoustic field takes the form
\begin{align*}
\phi_r & = \sum_{m = -\infty}^{\infty} R_m \e^{\i \zeta_m^+ y} \e^{-\i \aModeP x}, \qquad \qquad x \leq y d/h,
\end{align*}
and the transmitted acoustic potential takes the form
\begin{align*}
\phi_t & = \sum_{m = -\infty}^{\infty} T_m \e^{-\i \zeta_m^- y} \e^{-\i \aModeM (x-2)}, \qquad \qquad x \leq 2 + y d/h,
\end{align*}
where $\boldsymbol{R}$ and $\boldsymbol{T}$ are the vectors of the reflection and transmission coefficients respectively. Note that the mode corresponding to the jump in acoustic potential across the wake ($\delta \omega$) is not included in this analysis since it does not contribute to the pressure field.

By comparison with \eqref{c5:Eq:solU} and \eqref{c5:Eq:solD} we obtain the expressions for the transmission and reflection coefficients as 
\begin{align*}
R_m & =  + \frac{\pi \zeta_m^+ D^{(1,3)}(\aModeP)}{\sqrt{ \omega^2 w^2 - f_m^2}}, \qquad T_m =  -\frac{\pi \zeta_m^{-} D^{(2,4)}(\aModeM) } {\sqrt{\omega^2 w^2 - f_m^2}}\e^{-2 \i \aModeM}.
\end{align*}

We now perform a parametric study on the effects of porosity on the transmission and reflection coefficients. In figure \ref{Fig:tr1}, we plot the total pressure field and the associated amplitude of the transmission and reflection coefficients. The blades are rigid ($C_{II}=0$) in figures \ref{Fig:0pres} and \ref{Fig:0tr}. In this case, two modes are cut-on, as indicated in figure \ref{Fig:0tr}. When the cascade is rigid, the blade row has a strong effect on the reflected and transmitted fields, which can be observed by the significant distortion of the acoustic field in figure \ref{Fig:0pres}. In figure \ref{Fig:.5pres}, the porosity is increased to the relatively modest value of $C_{II}=0.5$. We now observe that the amplitude of the reflected wave substantially decreases, so that the upstream field seems almost unperturbed by the cascade. In contrast, the downstream field is still distorted by the cascade, and the presence of the cut-on modes can still be observed. We note that the two modes are still cut-on, and the porosity does not affect their cut-on frequency. This is consistent with the argument made in the Wiener--Hopf analysis, namely that modifying the boundary conditions does not affect the modal structure of the far-field solution. Finally, when the porosity is increased to $C_{II}=3$, the acoustic field is essentially unaffected by the blade row, as seen in figure \ref{Fig:1pres}. This is because the blades are sufficiently porous that the fluid can pass through the blade unhindered. Accordingly, the only significant transmission and reflection coefficient remaining is that of the incident perturbation, as shown in figure \ref{Fig:tr1}.

\newcounter{row}
\makeatletter
\@addtoreset{subfigure}{row}
\makeatother
\begin{figure}
\captionsetup[subfigure]{position=top, singlelinecheck=off,justification=raggedright}
  \renewcommand{\thesubfigure}{\alph{row}.\roman{subfigure}}%
  \centering 
  \setcounter{row}{1}%
\begin{subfigure}[!h]{.35\linewidth}
\centering
\setlength{\fwidth}{.95\linewidth}
\caption{} \label{Fig:0pres}
 \begin{tikzpicture}[%
trim axis left, trim axis right
]
 \begin{axis}[
  set layers,
ylabel near ticks,
xlabel near ticks,
 width=\fwidth,
 height=0.76491\fwidth,
 at={(0,0)},
scale only axis,
enlargelimits=false,
axis on top,
xlabel={$x/2$},
ylabel={$y/2$}]
\addplot graphics[xmin=-2.3116,xmax=2.9178,ymin=-2,ymax=2] {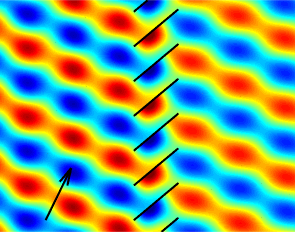};

    \end{axis}
    \end{tikzpicture}
\end{subfigure}
\hfill
\begin{subfigure}[!h]{.55\linewidth}
\setlength{\fheight}{4cm}
\setlength{\fwidth}{.95\linewidth}
\centering
\caption{} \label{Fig:0tr}
 \begin{tikzpicture}[%
trim axis left, trim axis right
]
 \begin{axis}[
  set layers,
ylabel near ticks,
xlabel near ticks,
 width=\fwidth,
 height=\fheight,
 at={(0,0)},
scale only axis,
enlargelimits=false,
axis on top,
xlabel={Mode number, m },
ylabel={$\left| R_m \right|, \left| T_m \right|$}]
\addplot graphics[xmin=-2.4857,xmax=2.4857,ymin=0,ymax=1.5] {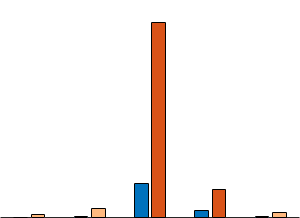};

    \end{axis}
    \end{tikzpicture}
\end{subfigure}

\stepcounter{row}
\begin{subfigure}[!h]{.35\linewidth}
\centering
\caption{}\label{Fig:.5pres}
\setlength{\fwidth}{.95\linewidth}
 \begin{tikzpicture}[%
trim axis left, trim axis right
]
 \begin{axis}[
  set layers,
ylabel near ticks,
xlabel near ticks,
 width=\fwidth,
 height=0.78871\fwidth,
 at={(0,0)},
scale only axis,
enlargelimits=false,
axis on top,
xlabel={$x/2$},
ylabel={$y/2$}]
\addplot graphics[xmin=-2.3005,xmax=2.7711,ymin=-2,ymax=2] {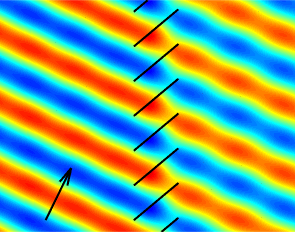};

    \end{axis}
    \end{tikzpicture}
\end{subfigure}
\hfill
\begin{subfigure}[!h]{.55\linewidth}
\setlength{\fwidth}{.95\linewidth}
\centering
\caption{} \label{Fig:.5tr}
\setlength{\fheight}{4cm}
 \begin{tikzpicture}[%
trim axis left, trim axis right
]
 \begin{axis}[
  set layers,
ylabel near ticks,
xlabel near ticks,
 width=\fwidth,
 height=\fheight,
 at={(0,0)},
scale only axis,
enlargelimits=false,
axis on top,
xlabel={Mode number, m },
ylabel={$\left| R_m \right|, \left| T_m \right|$}]
\addplot graphics[xmin=-2.4857,xmax=2.4857,ymin=0,ymax=1.5] {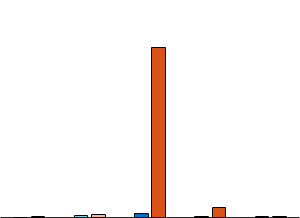};

    \end{axis}
    \end{tikzpicture}
\end{subfigure}

\stepcounter{row}
\begin{subfigure}[!h]{.35\linewidth}
\centering
\caption{} \label{Fig:1pres}
\setlength{\fwidth}{.95\linewidth}
 \begin{tikzpicture}[%
trim axis left, trim axis right
]
 \begin{axis}[
  set layers,
ylabel near ticks,
xlabel near ticks,
 width=\fwidth,
 height=0.78871\fwidth,
 at={(0,0)},
scale only axis,
enlargelimits=false,
axis on top,
xlabel={$x/2$},
ylabel={$y/2$}]
\addplot graphics[xmin=-2.3005,xmax=2.7711,ymin=-2,ymax=2] {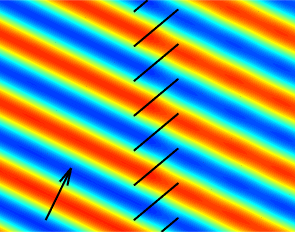};

    \end{axis}
    \end{tikzpicture}
\end{subfigure}
\hfill
\begin{subfigure}[!h]{.55\linewidth}
\setlength{\fwidth}{.95\linewidth}
\centering
\caption{} \label{Fig:1tr}
\setlength{\fheight}{4cm}
 \begin{tikzpicture}[%
trim axis left, trim axis right
]
 \begin{axis}[
  set layers,
ylabel near ticks,
xlabel near ticks,
 width=\fwidth,
 height=\fheight,
 at={(0,0)},
scale only axis,
enlargelimits=false,
axis on top,
xlabel={Mode number, m },
ylabel={$\left| R_m \right|, \left| T_m \right|$}]
\addplot graphics[xmin=-2.4857,xmax=2.4857,ymin=0,ymax=1.5] {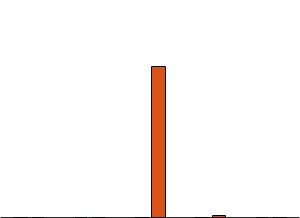};

    \end{axis}
    \end{tikzpicture}
\end{subfigure}
\centering
\begin{tikzpicture}
\begin{axis}[
    hide axis,
    scale only axis,
    height=0pt,
    width=0pt,
    colormap/jet,
    colorbar horizontal,
    point meta min=-1,
    point meta max=1,
    colorbar style={
        width=.85\textwidth,
        height=.2cm,
        xlabel = {normalised pressure, $p$},
        x label style={anchor=north},
        xtick={-1,-.8,...,1},
        xticklabel style={
	        	align=center,
				/pgf/number format/.cd,
	   	        fixed,
}
    }]
    \addplot [draw=none] coordinates {(0,0)};
\end{axis}
\end{tikzpicture}
\renewcommand\subfigure{\thefigure}
\caption[The total pressure field and reflection and transmission coefficients for a cascade of porous aerofoils.]{Left: the total (incident and scattered), normalised pressure field. Right: the amplitudes of the normalised reflection (\textcolor{matlab1}{\textbf{blue}}) and transmission (\textcolor{matlab2}{\textbf{orange}}) coefficients. The $C_{II}$ values correspond to $C_{II}=0$ (a), $C_{II}=0.5$ (b) and $C_{II}=1$ (c). The darker bars indicate modes that are cut-on whereas the lighter bars are cut-off. The parameters in this case are defined in case E of table \ref{Tab:resSummary}. The arrow indicates the direction of the incident wave.}
\label{Fig:tr1}
\end{figure}

In general, blade porosity has a strong effect on the amplitude of the reflection and transmission coefficients. In figure \ref{Fig:TRbehavioura}, the amplitudes of the transmission and reflection coefficients are plotted for a range of porosities. We observe a rapid decay in these amplitudes in the region $10^{-2}<C_{II}<1$ . When plotted on a log-scale in figure  \ref{Fig:TRbehaviourb}, the behaviour resembles an exponential dependence. This is a promising result for sound reduction technologies, as it suggests that the sound transmission for blade rows can be drastically altered with small amounts of porosity.

\begin{figure}
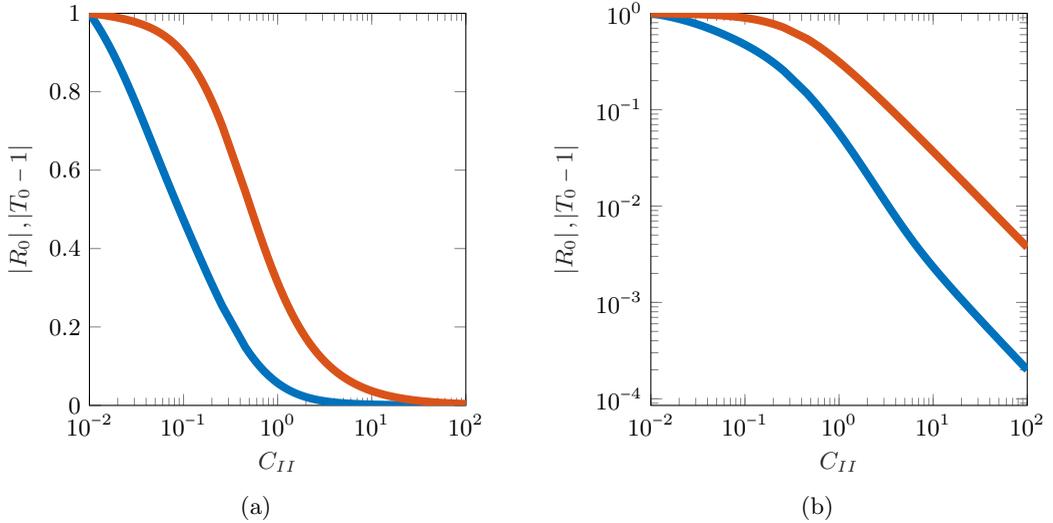

\centering
\begin{subfigure}[!h]{.45\linewidth}
\centering

\setlength{\fwidth}{.85\linewidth}
\setlength{\fheight}{\fwidth}
\centering
\input{images/trc-log.tex}
\caption{}
\label{Fig:TRbehavioura}
\end{subfigure}
\hfill
\begin{subfigure}[!h]{.45\linewidth}
\centering

\setlength{\fwidth}{.85\linewidth}
\setlength{\fheight}{\fwidth}
\input{images/trc.tex}
\caption{}
\label{Fig:TRbehaviourb}
\end{subfigure}
\caption[Reflection and transmission coefficients as a function of porosity coefficient, $C_{II}$.]{The normalised first reflection (\textcolor{matlab1}{\textbf{blue}}) and transmission (\textcolor{matlab2}{\textbf{orange}}) coefficients as a function of porosity coefficient, $C_{II}$, on (a) a linear scale, and (b) a log scale. The coefficients are normalised by their values for a rigid cascade. The cascade parameters are equivalent to those in figure \ref{Fig:tr1}.}
\label{Fig:TRbehaviour}
\end{figure}

\section{Conclusions}
\label{c5:Sec:Conclusion}
We have derived an analytic solution for the scattering of an unsteady perturbation incident on a rectilinear cascade of flat blades with complex boundaries. The analytic nature of the solution means that it is extremely rapid to compute, and offers physical insight into the role played by different boundary conditions. In contrast with previous studies that focussed on the effects of rigid blades \citep{Glegg1999,Posson2010}, the formulation of the present research allows a range of boundary conditions to be studied with minimal effort, such as porosity, compliance and flow impedance. In terms of the spectral plane, the effect of modifying these boundary conditions is to change the locations of the zeros of the Wiener--Hopf kernel, whilst the poles are unchanged. Accordingly, the modal structure of the far-field scattered pressure is invariant under modifications to the flat blades' boundary conditions. Conversely, the modal structure of the near-field region undergoes large deformations since the zeros of the kernel correspond to the duct modes of the inter-blade region. This has a strong effect on the surface pressure fluctuations and unsteady loading, which has implications for the far-field sound. 

We have particularly investigated the role of blade porosity in the form of a Darcy-type condition. The results show that substantial reductions in both the unsteady lift and sound power output are possible for even modest values of porosity. At low frequencies, we observe a significant change in the unsteady loading and a moderate effect on the sound power output. Conversely, at high frequencies we observe a significant effect on the sound power output and a small effect on the unsteady loading. Furthermore, the amplitudes of the reflection and transmission coefficients rapidly decrease as the blade porosity is increased. We attribute these considerable reductions to several physical mechanisms associated with porosity. Increasing the blade porosity promotes a compliant relationship between adjacent channels, which prevents the build-up of trapped modes. Moreover, the flow seepage afforded by porous channels permits the dissipation of energy through the blades, thus reducing unsteady pressure fluctuations on the blade. This reduction in pressure fluctuations corresponds to reductions in the scattered pressure and far-field sound. Finally, this study shows that modified boundary conditions have a large impact on gust-cascade interaction noise and offers a useful design tool that can model aeroacoustic and aeroelastic effects.

\begin{appendices} 
\renewcommand{\thesection}{\Alph{section}}
\renewcommand\thefigure{\thesection.\arabic{figure}}    
\renewcommand\theequation{\thesection.\arabic{equation}}
\section{Wiener--Hopf Solution} \label{c5:Sec:WHsol}

We now solve the integral equation \eqref{c5:Eq:intEq} subject to the boundary conditions of no discontinuities upstream \eqref{c5:Eq:upBC}, a modified no-flux condition \eqref{c5:Eq:generalBC} and no pressure jump across the wake \eqref{c5:Eq:npjWake}. In a similar way to \cite{Glegg1999}, we split this problem into four coupled problems that are amenable to the Wiener--Hopf method.\iffalse This analysis is different to that of chapter \ref{Chap:cascAcou1}: firstly, the modified boundary conditions must be accounted for and we show how to do this in a concise manner. Secondly, chapter \ref{Chap:cascAcou1} was restricted to the case of a pressure-free gust ($k_x = \omega$), whereas the analysis in the present chapter is for a more general sound wave. Consequently, we present the full solution to the Wiener--Hopf solution in this chapter, although there are similarities with that presented in chapter \ref{Chap:cascAcou1}.\fi

We write
\begin{align}
\Delta_0 \left[\ssig\right](x) &= \Delta_0 \left[\ssig^{(1)}\right](x)+\Delta_0 \left[ \ssig^{(2)} \right](x)+\Delta_0 \left[\ssig^{(3)} \right](x)+\Delta_0 \left[\ssig^{(4)} \right](x), \label{c5:Eq:ssigSplits}
\end{align}
and its Fourier transform
\begin{align}
\dsig(\gamma)&=\dsig^{(1)}(\gamma)+\dsig^{(2)}(\gamma)+\dsig^{(3)}(\gamma)+\dsig^{(4)}(\gamma), \label{c5:Eq:dsigSplits}
\end{align}
where each $\Delta_0 \left[\ssig^{(n)} \right]$ and $\dsig^{(n)}$ satisfy a semi-infinite integral equation of the form
\begin{align}
f^{(n)}(x)&= -4 \pi\int_{-\infty}^{\infty} D^{(n)}(\gamma)  j(\gamma) \e^{-\i \gamma x} \, \d \gamma, \label{c5:Eq:sigGEfn} %
\end{align}
for $n=1,2,3,4$. The corresponding boundary conditions are
{
	\setlength{\jot}{20pt}
	\begin{align}
	\addtocounter{equation}{1}
	\fsig^{(1)}(x)= &\mu_0 \Delta_0 \left[\phi^{(1)} \right](x) +\mu_1 \Delta_0 \left[\phi_x^{(1)}\right](x) + \mu_2 \Delta_0\left[\phi_{x,x}^{(1)}\right](x) \notag \\
	-&2 \namp \exp \left[\i  \delta k_x x\right], & x>0, \tag{\theequation.a} \label{c5:Eq:bc1a}\\
	\addtocounter{equation}{1}
	\fsig^{(2)}(x)= &\mu_0 \Delta_0 \left[\phi^{(2)} \right](x) +\mu_1 \Delta_0 \left[\phi_x^{(2)}\right](x) + \mu_2 \Delta_0\left[\phi_{x,x}^{(2)}\right](x), & x<2, \tag{\theequation.a} \label{c5:Eq:bc2a}\\
	  \addtocounter{equation}{1}
	\fsig^{(3)}(x)= &\mu_0 \Delta_0 \left[\phi^{(3)} \right](x) +\mu_1 \Delta_0 \left[\phi_x^{(3)}\right](x) + \mu_2 \Delta_0\left[\phi_{x,x}^{(3)}\right](x), & x>0, \label{c5:Eq:bc3a} \tag{\theequation.a}\\
	 \addtocounter{equation}{1}
	\fsig^{(4)}(x)=& \mu_0 \Delta_0 \left[\phi^{(4)} \right](x) +\mu_1 \Delta_0 \left[\phi_x^{(4)}\right](x) + \mu_2 \Delta_0\left[\phi_{x,x}^{(4)}\right](x), & x<2, \tag{\theequation.a} \label{c5:Eq:bc4a}
	\end{align}
}
and
{
	\setlength{\jot}{20pt}
	\begin{align}
			\addtocounter{equation}{-3}
	\Delta_0 \left[\ssig^{(1)} \right](x) =&0, &x<0, \tag{\theequation.b} \label{c5:Eq:bc1b} \\
	\addtocounter{equation}{1}
	 \Delta_0 \left[ \ssig^{(1)} \right] (x)+\Delta_0 \left[ \ssig^{(2)}\right](x)=&2 \pi \i \Psig^{(2)} \e^{\i \delta \omega  x}, & x> 2, \label{c5:Eq:bc2b} \tag{\theequation.b}\\
	  \addtocounter{equation}{1}
	 \Delta_0 \left[\ssig^{(2)} \right](x)+\Delta_0\left[ \ssig^{(3)} \right](x) + \Delta_0 \left[ \ssig^{(4)}\right](x)=&0, & x<0, \label{c5:Eq:bc3b} \tag{\theequation.b}\\ 
	 \addtocounter{equation}{1}
	 \Delta_0 \left[\ssig^{(3)}\right](x)+\Delta_0 \left[\ssig^{(4)}\right](x) =& 2 \pi \i \Psig^{(4)} \e^{\i \delta \omega x}, & x> 2, \label{c5:Eq:bc4b} \tag{\theequation.b}
	\end{align}
}
{
\flushleft
where $\Psig^{(2)}$ and $\Psig^{(4)}$ are two constants of integration that will be specified to enforce the Kutta condition.
Summing the four above conditions results in the original boundary conditions and, consequently, we may apply the Wiener--Hopf method to each semi-infinite integral equation and sum the resulting contributions to obtain a solution to the original equations.
}
\subsubsection{Solution to First Wiener--Hopf Equation -- $\dsig^{(1)}$} \label{c5:Sec:WHsig1}
In this section, we solve the integral equation \eqref{c5:Eq:sigGEfn} for $n=1$
\begin{align}
\addtocounter{equation}{1}
\fsig^{(1)}(x)&= -4 \pi \int_{-\infty}^{\infty} \dsig^{(1)}(\gamma)  j(\gamma)\e^{-\i \gamma x} \, \d \gamma, \label{c5:Eq:sigIE1} \tag{\theequation.a}
\end{align}
subject to \eqref{c5:Eq:bc1a} and \eqref{c5:Eq:bc1b}.
\iffalse \eqref{c5:Eq:sigBCf1}:
\begin{align*}
\fsig^{(1)}(x)&=  - A k_y \exp \left[\i k_x x\right] + B \left(\i \omega \delta \phi^{(1)}(x,0) - \phi_x^{(1)}(x,0)\right) , \;\; x>0; \\
 \Delta \ssig^{(1)}&=0, \;\; x<0. %
\end{align*}
\fi
Taking a Fourier transform of \eqref{c5:Eq:sigIE1} in $x$ gives
\begin{align}
\FsigM^{(1)}(\gamma)+\FsigP^{(1)}(\gamma)&= -4 \pi \dsigp^{(1)}(\gamma) j(\gamma), \label{c5:Eq:sigWH1}
\end{align}
where
\begin{align}
\addtocounter{equation}{1}
\FsigM^{(1)}(\gamma)&=\frac{1}{2 \pi}\int^0_{-\infty} \fsig^{(1)}(x)\e^{\i \gamma x} \d x, &\FsigP^{(1)}(\gamma)&= \frac{1}{2 \pi} \int_0^{\infty} \fsig^{(1)}(x)\e^{\i \gamma x} \d x, \label{c5:Eq:F1sigDef} \tag{\theequation.a}
\end{align}
\begin{align}
\dsigp^{(1)}(\gamma)=\int_0^{\infty} \Delta_0 \left[ \ssig^{(1)}\right] (x) \e^{\i \gamma x} \d x. \label{c5:Eq:D1sigDef} \tag{\theequation.b}
\end{align}
We may employ \eqref{c5:Eq:bc1a} to obtain 
\begin{align*}
\FsigP^{(1)}(\gamma)&= - \frac{\namp}{\pi \i ( \gamma +\delta k_x )} + \left( \mu_0 - \i \mu_1 \gamma - \mu_2 \gamma^2 \right) D^{(1)}(\gamma).
\end{align*}
Consequently, the Wiener--Hopf equation \eqref{c5:Eq:sigWH1} may be expressed as
\begin{align}
\FsigM^{(1)}(\gamma) - \frac{\namp}{\pi \i ( \gamma +\delta  k_x)}  &= -4 \pi \dsigp^{(1)}(\gamma) K(\gamma), \label{c5:Eq:sigWH1a}
\end{align}
where
\begin{align}
K(\gamma) & = j(\gamma)  + \frac{1}{4 \pi} \left( \mu_0 - \i \mu_1 \gamma - \mu_2 \gamma^2 \right) . \label{c5:Eq:kDef}
\end{align}
The multiplicative splitting $K = K_+ K_-$ is performed in appendix \ref{Ap:splitting}. This splitting enables us to write 
\begin{align}
\frac{\FsigM^{(1)}(\gamma)}{K_-(\gamma)} - \frac{\namp}{ \pi \i ( \gamma +\delta  k_x)K_-(\gamma)}  &= -4 \pi \dsigp^{(1)}(\gamma) K_+(\gamma). \label{c5:Eq:sigWH1c}
\end{align}
We now additively factorise the left hand side of \eqref{c5:Eq:sigWH1c}. We apply pole removal \citep{Noble1988} to obtain the additive splitting
\begin{align*}
 \frac{1}{( \gamma + \delta k_x)K_-(\gamma)} = \underbrace{ \frac{1}{( \gamma + \delta  k_x)K_-(-\delta k_x)}}_+ +\underbrace{\frac{1}{( \gamma + \delta  k_x)K_-(\gamma)} - \frac{1}{( \gamma + \delta  k_x)K_-(-\delta k_x)}  }_{-},
\end{align*}
where the underbrace $\pm$ denotes that the function is analytic in $\ULHP$ respectively. Therefore, \eqref{c5:Eq:sigWH1a} becomes
\begin{align}
\frac{\FsigM^{(1)}(\gamma)}{K_-(\gamma)} -\frac{\namp}{ \pi \i ( \gamma + \delta k_x)K_-(\gamma)} + \frac{\namp}{\pi \i ( \gamma + \delta k_x)K_-(-\delta k_x)} \notag   \\
= -4 \pi \dsigp^{(1)}(\gamma) K_+(\gamma) + \frac{\namp}{\pi \i ( \gamma + \delta k_x)K_-(-\delta k_x)}. \label{c5:Eq:sigWH1d}
\end{align}
We may now apply the standard Wiener--Hopf argument: since the left and right hand sides of \eqref{c5:Eq:sigWH1d} are analytic in $\LUHP$ respectively, and they agree on a strip, each side defines the analytic continuation of the other. Therefore, equation \eqref{c5:Eq:sigWH1d} defines an entire function, $E_1(\gamma)$. As $|\gamma| \rightarrow \infty$ in the $\LHP$, the left hand side of \eqref{c5:Eq:sigWH1d} decays. Similarly, as $|\gamma| \rightarrow \infty$ in $\UHP$, the right hand side of \eqref{c5:Eq:sigWH1d} vanishes. Therefore, $E_1(\gamma)$ is bounded in the entire plane and must be constant according to Liouville's theorem. Moreover, since $E_1(\gamma)$ decays, this constant must be zero. Finally, we rearrange the right hand side of \eqref{c5:Eq:sigWH1d} to obtain the solution to the first Wiener--Hopf problem as
\begin{align}
D^{(1)}(\gamma) = \frac{\namp}{(2 \pi)^2 \i ( \gamma + \delta k_x)K_-(-\delta k_x) K_+(\gamma)}. \label{c5:Eq:dSig1}
\end{align}
\subsubsection{Solution to Second Wiener--Hopf Equation -- $\dsig^{(2)}$} \label{c5:Sec:WHsig2}

In this section we solve the integral equation \eqref{c5:Eq:sigGEfn} for $n=2$,
\begin{align}
\addtocounter{equation}{1}
\fsig^{(2)}(x)&= -4 \pi \int_{-\infty}^{\infty} \dsig^{(2)}(\gamma)  j(\gamma)\e^{-\i \gamma x} \d \gamma , \label{c5:Eq:sigIE2} \tag{\theequation.a}
\end{align}
subject to \eqref{c5:Eq:bc2a} and \eqref{c5:Eq:bc2b}
\iffalse
\begin{align}
\fsig^{(2)}(x)&= 0, \;\; x<1; & \Delta \ssig^{(1)}(x)+\Delta \ssig^{(2)}(x)&=2 \pi \i \Psig^{(2)} \e^{-\i \gModeO x}, \;\; x> 1, \label{c5:Eq:sigBC2} \tag{\theequation.b}
\end{align}
\fi
Taking a Fourier transform of \eqref{c5:Eq:sigIE2} and applying \eqref{c5:Eq:bc2a} yields
\begin{align}
\left( \mu_0 - \i \mu_1 \gamma - \mu_2 \gamma^2 \right)  D^{(1)}(\gamma) + \FsigP^{(2)}(\gamma) = -4 \pi\left( \dsigm^{(2)}(\gamma) + \dsigp^{(2)}(\gamma) \right) j(\gamma), \label{c5:Eq:sigWH2b}
\end{align}
where 
{\setlength{\jot}{10pt}
	\begin{align}
	\addtocounter{equation}{1}
	\FsigP^{(2)}(\gamma)=\frac{1}{2 \pi} \int_{2}^{\infty} \fsig^{(2)} ( x )\e^{\i \gamma x} \d x & = \frac{ \e^{2\i \gamma }}{2 \pi} \int_0^{\infty} \fsig^{(2)} (x + 2)\e^{\i x \gamma} \d x=\e^{2\i\gamma} \FsigP^{*(2)} (\gamma), \tag{\theequation.a} \label{c5:Eq:sigFPstar} \\ 
	\dsigp^{(2)}(\gamma)=\frac{1}{2 \pi} \int_{2}^{\infty} \Delta_0 \left[\ssig^{(2)}\right]( x )\e^{\i \gamma x} \d x&= \frac{\e^{2\i \gamma}}{2 \pi} \int_0^{\infty} \Delta_0 \left[\ssig^{(2)} \right]( x+ 2)\e^{\i x \gamma} \d x = \e^{2\i\gamma } \dsigp^{*(2)}(\gamma), \label{c5:Eq:sigDPstar} \tag{\theequation.b}\\
	\dsigm^{(2)}(\gamma)=\frac{1}{2 \pi} \int_{-\infty}^{2} \Delta_0 \left[\ssig^{(2)} \right](x )\e^{\i \gamma x} \d x &= \frac{ \e^{2\i \gamma }}{2 \pi} \int_{-\infty}^{0} \Delta_0 \left[\ssig^{(2)}\right] (x + 2)\e^{\i x \gamma} \d x=\e^{2\i\gamma } \dsigm^{*(2)}(\gamma). \label{c5:Eq:sigDMstar} \tag{\theequation.c}
	\end{align}}
Factoring out the $\e^{2 \i \gamma}$ dependence and employing \eqref{c5:Eq:kDef} transforms the Wiener--Hopf equation \eqref{c5:Eq:sigWH2a} to
\begin{align}
\FsigP^{*(2)}(\gamma)=-4 \pi\left( \dsigm^{*(2)}(\gamma) + \dsigp^{*(2)}(\gamma) \right) K(\gamma), \label{c5:Eq:sigWH2a}
\end{align}
and we may use the multiplicative splitting of $K$ to write
\begin{align}
\frac{\FsigP^{*(2)}(\gamma)}{K_+(\gamma)} =-4 \pi\left( \dsigm^{*(2)}(\gamma) + \dsigp^{*(2)}(\gamma) \right) K_-(\gamma). \label{c5:Eq:sigWH2d}
\end{align}
We may use the downstream boundary condition for this problem \eqref{c5:Eq:bc2b} to write 
\begin{align}
\dsigp^{*(2)}(\gamma)&= - \frac{ \Psig^{*(2)} }{\gamma + \delta \omega} -\frac{1}{2 \pi} \int_0^\infty \Delta_0 \left[ \ssig^{(1)} \right](x+ 2) \e^{\i \gamma x} \d x .\label{c5:Eq:sigBC2D}
\end{align}
where $ \Psig^{*(2)} = \Psig^{(2)} \e^{2 \i \delta k_x}$. To calculate the remaining integral we use the inversion formula for the Fourier transform:
\begin{align*}
\Delta_0 \left[\ssig^{(1)}\right](x)=\int_{-\infty - \i \tau_1}^{\infty-\i \tau_1} \dsig^{(1)}(\gamma) \e^{-\i \gamma x} \d \gamma .
\end{align*}
By substituting this representation into our desired integral, we obtain
\begin{align*}
\frac{1}{2 \pi} \int_0^{\infty} \Delta_0\left[ \ssig^{(1)} \right] (x +2) \e^{\i \gamma x} \d x&=\frac{1}{2 \pi} \int_0^{\infty} \int_{-\infty - \i \tau_1}^{\infty-\i \tau_1} \dsig^{(1)}(\gamma_1) \e^{-\i \gamma_1 (x + 2)} \d \gamma_1 \e^{\i \gamma x} \d x.
\end{align*}
Rearranging the order of integration and computing the resulting $x$-integral results in
\begin{align}
\frac{1}{2 \pi} \int_0^{\infty} \Delta_0 \left[ \ssig^{(1)} \right](x +2) \e^{\i \gamma x} \d x&=\frac{1}{2 \pi \i}  \int_{-\infty - \i \tau_1}^{\infty-\i \tau_1} \frac{\dsig^{(1)}(\gamma_1) \e^{-2 \i \gamma_1 }}{\gamma_1-\gamma}  \d \gamma_1 . \label{c5:Eq:D2int}
\end{align}
Since $\hPS<2$, we may close the remaining integral in $\LHP$. There are no branches in the integrand and the integral consists of the residues of simple poles at $\gamma = - \delta k_x$, $\dModeM$. Consequently, inserting \eqref{c5:Eq:dSig1} into the above integral yields
{
	\setlength{\jot}{10pt}
	\begin{align*}
	\frac{1}{2 \pi \i} & \int_{-\infty - \i \tau_1}^{\infty-\i \tau_1} \frac{1}{( \gamma + \delta k_x)K_-(-\delta k_x) K_+(\gamma)}\cdot\frac{\e^{-2\i \gamma_1}}{\gamma_1-\gamma}  \d \gamma_1 \\ 
	=& \frac{\e^{2\i \delta k_x}}{(\gamma+ \delta k_x)K(-\delta k_x)} +\sum_{n=0}^{\infty} \frac{\e^{-2\i \dModeM }}{(\dModeM + \delta k_x) K_-(- \delta k_x) K_+^{\prime}(\dModeM)(\gamma-\dModeM)},
	\end{align*}
}
where the derivatives of $K_\pm$  evaluated at the duct modes $\dModeMP$ are given by
\begin{align*}
K_{\pm}^{\prime}(\dModeMP) = \frac{K^\prime(\dModeMP)}{K_{\mp}(\dModeMP)} = \frac{-1}{4 \pi K_{\mp}(\dModeMP)} \Bigg( & \frac{(\dModeMP/\zeta(\dModeMP))\sin(s \zeta(\dModeMP)) + s \dModeMP \cos(s \zeta(\dModeMP))}{\cos(s \zeta(\dModeMP)) - \cos(d \dModeMP + \sigma^\prime)} \\
& \frac{ \dModeMP s\sin(s \zeta(\dModeMP)^2 + d \zeta(\dModeMP) \sin(d \dModeMP + \sigma^\prime)\sin(s \zeta(\dModeMP)) }{(\cos(s \zeta(\dModeMP)) - \cos(d \dModeMP + \sigma^\prime))^2}  \\
&  \qquad \qquad  \i \mu_1 + 2 \mu_2 \dModeMP \Bigg).
\end{align*}
Therefore, substitution of \eqref{c5:Eq:D2int} into \eqref{c5:Eq:sigBC2D} yields 
\begin{align}
\dsigp^{*(2)}(\gamma) &= \frac{\namp \e^{2\i \delta k_x}}{(2 \pi)^2 \i(\gamma + \delta k_x)K(- \delta k_x)} -\sum_{n=0}^{\infty} \frac{\Asig \e^{-2\i \dModeM }}{\i (\dModeM + \delta \omega)(\gamma-\dModeM)}, \notag\\
& -  \frac{\Psig^{*(2)}}{\gamma + \delta \omega}, \label{c5:Eq:sigBC3D}
\end{align}
where
\begin{align*}
\Asig &= \frac{- \namp ( \dModeM + \delta \omega )}{(2 \pi)^2 (\dModeM + \delta k_x) K_-(-\delta k_x) K_+^{\prime}(\dModeM) } .
\end{align*}
We use the notation
\begin{align*}
\tilde{K}_-(\gamma,\eta^-) \coloneqq \frac{K_-(\gamma)}{\gamma - \eta^-},
\end{align*}
so that substitution of \eqref{c5:Eq:sigBC3D} into the Wiener--Hopf equation \eqref{c5:Eq:sigWH2b} yields
\begin{align}
\frac{\FsigP^{*(2)}(\gamma)}{4 \pi K_+(\gamma)}= &-K_-(\gamma) \dsigm^{*(2)}(\gamma) + \Psig^{*(2)} \tilde{K}_-(\gamma, -\delta \omega) \notag\\
& +  \frac{\namp\e^{2\i \delta k_x}}{(2 \pi)^2 \i} \cdot \frac{\tilde{K}_-(\gamma,-\delta k_x)}{K(- \delta k_x)} -\sum_{n=0}^{\infty} \frac{\Asig \e^{-2\i \dModeM }}{\i (\dModeM + \delta \omega)} \tilde{K}_-(\gamma,\dModeM). \label{c5:Eq:wh2}
\end{align}
We note the additive splitting
\begin{align}
\tilde{K}_-(\gamma,\eta^-) = \left[\tilde{K}_-(\gamma,\eta)\right]_+ + \left[\tilde{K}_-(\gamma,\eta)\right]_-, \label{c5:Eq:kTilde}
\end{align}
where
\begin{align*}
 \left[\tilde{K}_-(\gamma,\eta)\right]_+ =  \frac{K_-(\eta^-)}{\gamma - \eta^-}, \qquad  \left[\tilde{K}_-(\gamma,\eta)\right]_- =  \frac{K_-(\gamma)-K_-(\eta^-)}{\gamma - \eta^-}.
\end{align*}
Substituting these splittings into \eqref{c5:Eq:wh2} yields
\begin{align}
\frac{\FsigP^{*(2)}(\gamma)}{2 \pi K_+(\gamma)} - \Psig^{*(2)} \left[\tilde{K}_-(\gamma, -\delta \omega) \right]_+ - \frac{\namp\e^{2\i \delta k_x}}{(2 \pi)^2 \i} \cdot \frac{\left[\tilde{K}_-(\gamma,-\delta k_x)\right]_+}{K(- \delta k_x)} \notag \\
 + \sum_{n=0}^{\infty} \frac{\Asig \e^{-2\i \dModeM }}{\i (\dModeM + \delta \omega)} \left[\tilde{K}_-(\gamma,\dModeM)\right]_+ =-K_-(\gamma) \dsigm^{*(2)}(\gamma) + \Psig^{*(2)} \left[ \tilde{K}_-(\gamma, -\delta \omega) \right]_- \notag\\
 +  \frac{\namp\e^{2\i \delta k_x}}{(2 \pi)^2 \i} \cdot \frac{\left[\tilde{K}_-(\gamma,-\delta k_x)\right]_-}{K(- \delta k_x)} -\sum_{n=0}^{\infty} \frac{\Asig \e^{-2\i \dModeM }}{\i (\dModeM + \delta \omega)} \left[\tilde{K}_-(\gamma,\dModeM)\right]_-, \label{c5:Eq:sigWH2c}
\end{align}
\iffalse
\begin{align}
&\frac{\FsigP^{*(2)}(\gamma)}{2 \pi K_+(\gamma)} -\frac{\Psig^{*(2)} K_-(\kappa) }{\gamma- \gModeO}  + \frac{\namp K_-(-k_x)\e^{\i k_x}}{(2 \pi)^2 \i(\gamma+k_x)K(-k_x)}+\sum_{n=0}^{\infty} \frac{ A_n K_-(\dModeM) \e^{-\i \dModeM }}{(\dModeM -\kappa)(\gamma-\dModeM)} \notag \\
& = -K_-(\gamma) \dsigm^{*(2)}(\gamma) + \frac{\Psig^{*(2)}(K_-(\gamma) - K_-(\kappa)) }{\gamma- \gModeO}  - \frac{\namp (K_-(\gamma)- K_-(-k_x))\e^{\i k_x}}{(2 \pi)^2 \i(\gamma+k_x)K(-k_x)} \notag\\
& -\sum_{n=0}^{\infty} \frac{(K_-(\gamma) - K_-(\dModeM))A_n \e^{-\i \dModeM }}{(\dModeM -\kappa)(\gamma-\dModeM)}
\end{align}
\fi
In a similar way to section \ref{c5:Sec:WHsig1}, we now apply the typical Wiener--Hopf argument. We enforce the unsteady Kutta condition \citep{Ayton2016} which restricts the pressure at the trailing edge to be finite. Consequently, the left hand side of \eqref{c5:Eq:sigWH2c} decays as $|\gamma| \rightarrow \infty$ in $\UHP$ and the left hand side of \eqref{c5:Eq:sigWH2c} tends to an unknown constant as $|\gamma| \rightarrow \infty$ in $\LHP$. Applying analytic continuation and Liouville's theorem determines that this constant must be zero. Accordingly, the coefficient of $\gamma^{-1}$ on the left hand side of \eqref{c5:Eq:sigWH2c} must vanish so that
\begin{align*}
\Psig^{*(2)}= \frac{\namp \e^{2\i \delta k_x}}{(2 \pi)^2 \i K_+(-\delta k_x)} \cdot \frac{1}{K_-(-\delta \omega)} - \sum_{n=0}^{\infty} \frac{ \Asig \e^{-2\i \dModeM }}{\i(\dModeM +\delta \omega)} \cdot \frac{K_-(\dModeM)}{K_-(- \delta \omega)}.  \label{c5:Eq:sigP2}
\end{align*}
\iffalse
By rearranging \eqref{c5:Eq:sigWH2c}, we may write

\begin{align}
\dsig^{*(2)}(\gamma) = \sum_{l=-\infty}^{\infty} &\frac{\Ssig \e^{ -\i \gModel 1}}{\i (\gModeO-\gModel) (\gamma - \gModel)}\left(1 - \frac{K_-(\gModel)}{K_-(\gamma)}\right) \notag \\
+ \sum_{n=0}^{\infty} &\frac{\Asig \e^{- \i \dModeM 1}}{\i(\dModeM- \gModeO)(\gamma - \dModeM)}\left(1-\frac{K_-(\dModeM)}{K_-(\gamma)} \right) +\frac{\Psig^{*(2)}}{\gamma -  \gModeO}\left(1-\frac{K_-(\gModeO)}{K_-(\gamma)} \right)
\end{align}
\fi
So that, after substituting in the downstream representation \eqref{c5:Eq:sigBC3D} and the expression for the pressure constant \eqref{c5:Eq:sigP2}, the right hand side of \eqref{c5:Eq:sigWH2c} yields
\begin{align}
\dsig^{(2)}(\gamma)&= \frac{\namp (\delta \omega - \delta k_x) \e^{2 \i (\gamma + \delta k_x)}}{(2 \pi)^2 \i(\gamma+ \delta k_x)(\gamma + \delta \omega ) K_+(- \delta k_x)} \cdot \frac{1}{K_-(\gamma)} \notag \\
&- \sum_{n=0}^{\infty} \frac{\Asig \e^{2 \i (\gamma - \dModeM) }}{\i(\gamma + \delta \omega )(\gamma - \dModeM)}\cdot\frac{K_-(\dModeM) }{K_-(\gamma)}.
\end{align}
It should be noted that the poles of $\dsig^{(2)}$ in $\UHP$ are only at the zeros of $K_-$.
\iffalse[DETAILS OF 34 SOLUTION HERE]

\subsubsection{Solution to Third and Fourth Wiener--Hopf Equations -- $\dsig^{(3)},\, \dsig^{(4)}$} 

The solution to the couple 3rd and 4th Wiener--Hopf equations is identical to that in previous sections, except

\begin{align*}
D_{1,\Sigma-\textrm{res}}^{(2)}(\theta_k^+)&=\frac{-\e^{\i \theta_k^+1}}{\i(\theta_k^++k \delta)K_-^\prime(\theta_k^+)} \left\{\sum_{l=-\infty}^{\infty}\frac{ \mathcal{S}_l \e^{-\i \gModel 1}}{\theta_k^+ - \gModel} \cdot K_-(\gModel) + \sum_{n=0}^{\infty} \frac{\mathcal{A}_{1,\Sigma,n}  \e^{- \i \dModeM1}}{\theta_k^+ - \dModeM}\cdot K_-(\dModeM) \right\}
\end{align*}
and the solutions are
\begin{align*}
D_{1,\Sigma}^{(3)}(\gamma)&=- \sum_{n=0}^{\infty} \frac{\mathcal{B}_{1,\Sigma,n}}{\gamma-\dModeP} \cdot \frac{K_+(\dModeP)}{K_+(\gamma)}\\
D_{1,\Sigma}^{(4)}(\gamma)&=-\sum_{n=0}^{\infty} \frac{\mathcal{C}_{1,\Sigma,n} \e^{\i1 (\gamma - \dModeM)}}{\i(\gamma-+k \delta)(\gamma - \dModeM)}\cdot\frac{K_-(\dModeM)}{K_-(\gamma)}
\end{align*}

\subsubsection{Solution of 3rd and 4th Wiener--Hopf Equations}
\fi
\subsubsection{Solution to Third and Fourth Wiener--Hopf Equations -- $\dsig^{(3)},\, \dsig^{(4)}$} \label{c5:Sec:WHsig34}
Since the integral equations for $\Delta \ssig^{(3)}$ and $\Delta \ssig^{(4)}$ are coupled, we must solve for them simultaneously. Taking a Fourier Transform of \eqref{c5:Eq:sigGEfn} and applying the boundary conditions for \eqref{c5:Eq:bc3a} and \eqref{c5:Eq:bc4a} for $n=3,4$  gives
\begin{align}
\FsigM^{(3)}(\gamma) &=-4 \pi \left[\dsigm^{(3)}(\gamma) + \dsigp^{(3)}(\gamma) \right] K(\gamma), \label{c5:Eq:sigWH3} \\
\FsigP^{*(4)}(\gamma) &= -4 \pi \left[\dsigm^{*(4)}(\gamma) + \dsigp^{*(4)}(\gamma) \right] K(\gamma),\label{c5:Eq:sigWH4}
\end{align}
\iffalse 
\begin{align*}
\FsigM^{(3)}(\gamma) = \frac{1}{2 \pi} & \int_{-\infty}^{0} \fsig^{(3)}(x) \e^{\i \gamma x} \d x, \\
\dsigm^{(3)}(\gamma) = \frac{1}{2 \pi} \int_{-\infty}^{0} \Delta \ssig^{(3)}(x) \e^{\i \gamma x} \d x,\quad &\quad \dsigp^{(3)}(\gamma) = \frac{1}{2 \pi} \int_{0}^{\infty} \Delta \ssig^{(3)}(x) \e^{\i \gamma x} \d x,
\end{align*}

and

where

\begin{align*}
\FsigP^{*(4)}(\gamma) = \frac{1}{2 \pi} & \int_{0}^{\infty} \fsig^{(4)}(x+1) \e^{\i \gamma x} \d x, \\
\dsigm^{*(4)}(\gamma) = \frac{1}{2 \pi} \int_{-\infty}^{0} \Delta \ssig^{(4)}(x + 1) \e^{\i \gamma x} \d x,\quad &\quad \dsigp^{*(4)}(\gamma) = \frac{1}{2 \pi} \int_{0}^{\infty} \Delta \ssig^{(4)}(x + 1) \e^{\i \gamma x} \d x.
\end{align*}
\fi
where $\FsigM^{(3)}$, $\dsigm^{(3)}$ and $\dsigp^{(3)}$ are defined in an analogous way to \eqref{c5:Eq:F1sigDef} and \eqref{c5:Eq:D1sigDef}, and $\FsigP^{*(4)}$, $\dsigp^{*(4)}$ and $\dsigm^{*(4)}$ are defined in an analogous way to \eqref{c5:Eq:sigFPstar}, \eqref{c5:Eq:sigDPstar} and \eqref{c5:Eq:sigDMstar}. Using a similar approach to \mySec \ref{c5:Sec:WHsig2}, the upstream boundary condition \eqref{c5:Eq:bc3b} may be expressed as
\begin{align*}
\dsigm^{(3)}(\gamma)=\frac{1}{2 \pi \i} \int_{-\infty+\i \tau_0}^{\infty+\i \tau_0} \frac{\dsig^{(2)}(\gamma_1) + \dsig^{(4)}(\gamma_1)}{\gamma_1-\gamma} \d \gamma_1.
\end{align*}
Consequently, we may express $\dsigm^{(3)}$ in terms of its poles $\dModeP$ as
\begin{align}
\dsigm^{(3)}(\gamma)=-\sum_{n=0}^{\infty} \frac{\Bsig}{\gamma -\dModeP}, \label{c5:Eq:sigBres}
\end{align}
where $\Bsig$ are the residues of $\dsigm^{(2)}(\gamma_1) + \dsigm^{(4)}(\gamma_1)$ at $\gamma = \dModeP$. The residues of $\dsigm^{(4)}$ are currently unknown, but the residues of $\dsigm^{(2)}$ are given by
\begin{align*}
\dsigres^{(2)}= %
 \frac{-\e^{2 \i \dModePr }}{K_-^\prime(\dModePr)} &\left\{ \frac{\namp (\delta \omega - \delta k_x) \e^{2 \i \delta k_x}}{(2 \pi)^2 \i(\dModePr+ \delta k_x)(\dModePr+ \delta \omega ) K_+(- \delta k_x)} \right.\\
&\left. + \sum_{n=0}^{\infty} \frac{\Asig \e^{\i (\dModePr - \dModeM) }K_-(\dModeM)}{\i(\dModePr + \delta \omega)(\dModePr - \dModeM)} \right\}.
\end{align*}
We may now substitute \eqref{c5:Eq:sigBres} into \eqref{c5:Eq:sigWH3} to obtain the Wiener--Hopf equation
\begin{align*}
\frac{\FsigM^{(3)}(\gamma)}{4 \pi K_-(\gamma)}+ \sum_{n=0}^{\infty} \frac{\Bsig}{\gamma-\dModeP}K_+(\dModeP) &=\dsigp^{(3)}(\gamma) K_+(\gamma)-\sum_{n=0}^{\infty} \frac{\Bsig}{\gamma-\dModeP} \left(K_+(\gamma) - K_+(\dModeP) \right).
\end{align*}
The edge conditions are identical to those applied in  \mySec \ref{c5:Sec:WHsig1}, and we employ the typical Wiener--Hopf argument to obtain
\begin{align*}
\dsigp^{(3)}(\gamma)&= \sum_{n=0}^{\infty} \frac{\Bsig}{\gamma - \dModeP} \bigg\{1- \frac{K_+(\dModeP)}{K_+(\gamma)} \bigg\} .
\end{align*}
Combining this solution with \eqref{c5:Eq:sigBres} yields
\begin{align}
\dsig^{(3)}(\gamma)&=- \sum_{n=0}^{\infty} \frac{\Bsig}{\gamma - \dModeP} \cdot \frac{K_+(\dModeP)}{K_+(\gamma)}. \label{c5:Eq:sigD3def}
\end{align}
We proceed to the solution for $\dsig^{(4)}$. In a similar way to \mySec \ref{c5:Sec:WHsig2}, we may invert the Fourier transform for the downstream boundary condition \eqref{c5:Eq:bc4b} to write
\begin{align*}
\dsigp^{*(4)}(\gamma) %
&=-\frac{\Psig^{*(4)}}{\i(\gamma + \delta \omega)}-\frac{1}{2 \pi\i} \int_{-\infty-\i \tau_1}^{\infty+\i \tau_1} \sum_{n=0}^{\infty} \frac{\Bsig}{(\gamma_1-\gamma)(\gamma_1 - \dModeP)} \bigg\{1- \frac{K_+(\dModeP)}{K_+(\gamma_1)} \bigg\}\e^{-2\i\gamma_1} \d \gamma_1. 
\end{align*}
This integral can be closed in $\LHP$ to obtain
\begin{align}
\dsigp^{*(4)}(\gamma)&=-\frac{\Psig^{*(4)}}{\gamma +\delta \omega}-\sum_{n=0}^{\infty} \frac{\Csig \e^{-2 \i \dModeM}}{\i(\dModeM + \delta \omega )(\gamma - \dModeM)}, \label{c5:Eq:sigBC4}
\end{align}
where
\begin{align*}
\Csig &= \sum_{k=0}^{\infty} \frac{\i(\dModeMv{n} + \delta \omega )}{(\dModePv{k} - \dModeMv{n})} \cdot \frac{K_+(\dModePv{k})}{K_+^{\prime}(\dModeMv{n})} \cdot \BsigR.
\end{align*}
After truncation, we may write this system of equations in matrix form
\begin{align}
\mathbf{\mathcal{C}}=\mathbf{L}\mathbf{\mathcal{B}}, \label{c5:Eq:sigMat1}
\end{align}
where
\begin{align*}
\left\{ \mathbf{L} \right\}_{n,m}=\frac{\i(\dModeMv{n} + \delta \omega )}{(\dModePv{m} -\dModeMv{n})} \cdot \frac{K_+(\dModePv{m})}{K_+^{\prime}(\dModeMv{n})} .
\end{align*}
\iffalse DETAILS OF FACTORISATIONS
Therefore,

\begin{align}
D_{1+}^{*(4)}(\gamma)K_-(\gamma)&=-\frac{KK_-(\gamma)}{\i(\gamma+k\delta)}-K_-(\gamma)\sum_{n=0}^{\infty} \frac{\mathcal{C}_n\e^{-\i \theta_n^-x_b}}{\i(\theta_n^-+k \delta)(\gamma - \theta_n^-)}\\
&=\underbrace{-\frac{K(K_-(\gamma)-K_-(-k\delta))}{\i(\gamma+k\delta)}- \sum_{n=0}^{\infty} \frac{(K_-(\gamma)-K_-(\theta_n^-))\mathcal{C}_n \e^{-\i \theta_n^-x_b}}{\i(\theta_n^-+k \delta)(\gamma - \theta_n^-)}}_{-}+\underbrace{-\frac{K K_-(-k\delta)}{\i(\gamma+k\delta)})-\sum_{n=0}^{\infty} \frac{K_-(\theta_n^-)\mathcal{C}_n \e^{-\ix_b \theta_n^-}}{\i(\theta_n^-+k \delta)(\gamma - \theta_n^-)}}_+.\\
\end{align}

\fi
By applying the notation introduced in \eqref{c5:Eq:kTilde}, we may express the Wiener--Hopf equation \eqref{c5:Eq:sigWH4} in the form
\begin{align}
\frac{\FsigP^{*(4)} (\gamma)}{4\pi K_+(\gamma)} +\Psig^{*(4)} \left[\tilde{K}_-(\gamma,-\delta \omega )\right]_+ + \sum_{n=0}^{\infty} \frac{\left[\tilde{K}_-(\gamma,\dModeM)\right]_+}{\i(\dModeM + \delta \omega)}\Csig \e^{-2\i \dModeM} \notag \\ 
= \dsigm^{*(4)}(\gamma)K_-(\gamma) -\Psig^{*(4)}  \left[\tilde{K}_-(\gamma,- \delta \omega)\right]_-  -\sum_{n=0}^{\infty} \frac{\left[\tilde{K}_-(\gamma,\dModeM)\right]_-}{\i(\dModeM + \delta \omega)}\Csig\e^{-2\i \dModeM}. \label{c5:Eq:WHsig4}
\end{align}
Employing the unsteady Kutta condition in \eqref{c5:Eq:WHsig4} yields
\begin{align*}
\Psig^{*(4)} =-\sum_{n=0}^{\infty} \frac{\Csig\e^{-2 \i \dModeM}}{\i(\dModeM + \delta \omega)} \cdot \frac{K_-(\dModeM)}{K_-(- \delta\omega)} .
\end{align*}
Finally, applying the downstream boundary condition \eqref{c5:Eq:sigBC4} and rearranging \eqref{c5:Eq:WHsig4} yields
\iffalse
\begin{align*}
\dsigm^{*(4)}(\gamma)&= \frac{K}{\i(\gamma - \gModeO)}\left(1-\frac{K_-(\gModeO)}{K_-(\gamma)}\right)+\sum_{n=0}^{\infty} \frac{\Csig \e^{-\i1 \dModeM}}{\i(\dModeM - \gModeO)(\gamma - \dModeM)}\left(1-\frac{K_-(\dModeM)}{K_-(\gamma)}\right)
\end{align*}

\begin{align*}
D_{1}^{*(4)}(\gamma)&= -\frac{K}{\i(\gamma+k\delta)}\cdot\frac{K_-(-k\delta)}{K_-(\gamma)}-\sum_{n=0}^{\infty} \frac{\mathcal{C}_n \e^{-\ix_b \theta_n^-}}{\i(\theta_n^-+k \delta)(\gamma - \theta_n^-)}\cdot\frac{K_-(\theta_n^-)}{K_-(\gamma)}\\
&=-\sum_{n=0}^{\infty} \frac{\mathcal{C}_n \e^{-\ix_b \theta_n^-}}{\i(\theta_n^-+k \delta)(\gamma - \theta_n^-)}\cdot\frac{K_-(\theta_n^-)}{K_-(\gamma)}\\
\end{align*}

\fi
\begin{align}
\dsig^{(4)}(\gamma)&=-\sum_{n=0}^{\infty} \frac{\Csig \e^{2 \i (\gamma - \dModeM)}}{\i(\dModeM + \delta \omega)(\gamma - \dModeM)}\cdot\frac{K_-(\dModeM)}{K_-(\gamma)}. \label{c5:Eq:sigD4def}
\end{align}
We are now able to calculate the residues of $\dsig^{(3)}$ as
\begin{align*}
\Bsig&=\dsigres-\sum_{m=0}^{\infty} \frac{\Csig \e^{2 \i(\dModePv{n}- \dModeMv{m})}}{\i (\dModePv{n} + \delta \omega )(\dModePv{n} - \dModeMv{m})}\cdot\frac{K_-(\dModeMv{m})}{K^\prime_-(\dModePv{n})},
\end{align*}
or, in matrix form,
\begin{align}
\mathbf{\mathcal{B}}=\mathbf{\dsigres}+\mathbf{F}\mathbf{\mathcal{C}}, \label{c5:Eq:sigMat2}
\end{align}
where
\begin{align*}
\{\mathbf{F}\}_{n,m}&= -\frac{\e^{2\i(\dModePv{n}- \dModeMv{m})}}{\i(\dModePv{n} + \delta\omega)(\dModePv{n} - \dModeMv{m})}\cdot \frac{K_-(\dModeMv{m})}{K^\prime_-(\dModePv{n})}.
\end{align*}
The matrix equations \eqref{c5:Eq:sigMat1} and \eqref{c5:Eq:sigMat2} may be combined and solved to give expressions the final expressions for $\Bsig$ and $\Csig$.
\section{Details of Fourier Inversion} \label{Ap:FourInv}
\setcounter{figure}{0}
The acoustic field is given by
\begin{align*}
\phi(x,y)&=\frac{1}{2}\int_{-\infty}^{\infty} D(\gamma) A(\gamma, x,y) \d \gamma ,
\end{align*}
where $A = A_u + A_d$ and
\begin{align*}
A_{u}(\gamma,x,y)&=  \e^{-\i \gamma x} \cdot \frac{\e^{\i d \gamma+ \i \sigma^\prime} \cos\left(\zeta y\right)}{\cos\left(d \gamma+\sigma^\prime\right)- \cos\left(s \zeta\right)}, \\ %
A_{d}(\gamma,x,y)&= \e^{-\i \gamma x} \cdot \frac{-\cos\left(\zeta(y- s)\right)}{\cos\left(d \gamma+\sigma^\prime\right)- \cos\left(s \zeta\right)}.%
\end{align*}
We calculate the above integral by splitting the physical plane into five separate regions, as illustrated in figure \ref{c5:Fig:FourInvRegions}.
Both $A_{a}$ and $A_{b}$ have poles at the acoustic modes  $\lambda_m^\pm$ where the residues are given by
\begin{align*}
A_{u}^r(\lambda_m^\pm,x,y)&= \mp \frac{\zeta_m^\pm \e^{\i d \lambda_m^\pm+ \i \sigma^\prime} \cos\left(\zeta_m^\pm y\right)}{\Delta \sin\left( s \zeta_m^\pm\right) \sqrt{k^2 w^2-f_m^2} }  \e^{-\i \lambda_m^\pm x},\\
A_{d}^r(\lambda_m^\pm,x,y)&= \pm \frac{\zeta_m^\pm\cos\left(\zeta_m^\pm(y- s)\right)}{\Delta   \sin\left( s \zeta_m^\pm\right)\sqrt{k^2 w^2-f_m^2}}  \e^{-\i \lambda_m^\pm x}.
\end{align*}
In order to proceed, we split the acoustic potential into four components
\begin{align*}
\phi(x,y)&=\underbrace{\frac{1}{2}\int_{-\infty}^{\infty} \left(D^{(1,3)} (\gamma) \right) A_u(\gamma) \e^{-\i \gamma \left( x-y d/s\right)} \d \gamma}_{\phi_{u,u}} \\
&+\underbrace{\frac{1}{2}\int_{-\infty}^{\infty} \left(D^{(1,3)} (\gamma) \right) A_d(\gamma) \e^{-\i \gamma \left( x-y d/s\right)} \d \gamma}_{\phi_{u,d}} \\
&+\underbrace{\frac{1}{2}\int_{-\infty}^{\infty} \left(D^{(2,4)}(\gamma) \right) A_u(\gamma) \e^{-\i \gamma \left( \phi-\psi d/h\right)} \d \gamma }_{\phi_{d,u}} \\
&+\underbrace{\frac{1}{2}\int_{-\infty}^{\infty} \left(D^{(2,4)}(\gamma) \right) A_d(\gamma) \e^{-\i \gamma \left( \phi-\psi d/h\right)} \d \gamma }_{\phi_{d,d}},
\end{align*}
where $D^{(i,j)} = D^{(i)} + D^{(j)}$.
\iffalse
We note that away from the poles of $D$ we have the limiting behaviours 
\begin{align*}
|D^{(1,3)}(\gamma)| \rightarrow 0, \qquad
|D^{(2,4)}(\gamma)|\e^{-2\i \gamma} \rightarrow 0,
\end{align*}
as  $\left| \gamma \right| \rightarrow \infty$. Therefore, \peterAlert{???}
\begin{align*}
\left|\left(D^{(1)} (\gamma)+D^{(3)} (\gamma)\right) A(\gamma) \e^{-\i \gamma(\phi-\psi d/h)}\right| \rightarrow 0 \qquad \textrm{in LHP}\\
\left|\left(D^{(2)} (\gamma)+D^{(4)} (\gamma)\right)A(\gamma) \e^{-\i \gamma \phi_b}\e^{\i \gamma(\phi_b-(\phi-\psi d/h))} \right|=\left|\left(D^{(2)} (\gamma)+D^{(4)} (\gamma)\right)A(\gamma) \e^{-\i \gamma(\phi-\psi d/h)} \right|\rightarrow 0 \qquad \textrm{in UHP}
\end{align*}
and therefore the integrals for $h_0^l(\phi,\psi)$ and $h_0^u(\phi,\psi)$ can be closed in the lower and upper half-planes respectively. 
 In the subsequent integrations, we will use the facts
\begin{align*}
A_1(\theta_n^{\pm}) &= \e^{-\i \theta_n^\pm \phi}\cos\left(n \pi  \frac{\psi}{h} \right) \frac{\e^{\i d \theta_n^\pm+ \i \sigma}- (-1)^n}{\cos \left( d \theta_n^\pm + \sigma \right)- (-1)^n}\\
A_2(\theta_n^{\pm}) &= \i\e^{-\i \theta_n^\pm \phi}\sin\left(n \pi \frac{ \psi}{h} \right) \frac{ \e^{\i d \theta_n^\pm+ \i \sigma}- (-1)^n}{\cos \left( d \theta_n^\pm + \sigma \right)- (-1)^n}\\
\end{align*}
\fi
\begin{figure}
\centering
\begin{tikzpicture}[scale = .8]
\def\indentationRadius{.2}

\def\diamondBlue#1#2#3{ %
	\begin{scope}[shift={#1}]
	    \node [scale=#2, draw, rotate=45, shape=rectangle,  anchor=center, fill = prsared] at (0,0) {};
     \node[below = .1cm] at (0,0) {#3};	    
	\end{scope}
}

\def\diamondRed#1#2#3{ %
	\begin{scope}[shift={#1}]
	    \node [scale=#2, draw, rotate=45, shape=rectangle,  anchor=center, fill = myblue] at (0,0) {};
     \node[below = .1cm] at (0,0) {#3};	    
	\end{scope}
}
\def\rTriangle#1#2#3{ %
	\begin{scope}[shift={#1}]
	    \node [scale=#2, draw, rotate=45, triangle,  anchor=center, fill = myblue] at (0,0) {};
     \node[below = .1cm] at (0,0) {#3};	    
	\end{scope}
}

\draw[Latex-Latex] (0,-5.5) -- (0,5.5);  %
\draw[Latex-Latex] (-5.5,0) -- (5.5,0);   

\coordinate (upole4) at (-2,0); \draw[fill=mypurple] (upole4) circle (\indentationRadius/2) node[below = .1cm] {$\lambda_0^-$};
\draw[fill=mypurple] (-3,-2) circle (\indentationRadius/2) node[left = .1cm] {$\lambda_{-1}^-$};
\draw[fill=mypurple] (-1,-2) circle (\indentationRadius/2) node[left = .1cm] {$\lambda_{+1}^-$};
\draw[fill=mypurple] (-0,-4) circle (\indentationRadius/2) node[left = .1cm] {$\lambda_{+2}^-$};

\coordinate (upole1) at (-4,0); \diamondBlue{(upole1)}{4*\indentationRadius}{$- \delta k_x $};
\coordinate (upole2) at (-1,0); \diamondRed{(upole2)}{4*\indentationRadius}{$- \delta \omega $};

\coordinate (dpole2) at (1,0); \draw[fill=myorange] (dpole2) circle (\indentationRadius/2) node[above = .1cm] {$\theta_0^+$};
\draw[fill=myorange] (0,1.5) circle (\indentationRadius/2) node[right = .1cm] {$\theta_{1}^+$};
\draw[fill=myorange] (.3,3) circle (\indentationRadius/2) node[right = .1cm] {$\theta_{2}^+$};
\draw[fill=myorange] (-.2,4.5) circle (\indentationRadius/2) node[right = .1cm] {$\theta_{3}^+$};

\coordinate (upoleT) at (-.25,0); \draw[fill=mygreen] (upoleT) circle (\indentationRadius/2); \node[above = .1cm] at (upoleT) {$\theta_0^-$};
\draw[fill=mygreen] (0,-1.5) circle (\indentationRadius/2) node[right = .1cm] {$\theta_{1}^-$};
\draw[fill=mygreen] (-0.2,-3) circle (\indentationRadius/2) node[right = .1cm] {$\theta_{2}^-$};
\draw[fill=mygreen] (0.3,-4.5) circle (\indentationRadius/2) node[right = .1cm] {$\theta_{3}^-$};

\coordinate (dpole1) at (.5,0); \draw[fill=myyellow] (dpole1) circle (\indentationRadius/2) node[above = .1cm] {$\lambda_0^+$};
\draw[fill=myyellow] (1.5,2) circle (\indentationRadius/2) node[left = .1cm] {$\lambda_{+1}^+$};
\draw[fill=myyellow] (-.5,2) circle (\indentationRadius/2) node[left = .1cm] {$\lambda_{-1}^+$};
\draw[fill=myyellow] (2.5,4) circle (\indentationRadius/2) node[left = .1cm] {$\lambda_{+2}^+$};
\draw[fill=myyellow] (-1.5,4) circle (\indentationRadius/2) node[left = .1cm] {$\lambda_{-2}^+$};

\draw[thick, dashed, myred, decoration =
{markings,
 mark=at position 0.2 with {\arrow{Latex}}, 
 mark=at position 0.4 with {\arrow{Latex}},
  mark=at position 0.6 with {\arrow{Latex}},
   mark=at position 0.8 with {\arrow{Latex}}},
      postaction = {decorate}
] (5,0) arc (0:180:5);
\draw[thick, dashed, myred, decoration =
{markings,
 mark=at position 0.2 with {\arrow{Latex}}, 
 mark=at position 0.4 with {\arrow{Latex}},
  mark=at position 0.6 with {\arrow{Latex}},
   mark=at position 0.8 with {\arrow{Latex}}},
      postaction = {decorate}
] (5,0) arc (0:-180:5);
\draw[thick,myred,xshift=2pt,
decoration={ markings,
      mark=at position 0.04 with {\arrow{Latex}}, 
      mark=at position 0.2 with {\arrow{Latex}},
      mark=at position 0.7 with {\arrow{Latex}}, 
      mark=at position 0.98 with {\arrow{Latex}}}, 
      postaction={decorate}]
 (-5,0) -- ($(upole1)-(\indentationRadius,0)$) arc (180:0:\indentationRadius) --
		   ($(upole4)-(\indentationRadius,0)$) arc (180:0:\indentationRadius) --
	   	   ($(upole2)-(\indentationRadius,0)$) arc (180:0:\indentationRadius) --
		   ($(upoleT)-(\indentationRadius,0)$) arc (180:0:\indentationRadius) --
		   ($(dpole1)-(\indentationRadius,0)$) arc (-180:0:\indentationRadius) --
   		   ($(dpole2)-(\indentationRadius,0)$) arc (-180:0:\indentationRadius) --
		   (5,0);
\end{tikzpicture}
\caption{Illustration of the locations of poles in the complex \mbox{$\gamma$-plane}, and the relevant contours of integration.} \label{Fig:PoleLoc}
\end{figure}

We first calculate 
\begin{align}
\phi_{u,u}(x,y)&=
\frac{1}{2}\int_{-\infty}^{\infty} D_0^{(1,3)}(\gamma)  A_u(\gamma,x,y) \d \gamma \notag\\
&=2 \pi \int_{-\infty}^{\infty} \left\{ D_0^{(1,3)}(\gamma) J_+(\gamma) \right\} \cdot J_-(\gamma)\cdot \frac{\e^{\i (d-\phi) \gamma+ \i \sigma} \cos\left(\zeta \psi\right)}{\zeta \sin \left(h \zeta \right)}\d \gamma. \label{c5:Eq:firstInt}
\end{align}
Since $J_+(\gamma)$ has algebraic growth, inspection of \eqref{c5:Eq:Dsol} shows that $\left| D_0^{(1,3)}(\gamma) J_+(\gamma)\right| \rightarrow 0$ as $\left| \gamma \right| \rightarrow \infty$ except at the poles at $\theta_n^+$ and $-\delta k_x$. Consequently, we close the above integral in $\UHP$ if $x<d$ and in $\LHP$ if $x>d$. Accordingly, we obtain for $x<d$
\begin{align*}
\phi_{u,u}(x,y)&= - \pi \i \sum_{n=0}^{\infty}  \Bsig A_{u}(\dModeP,x,y) + \pi \i \sum_{m=-\infty}^{\infty} D^{(1,3)}(\lambda_m^+) A_{u}^r(\aModeP,x,y) ,
\end{align*}
and for $x>d$,
\begin{align*}
\phi_{u,u}(x,y)&= -\pi \sum_{n=0}^\infty \frac{\Asig+\Csig }{\theta_n^-+\delta \omega}A_u(\theta_n^-,x,y) -\pi \frac{A_u(-\delta k_x,x,y)}{K(-\delta k_x)} \cdot \frac{w_0}{(2 \pi)^2}.
\end{align*}
We proceed by considering
\begin{align*}
\phi_{u,d}(x,y)&=\frac{1}{2}\int_{-\infty}^{\infty} D^{(1,3)}(\gamma) A_d(\gamma, x,y) \d \gamma \\
&=2 \pi \int_{-\infty}^{\infty} \left\{ D_0^{(1,3)}(\gamma) J_+(\gamma) \right\} \cdot J_-(\gamma)\cdot \frac{-\e^{-\i \gamma x}\cos\left(\zeta (y -s)\right)}{\zeta \sin \left(s \zeta \right)}\d \gamma .
\end{align*}
Using a similar argument to the analysis for \eqref{c5:Eq:firstInt}, we close the above integral in  $\LHP$ if $x>0$ in $\UHP$ if $x<0$. Consequently, for $x>0$ we obtain
\begin{align*}
\phi_{u,d}(x,y)&=-\pi \sum_{n=0}^\infty \frac{\Asig+\Csig}{\dModeM + \delta \omega}A_d(\dModeM,x,y) -\pi \frac{A_d(-\delta k_x,x,y)}{K(-\delta k_x)} \cdot \frac{w_0}{(2 \pi)^2},
\end{align*}
and for $x<0$ we obtain
\begin{align*}
\phi_{u,d}(x,y)&= - \pi \i \sum_{n=0}^{\infty}  \Bsig A_{d}(\dModeP,x,y) + \pi \i \sum_{m=-\infty}^{\infty} D^{(1,3)}(\lambda_m^+) A_{d}^r(\aModeP,x,y) ,
\end{align*}
We now consider
\begin{align*}
\phi_{d,u}(x,y)&=\frac{1}{2}\int_{-\infty}^{\infty} D^{(2,4)}_0(\gamma) A_u(\gamma, x,y)\d \gamma \\
&= 2 \pi \int_{-\infty}^{\infty} \left\{D^{(2,4)}(\gamma)\e^{-2 \i \gamma} J_-(\gamma) \right\} J_+(\gamma) \frac{\e^{\i (d + 2-x) \gamma+ \i \sigma^\prime} \cos\left(\zeta y\right)}{\zeta \sin(s \zeta)}\d \gamma .
\end{align*}
Following a similar argument to \eqref{c5:Eq:firstInt}, $\left|D^{(2,4)}_0(\gamma)\e^{-2 \i \gamma} J_-(\gamma) \right| \rightarrow 0$ as $\left| \gamma \right| \rightarrow \infty$ except at the poles at $-\delta \omega$, $-\delta k_x$ and $\dModePM$. Consequently, we close the integral in $\UHP$ if $x<d + 2$ and in $\LHP$ if $x>d+2$. For $x<d+2$ we obtain
\begin{align*}
\phi_{d,u}(x,y) &=\pi \i \sum_{n=0}^\infty B_n A_u(\theta_n^+),
\end{align*}
and for $x>d+2$ we obtain
\begin{align*}
\phi_{d,u}(x,y)&= \pi \sum_{n=0}^{\infty} \frac{\Asig+\Csig}{\dModeM+\delta \omega} A_u(\dModeM,x,y) + \pi \i P A_u(-\delta \omega,x,y) \\
&+\pi \frac{A_u(-\delta k_x,x,y)}{K(-\delta k_x)} \cdot \frac{w_0}{(2 \pi)^2} - \pi \i \sum_{m=-\infty}^{\infty} D^{(2,4)}(\aModeM) A_{u}^r(\aModeM,x,y).
\end{align*}
The final integral is
\begin{align*}
\phi_{d,d}(x,y)&=\frac{1}{2}\int_{-\infty}^{\infty} D^{(2,4)}(\gamma)  \e^{-\i \gamma \phi} A_d(\gamma,x,y) \d \gamma 
\\&= 2 \pi \int_{-\infty}^{\infty} \left\{D^{(2,4)}(\gamma)\e^{-2 \i \gamma} J_-(\gamma) \right\} J_+(\gamma) \frac{ -\e^{\i (2-x) \gamma} \cos\left(\zeta (y - s) \right)}{\zeta \sin(s \zeta)}\d \gamma .
\end{align*}
Using similar arguments to the previous integrals, we close the above integral in $\UHP$  if $x<2$ and in $\LHP$ if $x>2$. Consequently, for $x<2$ we obtain
\begin{align*}
\phi_{d,d}(\phi,\psi) &=\pi \i \sum_{n=0}^\infty B_n A_d(\theta_n^+) ,
\end{align*}
and for $x>2$ we obtain
\begin{align*}
\phi_{d,d}(x,y)&= \pi \sum_{n=0}^{\infty} \frac{\Asig+\Csig}{\dModeM+\delta \omega} A_d(\dModeM,x,y) + \pi \i P A_d(-\delta \omega,x,y) \\
&+\pi \frac{A_d(-\delta k_x,x,y)}{K(-\delta k_x)} \cdot \frac{w_0}{(2 \pi)^2} - \pi \i \sum_{m=-\infty}^{\infty} D^{(2,4)}(\aModeM) A_{d}^r(\aModeM,x,y).
\end{align*}
Summing the contributions from each integral yields the full Fourier inversion in section \ref{c5:Sec:FourInversion}
\section{Factorisation of Kernel Function} \label{Ap:splitting}

The kernel function is defined as \eqref{c5:Eq:kDef}
\begin{align}
K(\gamma) &= \frac{\zeta \sin(s \zeta)}{4 \pi\left(\cos(s \zeta) - \cos( d \gamma + \sigma^\prime) \right)} +\frac{1}{4 \pi} \left( \mu_0 - \i \mu_1 \gamma - \mu_2 \gamma^2 \right). \label{Eq:apKdef} %
\end{align}
We seek a multiplicative factorisation of this function into parts that are analytic in $\ULHP$ respectively. Consequently, we restrict our attention to the upper half plane and a corresponding factorisation for the lower half plane can be constructed by an appropriate symmetry argument.

\subsection{Factorisation of Poles of $K$}
Previous work \citep{Peake1992,Glegg1999} has factorised the poles of $K$ into the form:
\begin{align*}
\cos(s \zeta) - \cos( d \gamma + \sigma^\prime) &= E(\gamma) \prod_{m = -\infty}^\infty \left( 1 - \frac{\gamma}{\lambda_m^+} \right) \prod_{m = -\infty}^\infty \left( 1 - \frac{\gamma}{\lambda_m^-} \right),
\end{align*}
where $E(\gamma)$ is an entire function that contains no zeros, and
\begin{align*}
\lambda_m^\pm &= - f_m \sin(\chi) \pm \cos(\chi) \zeta(f_m), & f_m = \frac{\sigma^\prime - 2 \pi m}{\Delta}.
\end{align*}
\subsubsection{Asymptotic Behaviour of Poles}
We note that the asymptotic behaviour of these poles is
\begin{align}
\lambda_m^+ &\sim \asymPorousAcouP{0}{R} m + \asymPorousAcouP{2}{R} + o (1), \label{PEq:asymAcouExpI} \\ 
\lambda_{-m}^+ &\sim \asymPorousAcouP{0}{L} m + \asymPorousAcouP{2}{L} + o (1), \label{PEq:asymAcouExpII}
\end{align}
as $n \rightarrow \infty$ where
\begin{align*}
\asymPorousAcouP{0}{R} &=  2 \pi \frac{d + \i s}{\Delta^2},&
\asymPorousAcouP{2}{R} &= -\frac{d + \i s }{\Delta^2} \sigma^\prime ,\\
\asymPorousAcouP{0}{L} &= 2 \pi \frac{-d + \i s}{\Delta^2},&
\asymPorousAcouP{2}{L} &= \frac{-d + \i s }{\Delta^2} \sigma^\prime.
\end{align*}
The subscripts $R$ and $L$ indicate that the pole is in the right or left hand side of $\UHP$ respectively.
\subsection{Factorisation of Zeros of $K$} \label{c5:Sec:zerosFac}
We now outline the procedure for factorising the zeros of $K$. In contrast to previous analyses for rigid plates \citep{Peake1992,Glegg1999}, no analytic factorisation is available. A numerical root finding algorithm is sufficient to find the locations of these roots, but we also require some knowledge about their asymptotic behaviour. The reason for this is that during the Wiener-Hopf method we must know the asymptotic behaviour of the factorised kernel function. This asymptotic behaviour is inextricably linked to the asymptotic behaviour of the kernel's zeros and poles.

We focus on the zeros located in the first quadrant of the complex $\gamma-$plane. These roots are labelled as $\porousRootsI{n}$. The asymptotic behaviour of the roots in the other quadrants can be determined by a similar procedure.

Recall the definition of the branch cut of $\zeta$:
\begin{align*}
\zeta = \sqrt{k^2 w^2 - \gamma^2} = \e^{i \psi_1/2} \e^{i \psi_2/2} \left| k^2 w^2 - \gamma^2 \right|^{1/2},
\end{align*}
where 
\begin{align*}
\psi_1 = \arg (k w - \gamma), \qquad \psi_2 = \arg (k w + \gamma),
\end{align*}
and 
\begin{align*}
\pi /2 < \psi_1 < 5\pi/2, \qquad -\pi /2 < \psi_2 < 3\pi/2.
\end{align*}
Since $\porousRootsI{n}$ are in the first quadrant, we have
\begin{align*}
\zeta(\porousRootsI{n}) \sim  \i \porousRootsI{n}, \qquad \textnormal{as } n \rightarrow \infty.
\end{align*}
This leads us to determine the following asymptotic behaviours:
\begin{align*}
\sin(s \zeta(\porousRootsI{n})) \sim -\frac{1}{2 \i} \e^{ s\porousRootsI{n}},
\end{align*}
\begin{align*}
\cos(s \zeta(\porousRootsI{n})) \sim \frac{1}{2} \e^{ s\porousRootsI{n}}, \qquad \cos(d \porousRootsI{n} + \sigma^\prime) \sim \frac{1}{2} \e^{-\i (d \porousRootsI{n} + \sigma^\prime)}.
\end{align*}
We now substitute these representations into \eqref{Eq:apKdef} to obtain asymptotic expansions for the roots $\porousIRootsI{n}$. Each case must be considered separately, although the asymptotic behaviours are similar.\\

\subsubsection*{Case I Boundary Condition}
For the no-mean-flow boundary condition, the asymptotic behaviour of the roots obeys
\iffalse
\begin{align*}
\zeta(\porousRootsI{n}) \sin(s \zeta(\porousRootsI{n})) + \mu_0 (\cos(s \zeta(\porousRootsI{n})) - \cos( d \porousRootsI{n} + \sigma^\prime))\\
\sim  -\frac{\porousRootsI{n}}{2} \e^{ s\porousRootsI{n}} + \mu_0 \left( \frac{1}{2} \e^{ s\porousRootsI{n}} - \frac{1}{2} e^{-\i( d \porousRootsI{n}+ \sigma^\prime)}\right)
\end{align*}
\fi
\begin{align}
\porousRootsI{n} \sim \mu_0 \left( 1 - \exp \left[-\i((d-\i s)\porousRootsI{n} + \sigma^\prime) \right] \right). \label{PEq:rootAsymp}
\end{align}
We seek an asymptotic expansion of the first quadrant roots of the classical form
\begin{align}
\porousRootsI{n} & \sim \sum_{m = 0}^\infty \asymPorousRootsCoefsI{k} \asymPorousRootsFunI{k}{n},  \label{PEq:asymExpans}
\end{align}
where $\asymPorousRootsFunI{k+1}{n} = o\left( \asymPorousRootsFunI{k}{n} \right)$ as $n \rightarrow \infty$.
In \eqref{PEq:rootAsymp} we require the linear and exponential terms to match. However, since $\left| \porousRootsI{n}\right| \rightarrow \infty$ as $n \rightarrow \infty$, the exponential term will grow at a faster rate than the linear term. Consequelty, the real part of the argument of the exponential must asymptotically small compared to the imaginary part. We therefore expand the roots into real and imaginary parts as $\porousRootsI{n} = \porousRRootsI{n} + \i \porousIRootsI{n}$ and write
\begin{align}
d \porousIRootsI{n}  - s \porousRRootsI{n} = G(n), \label{PEq:asym1}
\end{align}
where $G(n) = o(\asymPorousRootsFunI{0}{n})$. Rearranging yields
\begin{align*}
\porousRootsI{n}&=\porousRootsI{n} + \i \porousIRootsI{n}= \left(1+ \i\frac{s}{d} \right) \porousRRootsI{n}  -\i \frac{G(n)}{d}.
\end{align*}
Since the arguments of the left and right hand sides of equation \eqref{PEq:rootAsymp} must match, we obtain an expression for the imaginary part of the argument of the exponential:
\begin{align}
\sigma^\prime + d \porousRRootsI{n} + s \porousIRootsI{n} &= \pi - \arctan\left(\frac{s}{d}\right) - 2 n \pi + o\left(1\right) .  \label{PEq:asym2}
\end{align}
Applying the asymptotic expansion \eqref{PEq:asymExpans} and taking the leading order terms of \eqref{PEq:asym1} and \eqref{PEq:asym2} yields
\begin{align}
\begin{split}
\asymPorousRootsFunI{0}{n} &= n ,\\
\asymPorousRootsCoefsI{0} &= \frac{2 \pi (d + \i s)}{\Delta^2} .
\end{split}
\label{PEq:coefAsymp0}
\end{align}
We may now substitute the expansion for $\porousRootsI{n}$ so far into \eqref{PEq:rootAsymp} to obtain
\begin{align*}
\frac{2 \pi (d + \i s)}{\Delta^2} n + o(n) \sim {\mu_0} \left( 1 + \exp \left[ G(n) + \i  \arctan\left(\frac{s}{d}\right) \right] \right).
\end{align*}
We now match leading order terms to obtain
\begin{align*}
\frac{2 \pi }{\Delta} n = \mu_0 \exp \left[ G(n) \right] ,
\end{align*}
so that
\begin{align*}
G(n) &= \log \left(n\right) + \log \left(\frac{2 \pi }{ \mu_0 \Delta} \right).
\end{align*}
Similarly, taking the leading order terms in \eqref{PEq:asym1} and \eqref{PEq:asym2} yields
\begin{align}
\begin{split}
\asymPorousRootsFunI{1}{n} &= \log(n) , \\
\asymPorousRootsCoefsI{1} &= \i \frac{d+ \i s}{\Delta^2} .
\end{split}
\label{PEq:coefAsymp1}
\end{align}
and
\begin{align}
\begin{split}
\asymPorousRootsFunI{2}{n} &= 1 , \\
\asymPorousRootsCoefsI{2} &= \frac{d+ \i s}{\Delta^2} \cdot \left(\pi - \sigma^\prime + \i \log \left(\frac{2 \pi }{\mu_0 \Delta}\right) - \arctan\left(\frac{s}{d}\right)\right) .
\end{split}
\label{PEq:coefAsymp2}
\end{align}
Substitution of (\ref{PEq:coefAsymp0}, \ref{PEq:coefAsymp1}, \ref{PEq:coefAsymp2}) into the asymptotic expansion \eqref{PEq:asymExpans} yields 
\begin{align}
\porousRootsI{n} & \sim  \frac{d+ \i s}{\Delta^2} \left( 2 \pi n + \i \log(n)  + \pi - \sigma^\prime + \i \log \left(\frac{2 \pi }{\mu_0 \sqrt{\Delta^2}}\right) - \arctan\left(\frac{s}{d}\right) + o(1)  \right). \label{PEq:asymExpansFinI}
\end{align}
Similar analysis yields the asymptotic behaviour for the roots in the second quadrant as
\begin{align}
\porousRootsII{n} & \sim  \frac{-d+ \i s}{\Delta^2} \left( 2 \pi n -\i \log(n)  - \pi + \sigma^\prime - \i \log \left(\frac{2 \pi }{\mu_0 \Delta}\right) - \arctan\left(\frac{s}{d}\right) + o(1)  \right). \label{PEq:asymExpansFinII}\\
\end{align}

 \subsubsection*{Case II Boundary Condition}
For the Darcy-type boundary condition, the asymptotic behaviour of the roots obeys the equation
\iffalse
\begin{align*}
\zeta(\porousRootsI{n}) \sin(s \zeta(\porousRootsI{n})) + \mu (\cos(s \zeta(\porousRootsI{n})) - \cos( d \porousRootsI{n} + \sigma^\prime))\\
\sim  -\frac{\porousRootsI{n}}{2} \e^{ s\porousRootsI{n}} + \mu \left( \frac{1}{2} \e^{ s\porousRootsI{n}} - \frac{1}{2} e^{-\i( d \porousRootsI{n}+ \sigma^\prime)}\right)
\end{align*}
\fi
\begin{align}
1 \sim-\i {\mu_1} \left(1 - \exp \left[-\i((d-\i s)\porousRootsI{n} + \sigma^\prime) \right] \right). \label{PEq:rootAsympDarcy}
\end{align}
We assume an asymptotic expansion of the roots of the form \eqref{PEq:asymExpans}.
Similar reasoning to the previous sections yields that the leading order terms are also given by
\begin{align}
\asymPorousRootsCoefsI{0} = \frac{2 \pi (d + \i s)}{\Delta^2}, \qquad\qquad \asymPorousRootsCoefsII{0} = \frac{2 \pi (-d + \i s)}{\Delta^2} .
\label{PEq:coefAsymp0Darcy}
\end{align}
We may now solve \eqref{PEq:rootAsympDarcy} directly to find the coefficients of the next two orders of the asymptotic expansion as
\begin{align*}
\asymPorousRootsCoefsI{1} &= 0,& \asymPorousRootsCoefsII{1} &= 0,\\
\asymPorousRootsCoefsI{2} &=  \frac{d + \i s}{\Delta^2} \left( \i \log\left(1 +\frac{1 }{\i\mu_0}\right) - \sigma^\prime \right), & \asymPorousRootsCoefsII{2} &= \frac{-d + \i s}{\Delta^2} \left( -\i \log\left(1 -\frac{1 }{\i\mu_0}\right) + \sigma^\prime \right).\\
\end{align*}
\iffalse
In figure \ref{Fig:darcyRoots} we plot the error of our asymptotic expansion compared to the actual roots found with a numerical algorithm.
\begin{figure}[!h]
\centering
\includegraphics[width = .5\linewidth]{example-image}
\end{figure}
\fi
\subsubsection*{Case III boundary condition}
For the case III boundary condition, the asymptotic behaviour of the roots obeys
\iffalse
\begin{align*}
\zeta(\porousRootsI{n}) \sin(s \zeta(\porousRootsI{n})) + \mu (\cos(s \zeta(\porousRootsI{n})) - \cos( d \porousRootsI{n} + \sigma^\prime))\\
\sim  -\frac{\porousRootsI{n}}{2} \e^{ s\porousRootsI{n}} + \mu \left( \frac{1}{2} \e^{ s\porousRootsI{n}} - \frac{1}{2} e^{-\i( d \porousRootsI{n}+ \sigma^\prime)}\right)
\end{align*}
\fi
\begin{align}
1 \sim - {\mu_2} {\porousRootsI{n}}  \left( 1 - \exp \left[-\i((d-\i s)\porousRootsI{n} + \sigma^\prime) \right] \right). \label{PEq:rootAsympImpedance3}
\end{align}
Similar analysis to the previous sections possesses an identical asymptotic expansion (up to the terms considered) and we have, at leading order,
\begin{align}
\asymPorousRootsCoefsI{0} = \frac{2 \pi (d + \i s)}{\Delta^2}, \qquad\qquad \asymPorousRootsCoefsII{0} = \frac{2 \pi (-d + \i s)}{\Delta^2} .
\label{PEq:coefAsymp0Darcy2}
\end{align}
Substitution of \eqref{PEq:coefAsymp0Darcy2} into \eqref{PEq:rootAsympImpedance3}
yields the coefficients of the next two orders of the asymptotic expansion as
\begin{align*}
\asymPorousRootsCoefsI{1} &= 0, & \asymPorousRootsCoefsII{1} &= 0,\\
\asymPorousRootsCoefsI{2} &=  \frac{\sigma^\prime (-d - \i s)}{\Delta^2},  & \asymPorousRootsCoefsII{2} &=  \frac{\sigma^\prime(-d + \i s)}{\Delta^2}  .
\end{align*}

\subsection{Full Factorisation of Kernel, $K$}
We propose a multiplicative splitting of $K$ of the form:
\begin{align*}
K(\gamma) &= K_+(\gamma) K_-(\gamma),
\end{align*}
where
\begin{align}
\addtocounter{equation}{1}
K_-(\gamma) & = \e^{E(\gamma)} \frac{\prod_{m=1}^{\infty} \left(1 - \gamma / \porousRootsI{m} \right)\left(1 - \gamma / \porousRootsII{m}\right)}{\prod_{m=-\infty}^{\infty} \left(1 - \gamma / \lambda_{m}^+ \right)}, \label{Eq:Kfac} \tag{\theequation .a}\\
K_+(\gamma) & = \e^{-E(\gamma)} \frac{\prod_{m=1}^{\infty} \left(1 - \gamma / \porousRootsIM{m} \right)\left(1 - \gamma / \porousRootsIIM{m}\right)}{\prod_{m=-\infty}^{\infty} \left(1 - \gamma / \lambda_{m}^- \right)}K(0). \tag{\theequation .b}
\end{align}
The entire function $E$ is included to ensure that $K_\pm$ has algebraic growth in $\ULHP$ respectively. Previous works for rigid blades \citep{Peake1992,Glegg1999}  have derived the form of the entire function $E$ and shown that it is a polynomial. In the remainder of this section, we will show that $E$ is in fact a constant for the present problem.

\subsubsection*{Asymptotic Behaviour of Proposed Factorisation}
We focus on $K_-$ since the asymptotic behaviour of $K_+$ follows in an analogous manner. We first establish the existence and growth of some relevant products.
\begin{prop}
	The infinite product
	\begin{align}
	P_1 = \prod_{m=1}^\infty \left(\frac{\asymPorousRootsCoefsI{0} m + \asymPorousRootsCoefsI{2}}{\porousRootsI{m}}\right) \cdot
	\left(\frac{\asymPorousRootsCoefsII{0} m + \asymPorousRootsCoefsII{2}}{\porousRootsII{m}}\right) 
	\label{PEq:product}
	\end{align}
	exists.
	\label{Pprop}
\end{prop}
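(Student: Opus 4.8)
The plan is to show that the infinite product $P_1$ converges by establishing that its logarithm converges absolutely, which reduces to showing that the general term differs from $1$ by a summable quantity. First I would take the logarithm of the product and write
\begin{align*}
\log P_1 = \sum_{m=1}^\infty \left[ \log\left(\frac{\asymPorousRootsCoefsI{0} m + \asymPorousRootsCoefsI{2}}{\porousRootsI{m}}\right) + \log\left(\frac{\asymPorousRootsCoefsII{0} m + \asymPorousRootsCoefsII{2}}{\porousRootsII{m}}\right) \right].
\end{align*}
The key input is the asymptotic expansion of the roots derived earlier in the appendix. For the case II (Darcy) boundary condition, equations \eqref{PEq:asymExpansFinI} together with \eqref{PEq:coefAsymp0Darcy} give $\porousRootsI{m} = \asymPorousRootsCoefsI{0} m + \asymPorousRootsCoefsI{2} + o(1)$ as $m \to \infty$, with $\asymPorousRootsCoefsI{1} = 0$; similarly for $\porousRootsII{m}$.

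The crucial observation is that the $\log(n)$ term vanishes in the case II and case III expansions (since $\asymPorousRootsCoefsI{1} = \asymPorousRootsCoefsII{1} = 0$), so that the numerator $\asymPorousRootsCoefsI{0} m + \asymPorousRootsCoefsI{2}$ matches $\porousRootsI{m}$ up to an $o(1)$ error. More precisely, I would establish the sharper estimate that the error term in the root expansion decays like $O(1/m)$ or faster, so that
\begin{align*}
\frac{\asymPorousRootsCoefsI{0} m + \asymPorousRootsCoefsI{2}}{\porousRootsI{m}} = 1 + O\!\left(\frac{1}{m^{1+\epsilon}}\right),
\end{align*}
for some $\epsilon > 0$. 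Then $\log$ of each factor is $O(m^{-1-\epsilon})$, the series converges absolutely by comparison with $\sum m^{-1-\epsilon}$, and hence $\log P_1$ converges to a finite value, so $P_1$ exists and is nonzero.

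The main obstacle will be quantifying the decay rate of the remainder $o(1)$ in the root asymptotics sharply enough to guarantee summability. The expansions as stated only assert $o(1)$ accuracy, which is insufficient for convergence of the product; I would need to return to the transcendental defining relation \eqref{PEq:rootAsympDarcy} (or its case III analogue \eqref{PEq:rootAsympImpedance3}) and iterate one further order to extract the precise algebraic rate at which $\porousRootsI{m} - (\asymPorousRootsCoefsI{0} m + \asymPorousRootsCoefsI{2})$ tends to zero. The dominant balance argument used to derive the leading coefficients shows that the correction is controlled by the exponentially small term $\exp[-\i((d - \i s)\porousRootsI{m} + \sigma^\prime)]$, and since $\porousRootsI{m}$ has positive imaginary part growing linearly in $m$, this correction in fact decays \emph{exponentially} in $m$. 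This exponential decay is far stronger than the $m^{-1-\epsilon}$ needed, so once the rate is pinned down the convergence is immediate; the delicate part is simply organizing the estimate on the exponential remainder uniformly in $m$.
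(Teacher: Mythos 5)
There is a genuine gap, and it is structural rather than technical. Your argument is confined to the case II and case III boundary conditions, for which $\asymPorousRootsCoefsI{1}=\asymPorousRootsCoefsII{1}=0$; but Proposition \ref{Pprop} is part of the factorisation of the \emph{general} kernel and must in particular hold for case I, where the expansions \eqref{PEq:asymExpansFinI} and \eqref{PEq:asymExpansFinII} contain non-trivial logarithmic terms, $\asymPorousRootsCoefsI{1}=\i(d+\i s)/\Delta^2\neq 0$. In case I each single factor of $P_1$ behaves like $1-\tfrac{\i}{2\pi}\tfrac{\log m}{m}+o(m^{-1})$, so the logarithm of an individual factor is of size $\log(m)/m$, which is \emph{not} absolutely summable: the factor-by-factor absolute-convergence strategy you propose fails at the outset in that case, no matter how sharply the remainder is estimated. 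What rescues the product is the pairing of the two families of roots: since $\asymPorousRootsCoefsI{1}/\asymPorousRootsCoefsI{0}=\i/(2\pi)$ while $\asymPorousRootsCoefsII{1}/\asymPorousRootsCoefsII{0}=-\i/(2\pi)$, the divergent $\log(m)/m$ contributions cancel between the $R$ and $L$ factors, leaving a combined factor $1+o(m^{-1})$. This cancellation is the substance of the paper's proof and is entirely absent from your proposal. (You are right that $o(1)$ accuracy of the root expansion is, strictly, not enough to close the comparison argument; but the missing idea is the $R$--$L$ cancellation, not a sharper one-sided remainder bound.)

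Separately, the mechanism you invoke for the decay of the remainder is false. The quantity $\exp[-\i((d-\i s)\porousRootsI{m}+\sigma^\prime)]$ is not exponentially small: the roots lie asymptotically along the ray through $d+\i s$, and $(d-\i s)(d+\i s)=\Delta^2$ is real, so the argument of this exponential is $-2\pi\i m$ plus bounded terms --- the linearly growing imaginary part of $\porousRootsI{m}$ is exactly annihilated by the factor $d-\i s$. Indeed, the dominant balance \eqref{PEq:rootAsympDarcy} forces this exponential to tend to the finite non-zero constant $1+1/(\i\mu_1)$, and that limit is precisely what determines $\asymPorousRootsCoefsI{2}$; were the exponential genuinely decaying, the constant $\asymPorousRootsCoefsI{2}$ your numerators rely on would not arise at all. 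The true corrections to $\porousRootsI{m}-(\asymPorousRootsCoefsI{0}m+\asymPorousRootsCoefsI{2})$ are algebraic, coming from terms such as $\zeta(\gamma)=\i\gamma\left(1+O(\gamma^{-2})\right)$ in the large-$\gamma$ expansion of the kernel, together with genuinely exponentially small contributions like $\e^{-2s\Re[\porousRootsI{m}]}$. An $O(1/m)$ bound of this algebraic type gives factors $1+O(1/m^2)$ and would indeed suffice for cases II and III, but it has to be extracted from those algebraic corrections; the exponential-smallness argument as stated does not establish it.
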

\begin{proof}
	We use the asymptotic expansions \eqref{PEq:asymExpansFinI} and \eqref{PEq:asymExpansFinII} to obtain
	\begin{align*}
	\frac{\asymPorousRootsCoefsI{0} m + \asymPorousRootsCoefsI{1}}{\porousRootsI{m}} &\sim 1 - \frac{ \asymPorousRootsCoefsI{1} \log(m) }{\asymPorousRootsCoefsI{0} m} + o(m^{-1}), \\
	\frac{\asymPorousRootsCoefsII{0} m + \asymPorousRootsCoefsII{1}} {\porousRootsII{m}}&\sim 1 - \frac{ \asymPorousRootsCoefsII{1} \log(m) }{\asymPorousRootsCoefsII{0} m} + o(m^{-1}),
	\end{align*}
	respectively. Substitution into the product \eqref{PEq:product} yields
	\begin{align}
P_1 = 	\prod_{m=1}^\infty \left(1 - A \frac{\log (m)}{m} + o(m^{-1}) \right), 
	\label{PEq:product2}
	\end{align}
	where
	\begin{align*}
	A &=  \frac{ \asymPorousRootsCoefsI{1}  }{\asymPorousRootsCoefsI{0}} + \frac{ \asymPorousRootsCoefsII{1} }{\asymPorousRootsCoefsII{0}},  %
	\end{align*}
	and application of \eqref{PEq:coefAsymp0} and \eqref{PEq:coefAsymp1} shows that $A = 0$. Consequently, the products \eqref{PEq:product2} and therefore \eqref{PEq:product} exist via the comparison method.
\end{proof}
\begin{prop}
	The infinite product
	\begin{align}
	P_{2} (\gamma) &= \prod_{m=1}^{\infty} \left(\frac{\porousRootsI{m} - \gamma }{\asymPorousRootsCoefsI{0} m + \asymPorousRootsCoefsI{2} - \gamma } \right) \cdot \left(\frac{\porousRootsII{m} - \gamma }{\asymPorousRootsCoefsII{0} m + \asymPorousRootsCoefsII{2} - \gamma } \right) .
	\label{PEq:product1a}
	\end{align}
	has at worst algebraic growth as $\gamma \rightarrow \infty$ in $\LHP$.
	\label{Pprop2}
\end{prop}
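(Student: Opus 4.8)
\textbf{Proof proposal for Proposition \ref{Pprop2}.}

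The plan is to analyse the product $P_2(\gamma)$ by comparing the true zeros $\porousRootsI{m}$, $\porousRootsII{m}$ with their leading asymptotic surrogates $\asymPorousRootsCoefsI{0} m + \asymPorousRootsCoefsI{2}$ and $\asymPorousRootsCoefsII{0} m + \asymPorousRootsCoefsII{2}$. The key observation is that each factor differs from $1$ by a term controlled by the discrepancy between the exact root and its linear approximation. From the asymptotic expansions \eqref{PEq:asymExpansFinI} and \eqref{PEq:asymExpansFinII}, this discrepancy is $\porousRootsI{m} - (\asymPorousRootsCoefsI{0} m + \asymPorousRootsCoefsI{2}) \sim \asymPorousRootsCoefsI{1}\log(m) = \i\frac{d+\i s}{\Delta^2}\log(m)$, which grows only logarithmically. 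First I would write each factor as
\begin{align*}
\frac{\porousRootsI{m}-\gamma}{\asymPorousRootsCoefsI{0}m + \asymPorousRootsCoefsI{2}-\gamma} = 1 + \frac{\porousRootsI{m}-(\asymPorousRootsCoefsI{0}m+\asymPorousRootsCoefsI{2})}{\asymPorousRootsCoefsI{0}m + \asymPorousRootsCoefsI{2}-\gamma},
\end{align*}
and similarly for the $L$ factor, so that the numerator of the correction term is $O(\log m)$ while the denominator is $O(m)$ uniformly for $\gamma$ bounded away from the surrogate points.

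Next I would split the analysis into the regime of large $m$ and the dependence on $\gamma$. For fixed large $m$, the correction term behaves like $\pm A\,\frac{\log m}{m} + o(m^{-1}/\gamma)$, and crucially the $\log m / m$ contributions from the $R$ and $L$ factors combine with coefficient $A = \asymPorousRootsCoefsI{1}/\asymPorousRootsCoefsI{0} + \asymPorousRootsCoefsII{1}/\asymPorousRootsCoefsII{0}$, which vanishes exactly as shown in the proof of Proposition \ref{Pprop} via \eqref{PEq:coefAsymp0} and \eqref{PEq:coefAsymp1}. This cancellation is what allows the product to converge rather than diverge, and I would invoke it to write the logarithm of $P_2$ as a convergent sum $\sum_m \log(1 + \varepsilon_m(\gamma))$ with $\varepsilon_m(\gamma) = O\!\left(\frac{\log m}{m}\cdot\frac{\gamma}{m}\right)$ once the leading $\log m / m$ terms cancel pairwise. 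To control the $\gamma$-dependence, I would estimate $\sum_m |\varepsilon_m(\gamma)|$ for $\gamma \to \infty$ in $\LHP$ by separating the sum at $m \approx |\gamma|$: terms with $m \lesssim |\gamma|$ each contribute a bounded amount whose number scales with $|\gamma|$, yielding at most polynomial growth, while terms with $m \gg |\gamma|$ form a tail that is uniformly bounded. Since $\gamma \in \LHP$ and the surrogate points $\asymPorousRootsCoefsI{0}m + \asymPorousRootsCoefsI{2}$, $\asymPorousRootsCoefsII{0}m+\asymPorousRootsCoefsII{2}$ lie in the upper half-plane (their imaginary parts $\sim 2\pi s m/\Delta^2 > 0$ grow), the denominators $\asymPorousRootsCoefsI{0}m + \asymPorousRootsCoefsI{2}-\gamma$ stay bounded away from zero with modulus $\gtrsim \max(m, |\gamma|)$, which keeps every factor well-defined and the estimates uniform.

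The main obstacle I anticipate is making the $\gamma$-dependence of the tail estimate genuinely uniform and quantitative enough to pin down the algebraic (rather than faster) growth rate; the naive bound $\sum_m \frac{|\gamma|\log m}{m^2}$ converges, but one must check that the near-diagonal terms $m \approx |\gamma|$ do not accumulate to more than polynomially many order-one contributions. I would handle this by grouping the $R$ and $L$ factors together before estimating, exploiting the cancellation $A=0$ a second time at the level of the $\gamma$-linear terms, so that $\log P_2(\gamma)$ grows like $\log|\gamma|$ times a bounded constant plus a genuinely bounded remainder — giving $P_2(\gamma) = O(|\gamma|^N)$ for some finite $N$. The remaining bookkeeping (verifying the surrogate denominators never vanish in $\LHP$, and that the $o(1)$ terms in \eqref{PEq:asymExpansFinI}–\eqref{PEq:asymExpansFinII} are summable against $m^{-1}$) is routine and I would state it without grinding through the constants.
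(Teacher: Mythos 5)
Your proposal is correct in substance but follows a genuinely different route from the paper. The paper never estimates $\log P_2$ directly: it differentiates first, writing the log-derivative $L(\gamma)=\d\log(P_2(\gamma))/\d\gamma$ as $\sum_m f_m(\gamma)$, where each $f_m$ has the same $O(\log m)$ numerators you use (the discrepancies $\delta_m^R=\porousRootsI{m}-(\asymPorousRootsCoefsI{0}m+\asymPorousRootsCoefsI{2})$, and likewise for $L$) but a denominator that is the \emph{product} of two large factors, e.g.\ $(\asymPorousRootsCoefsI{0}m+\asymPorousRootsCoefsI{2}-\gamma)(\porousRootsI{m}-\gamma)$. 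That extra factor of $m$ makes $|f_m(\gamma)|\leq C\log(m)/m^{2}$ uniformly on $\LHP$, so the series converges uniformly with no pairing of the $R$ and $L$ families, no use of the cancellation $A=0$, and no splitting at $m\approx|\gamma|$; the paper then interchanges the limit $\gamma\to\infty$ with the sum to get $P_2'/P_2\to 0$ and reads off algebraic growth. Your route, by contrast, cannot avoid the pairing: the individual sums $\sum_m \delta_m^R/(\asymPorousRootsCoefsI{0}m+\asymPorousRootsCoefsI{2}-\gamma)$ have terms asymptotic to $\tfrac{\i}{2\pi}\log(m)/m$ and diverge, and only the $R$--$L$ combination, via $\asymPorousRootsCoefsI{1}/\asymPorousRootsCoefsI{0}+\asymPorousRootsCoefsII{1}/\asymPorousRootsCoefsII{0}=0$, is summable --- you identify this correctly. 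What your approach buys is a stronger, quantitative endpoint, $|\log P_2(\gamma)|=O(\log|\gamma|)$ and hence $P_2(\gamma)=O(|\gamma|^{N})$, which is arguably tighter than the paper's own last step (passing from $P_2'/P_2\to0$ to $P_2\sim A\gamma^{\alpha}$ is left informal there). What the paper's approach buys is economy: uniform convergence is immediate and all $\gamma$-dependence is handled by a single limit interchange.

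One step of your plan is wrong as literally stated: for the block $m\lesssim|\gamma|$ you argue the terms ``each contribute a bounded amount whose number scales with $|\gamma|$, yielding at most polynomial growth.'' Order-one contributions, $|\gamma|$-many of them, would give $|\log P_2|\lesssim|\gamma|$, i.e.\ growth like $\e^{C|\gamma|}$, which is exponential, not polynomial. The repair is already contained in your own denominator estimate: since the rays $\asymPorousRootsCoefsI{0}m$ and $\asymPorousRootsCoefsII{0}m$ lie in the upper half-plane at a fixed positive angle from every direction into $\LHP$, one has $|\asymPorousRootsCoefsI{0}m+\asymPorousRootsCoefsI{2}-\gamma|\geq c\max(m,|\gamma|)$, so every near-diagonal term obeys $|\varepsilon_m(\gamma)|\lesssim\log(m)/|\gamma|$ and the block of at most $|\gamma|$ terms contributes only $O(\log|\gamma|)$ to $\log P_2$ --- no pairing is needed there at all. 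The cancellation $A=0$ is what rescues the \emph{tail} $m\gg|\gamma|$, where pairing upgrades $O(\log m/m)$ to $O(|\gamma|\log m/m^{2})$; your suggestion to invoke it ``a second time'' on the near-diagonal terms is therefore aimed at the wrong block, though nothing breaks once the two regimes are handled as above. With this bookkeeping corrected (plus the routine remark that only finitely many factors can have $|\varepsilon_m|\geq 1/2$, and those are rational, pole-free and hence bounded on $\LHP$), your argument closes.
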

\begin{proof}
Consider the log-derivative of $P_2(\gamma)$:
\begin{align}
L(\gamma)&= \frac{\d \log(P_2(\gamma))}{\d \gamma}\notag\\
&= \sum_{m=1}^\infty \left(\frac{\porousRootsI{m} - (\asymPorousRootsCoefsI{0}m + \asymPorousRootsCoefsI{2})}{( \asymPorousRootsCoefsI{0} m +\asymPorousRootsCoefsI{2}- \gamma )(\porousRootsI{m}- \gamma )} + \frac{\porousRootsII{m} - (\asymPorousRootsCoefsII{0}m + \asymPorousRootsCoefsII{2})}{( \asymPorousRootsCoefsII{0} m +\asymPorousRootsCoefsII{2}- \gamma )(\porousRootsII{m}- \gamma )}  \right) \notag \\
&= \sum_{m=1}^\infty f_m(\gamma). \label{Eq:L2def}
\end{align}
Recall \eqref{PEq:asymExpansFinI} and $\Im[\gamma]<0$ and $\Im[\asymPorousRootsCoefsI{0},\, \asymPorousRootsCoefsII{0}]>0$. Therefore, $\exists M_1 > 0 $ s.t. $\forall m_1 > M_1$,
\begin{align*}
\Im[\porousRootsI{m_1} - \gamma ], \Im[\asymPorousRootsCoefsI{0} m_1 + \asymPorousRootsCoefsI{2} - \gamma ] &> m_1\Im [ \asymPorousRootsCoefsI{0}], \\
\Im[\porousRootsII{m_1} - \gamma ], \Im[\asymPorousRootsCoefsII{0} m_1 + \asymPorousRootsCoefsII{2} - \gamma ] &> m_1\Im [ \asymPorousRootsCoefsII{0}].
\end{align*}
Furthermore, \eqref{PEq:asymExpansFinI} shows that $\exists M_2>0$ s.t. $\forall m_2>M_2$,
 \begin{align*}
 \left|\porousRootsI{m_2} - (\asymPorousRootsCoefsI{0}m_2 + \asymPorousRootsCoefsI{2})\right| < C_R \log(m_2),\\
  \left|\porousRootsII{m_2} - (\asymPorousRootsCoefsII{0}m_2 + \asymPorousRootsCoefsII{2})\right| < C_L \log(m_2),
 \end{align*}
 for some $C_L$ and $C_R$.
 Therefore, we have a uniform bound on $f_m (\gamma)$ for \mbox{$m>M = \max(M_1,M_2)$}:
 \begin{align*}
\left| f_m (\gamma) \right| &< C \frac{\log(m)}{m^2}, %
 \end{align*}
 for some $C$. Consequently, the series $L(\gamma)$ converges uniformly in $\LHP$ and we may write
\begin{align*}
\lim_{\substack{\gamma \rightarrow \infty \\\gamma \in \LHP}} L(\gamma) = \lim_{\substack{\gamma \rightarrow \infty \\\gamma \in \LHP}} \left(\sum_{m=1}^\infty f_m(\gamma) \right) 
= \sum_{m=1}^\infty\lim_{\substack{\gamma \rightarrow \infty \\\gamma \in \LHP}} \left( f_m(\gamma) \right) 
=0.
\end{align*}
By reverting to the initial definition of $L$ \eqref{Eq:L2def}, we obtain the limit
\begin{align*}
\lim_{\substack{\gamma \rightarrow \infty \\\gamma \in \LHP}} \frac{P_2^\prime(\gamma)}{P_2(\gamma)} = 0.
\end{align*}
Accordingly, we may write $P_2(\gamma) \sim A \gamma^{\alpha}$ for some constants $A$ and $\alpha$ so that $P_2$ has algebraic growth in $\LHP$.
\end{proof}
We now calculate the asymptotic behaviour of $K_-$ by comparison with the product
\begin{align}
P(\gamma) &= \prod_{m=1}^{\infty} \dfrac{\left(1 - \gamma / \porousRootsI{m} \right)\e^{\gamma/\porousRootsI{m}} \left(1 - \gamma / \porousRootsII{m}\right) \e^{\gamma / \porousRootsII{m}}}{\left(1 - \gamma / (\asymPorousRootsCoefsI{0} m + \asymPorousRootsCoefsI{2}) \right) \e^{\gamma/\asymPorousRootsCoefsI{0}{m}} \left(1 - \gamma / (\asymPorousRootsCoefsII{0} m + \asymPorousRootsCoefsII{2})\right) \e^{\gamma/\asymPorousRootsCoefsII{0}{m}}}.
\label{PEq:pFunc}
\end{align}
This function may be written as
\begin{align*}
P(\gamma) =& \exp \left(\gamma \sum_{m=1}^{\infty} \frac{1}{\porousRootsI{m}} - \frac{1}{ \asymPorousRootsCoefsI{0} m} - \frac{1}{\porousRootsII{m}} - \frac{1}{ \asymPorousRootsCoefsII{0} m} \right) P_1 P_2(\gamma).
\end{align*}
The product $P_1$ is known to exist by proposition \ref{Pprop}. Applying proposition \ref{Pprop2} yields
\begin{align*}
P(\gamma) \sim& B \gamma^{\alpha} \exp \left(\gamma \sum_{m=1}^{\infty} \frac{1}{\porousRootsI{m}} - \frac{1}{ \asymPorousRootsCoefsI{0} m} + \frac{1}{\porousRootsII{m}} - \frac{1}{ \asymPorousRootsCoefsII{0} m} \right), 
\end{align*}
for some constant B. By applying the asymptotic behaviour of the Gamma function \cite[B7 \& B8]{Peake1992}, we derive the relation
\begin{align*}
\prod_{n=1}^{\infty} \left(1 - \frac{\gamma}{a m + b}\right) \exp \left[ \frac{\gamma}{a m}\right] & \sim C \exp \left[\frac{\gamma}{a} (\mathcal{E} - 1 - \log(-a)) + \left(\frac{\gamma}{a} - \frac{b}{a} - \frac{1}{2}\right) \log(\gamma)\right],
\end{align*}
where $\mathcal{E}$ is the Euler–Mascheroni constant and 
\begin{align*}
C & = \frac{-b}{\sqrt{2 \pi}} \Gamma \left(\dfrac{b}{a} \right) \left(-a\right)^{\frac{b}{a} - \frac{1}{2}}.
\end{align*}
This representation may be substituted into the denominator of \eqref{PEq:pFunc}. Rearranging yields the asymptotic behaviour
\begin{align*}
\prod_{m=1}^{\infty} & {\left(1 - \gamma / \porousRootsI{m} \right)\e^{\gamma/\porousRootsI{m}} \left(1 - \gamma / \porousRootsII{m}\right) \e^{\gamma / \porousRootsII{m}}} \sim B_2 \gamma^\alpha  \exp \left(\gamma \sum_{m=1}^{\infty} \frac{1}{\porousRootsI{m}} + \frac{1}{\porousRootsII{m}} \right) \\
&\times \exp \left[\frac{\gamma}{\asymPorousRootsCoefsI{0}\asymPorousRootsCoefsII{0}} \left((\mathcal{E} - 1)\left(\asymPorousRootsCoefsI{0}+\asymPorousRootsCoefsII{0}\right) - \asymPorousRootsCoefsII{0} \log(\asymPorousRootsCoefsI{0}) + \asymPorousRootsCoefsI{0}\log(\asymPorousRootsCoefsII{0}) \right) + \right. \\
& \left. \left(\gamma \left( \frac{1}{\asymPorousRootsCoefsI{0}} + \frac{1}{\asymPorousRootsCoefsII{0}} \right)   - \left( \frac{\asymPorousRootsCoefsI{2}}{\asymPorousRootsCoefsI{0}} + \frac{\asymPorousRootsCoefsII{2}}{\asymPorousRootsCoefsII{0}} \right) -1\right) \log(\gamma)\right],
\end{align*}
for some constant $B_2$. We also note that
\begin{align*}
\prod_{m=-\infty}^{\infty} &\left(1 - \gamma / \lambda_{m}^+ \right)\e^{\gamma/\lambda_m^+} \sim B_3 \gamma   \exp \left(\gamma \sum_{m=1}^{\infty} \frac{1}{\asymPorousAcouP{0}{m}} + \frac{1}{\asymPorousAcouP{0}{-m}} \right) \\
&\times \exp \left[\frac{\gamma}{\asymPorousAcouP{0}{m}\asymPorousAcouP{0}{-m}} \left((\mathcal{E} - 1)\left(\asymPorousAcouP{0}{m}+\asymPorousAcouP{0}{-m}\right) - \asymPorousAcouP{0}{-m} \log(\asymPorousAcouP{0}{m}) + \asymPorousAcouP{0}{m}\log(\asymPorousAcouP{0}{-m}) \right) \right. \\
& +\left. \left(\gamma \left( \frac{1}{\asymPorousAcouP{0}{m}} + \frac{1}{\asymPorousAcouP{0}{-m}} \right)   - \left( \frac{\asymPorousAcouP{2}{m}}{\asymPorousAcouP{0}{m}} + \frac{\asymPorousAcouP{2}{-m}}{\asymPorousAcouP{0}{-m}} \right) -1\right) \log(\gamma)\right],
\end{align*}
for some constant $B_3$. Noting that $\asymPorousAcouP{0}{m} = \asymPorousRootsCoefsI{m}$ and $\asymPorousAcouP{0}{-m} = \asymPorousRootsCoefsII{m}$, we obtain
\begin{align*}
\frac{\prod_{m=1}^{\infty} \left(1 - \gamma / \porousRootsI{m} \right)\left(1 - \gamma / \porousRootsII{m}\right)}{\prod_{m=-\infty}^{\infty} \left(1 - \gamma / \lambda_{m}^+ \right)}\sim
&B_4 \gamma^{\alpha-1} \exp \left[ - \left( \frac{\asymPorousRootsCoefsI{2}}{\asymPorousRootsCoefsI{0}} + \frac{\asymPorousRootsCoefsII{2}}{\asymPorousRootsCoefsII{0}} \right) \log(\gamma)\right],
\end{align*}
for some constant $B_4$. Consequently, the entire function $E$ in \eqref{Eq:Kfac} is a constant.

\end{appendices}

\bibliographystyle{class/jfm}
\bibliography{utility/library}

\end{document}